\newcommand{\blind}{0}
	\newtheorem{Corollary}{Corollary}
	\newtheorem{Lemma}{Lemma}
	\newtheorem{Remark}{Remark}
	\newtheorem{Theorem}{Theorem}
\newcommand{\aalpha}{\boldsymbol{\alpha}}
\newcommand{\dd}{{\rm d}}
\newcommand{\DDD}{\mathbf{D}}
\newcommand{\DDelta}{\boldsymbol{\Delta}}
\newcommand{\II}{\mathbf{I}}
\newcommand{\iid}{{\rm iid}}
\newcommand{\SSigma}{\boldsymbol{\Sigma}}
\DeclareMathOperator*{\argmin}{arg\,min\,}
\DeclareMathOperator*{\cov}{cov}
\DeclareMathOperator*{\E}{E}
\begin{document}

\bibliographystyle{agsm}

\def\spacingset#1{\renewcommand{\baselinestretch}%
{#1}\small\normalsize} \spacingset{1}


\if0\blind
{
  \title{\bf Partial least squares for sparsely observed curves with measurement errors}
  \author{Zhiyang Zhou
	\\
    and \\
    Richard A. Lockhart
	\\
    Department of Statistics \& Actuarial Science, Simon Fraser University
	}
  \maketitle
} \fi

\if1\blind
{
  \bigskip
  \bigskip
  \bigskip
  \begin{center}
    {\LARGE\bf Partial least squares for sparsely observed curves with measurement errors}
\end{center}
  \medskip
} \fi

\bigskip
\begin{abstract}
	Functional partial least squares (FPLS) is commonly used
	for fitting scalar-on-function regression models.
	For the sake of accuracy,
	FPLS demands that 
	each realization of the functional predictor
	is recorded as densely as possible over the entire time span; 
	however, this condition is sometimes violated in,
	e.g., longitudinal studies and missing data research.
	Targeting this point,
	we adapt FPLS to scenarios
	in which the number of measurements per subject 
	is small and bounded from above.
	The resulting proposal is abbreviated as PLEASS.
	Under certain regularity conditions,
	we establish the consistency of estimators 
	and give confidence intervals for scalar responses.
	Simulation studies and real-data applications
	illustrate the competitive accuracy of PLEASS.
\end{abstract}

\noindent%
{\it Keywords:} 
	Functional data analysis;
	Functional linear model;
	Krylov subspace;
	PACE;
	Principal component analysis
\vfill

\newpage
\spacingset{1.5} 

\section{Introduction}\label{sec:introduction}

Scalar-on-function (linear) regression (SoFR) is a basic model
in  functional data analysis (FDA).
People have applied it to domains including
chemometrics \citep[e.g.,][]{Goutis1998},
food manufacturing \citep[e.g.,][]{AguileraEscabiasPredaSaporta2010},
geoscience \citep[e.g.,][]{Baillo2009},
medical imaging \citep[e.g.,][]{GoldsmithBobCrainiceanuCaffoReich2011},
and many others.
This model bridges a scalar response $Y$ to a functional predictor $X$ ($=X(\cdot)$),
with the argument of $X$ often referred to as ``time''
and confined to a bounded and closed interval 
$\mathbb{T}\subset\mathbb{R}$.
(Without loss of generality,
we take $\mathbb{T}=[0,1]$ throughout this paper
and omit it in integrals.)
To be specific,
\begin{equation}\label{eq:scalar-on-function}
    Y=\mu_Y+\int\beta(X-\mu_X)+\sigma_{\varepsilon}\varepsilon,
\end{equation}
where: $\mu_X$ (resp. $\mu_Y$) is the expectation of $X$ (resp. $Y$);
the coefficient to be estimated,  $\beta$, belongs to $L^2(\mathbb{T})$ 
(viz. $L^2$-space on $\mathbb{T}$ with respect to (w.r.t.) the Lebesgue measure);
zero-mean noise $\varepsilon$ is of variance one;
and the notation $\int f$ is short for $\int f(t)\dd t$.
The auto-covariance function of $X$ is denoted by
\begin{equation}\label{eq:auto.cov}
    v_A = v_A(s,t)=\cov\{X(s),X(t)\}
\end{equation}
and is assumed to be continuous on $\mathbb{T}^2$.  
Thus $v_A$ has countably many eigenvalues,
say $\lambda_1\geq \lambda_2\geq\cdots$,
such that $\sum_{j=1}^{\infty}\lambda_j = \int v_A(t,t) \, \dd t <\infty$.
Corresponding eigenfunctions are respectively $\phi_1,\phi_2,\ldots$.
In order to ensure the identifiability of $\beta$,
we assume the coefficient function belongs to $\overline{{\rm span}(\phi_1,\phi_2,\ldots)}$,
where ${\rm span}(\cdot)$ denotes the linear space spanned by functions in the parentheses
with the overline representing the closure.
Corresponding to $v_A$,
the auto-covariance operator $\mathcal{V}_A: L^2(\mathbb{T})\to L^2(\mathbb{T})$ 
is defined by,
for each $f\in L^2(\mathbb{T})$,
\begin{equation}\label{eq:V.A}
    \mathcal{V}_A(f)(\cdot) =\int f(t)v_A(t,\cdot)\dd t.
\end{equation}
In this case,
the (Hilbert-Schmidt) operator norm of $\mathcal{V}_A$
equals $\|v_A\|_2$,
viz. the $L^2$-norm of $v_A$.
We abuse $\|\cdot\|_2$ too for the matrix norm induced by the Euclidean norm,
i.e., for arbitrary $\DDD\in\mathbb{R}^{p\times q}$ and $\aalpha\in\mathbb{R}^{q\times 1}$,
$\|\DDD\|_2=\sup_{\aalpha:\|\aalpha\|_2=1}\|\DDD\aalpha\|_2$.
It is well known that 
$\|\DDD\|_2$ is actually the largest eigenvalue of $\DDD$
and reduces to the Euclidean norm for vectors.

The typical first step in estimating $\beta$
is to project it onto a space spanned by basis functions
either fixed (e.g., wavelets or splines) or data-driven
(e.g., functional principal component (FPC) or functional partial least squares (FPLS)).
There are already numerous studies comparing FPC and FPLS
(e.g., \citealp{ReissOgden2007, AguileraEscabiasPredaSaporta2010}).
They concluded that
FPLS is superior to FPC in the sense that
the former  
provides a more accurate parameter estimation
and yields more parsimonious models
\citep[][pp.~53]{Albaqshi2017}.

\subsection{Introduction to functional partial least squares}
 
Partial least squares (PLS) is a name shared by diverse algorithms in the multivariate context,
including nonlinear iterative PLS (NIPALS, \citealp{Wold1975}) and the
statistically inspired modification of PLS (SIMPLS, \citealp{deJong1993})
as two of the most well-known.
Analogously,
the implementation of FPLS is far from unique:
it constructs basis functions by recursively maximizing
(the functional version of) Tucker's criterion \citep[see, e.g., Proposition~1 of][for its expression]{PredaSaporta2005}
subject to various orthonormality constraints.
For FoFR,
\cite{DelaigleHall2012b} observed the equivalence between functional extensions of NIPALS and SIMPLS:
the first $p$ basis functions arising via these two distinct routes span spaces identical to
the functional version of $p$-dimensional Krylov subspace (KS),
namely,
\begin{equation}\label{eq:KSp}
    {\rm KS}_p
    ={\rm span}\{\mathcal{V}_A(\beta),\ldots,\mathcal{V}_A^p(\beta)\}
    ={\rm span}\{v_C,\ldots,\mathcal{V}_A^{p-1}(v_C)\},
\end{equation}
where $\mathcal{V}_A^j$ is the $j$th power of $\mathcal{V}_A$, and
\begin{equation}\label{eq:cross.cov}
    v_C=v_C(\cdot)=\cov\{Y,X(\cdot)\}.
\end{equation}
To be explicit,
starting with $\mathcal{V}_A^1=\mathcal{V}_A$,
we define recursively, 
$\mathcal{V}_A^j: L^2(\mathbb{T})\to L^2(\mathbb{T})$ by
\begin{equation}\label{eq:V.A.j}
    \mathcal{V}_A^j(f)(\cdot)
    =\int \mathcal{V}_A^{j-1}(f)(t)v_A(t,\cdot)\dd t,
    \quad\forall f\in L^2(\mathbb{T}).
\end{equation}

\citet[Theorem~3.2]{DelaigleHall2012b} 
showed that
$\beta$ must be located in ${\rm KS}_{\infty}=\overline{{\rm span}\{v_C,\mathcal{V}_A(v_C),\ldots\}}$. Hence $\beta$ is the limit (in the $L^2$ sense) of 
$$
	\beta_p
	=\argmin_{\theta\in {\rm KS}_p}
  		\E\left\{Y-\mu_Y-\int\theta(X-\mu_X)\right\}^2.
$$
Once we obtain $w_1,\ldots,w_p$ by 
(modified-Gram-Schmidt) 
orthonormalizing $\mathcal{V}_A(\beta),\ldots,\mathcal{V}_A^p(\beta)$ w.r.t. $v_A$
(following Algorithm \ref{alg:mgs} below or \citealp[pp.~102]{Lange2010}),
$\beta_p$ can then be rewritten as
\begin{equation}\label{eq:beta.p}
	\beta_p
 	= [w_1,\ldots,w_p]\bm{c}_p,
\end{equation}
where
\begin{equation}
	\bm{c}_p 
	= \left[
	    \int w_1\mathcal{V}_A(\beta), \ldots,
	    \int w_p\mathcal{V}_A(\beta)
	\right]^\top
	= \left[
	    \int w_1 v_C, \ldots,
	    \int w_p v_C
	\right]^\top.
	\label{eq:c.p}
\end{equation}

Now consider a new pair $(X^*,Y^*) \sim (X,Y)$. Then 
as $p \to \infty$,
\begin{equation}\label{eq:eta.p}
	\eta_p(X^*)=\mu_Y+\int \beta_p(X^*-\mu_X)
	=\mu_Y+[\xi_1^*,\ldots,\xi_p^*]\bm{c}_p
\end{equation}
approaches the conditional expectation of $Y^*$ given $X^*$,
viz.
\begin{equation}\label{eq:eta.x.star}
	\eta(X^*)=\E(Y^*\mid X^*)=\mu_Y+\int \beta(X^*-\mu_X)
\end{equation}
in which
\begin{equation}\label{eq:xi.j.star}
	\xi_j^*=\int w_j(X^*-\mu_X).
\end{equation}
We  refer  to $\xi_j^*$ as the $j$th FPLS score (associated with $X^*$).
(Henceforth superscript * indicates items associated with the new realization $X^*\sim X$.)
Plugging empirical counterparts into \eqref{eq:beta.p} and \eqref{eq:eta.x.star},
the proposal of \citet[][Section~4]{DelaigleHall2012b}
is equivalent (in terms of estimating $\beta$ as well as predicting $Y^*$)
to functional counterparts of NIPALS and SIMPLS.

\begin{algorithm}[!t]
	\caption{Orthonormalize $\psi_1,\ldots,\psi_p\in L^2(\mathbb{T})$
	    into $\vartheta_1,\ldots,\vartheta_p\in L^2(\mathbb{T})$
	    w.r.t. $\varrho\in L^2(\mathbb{T}^2)$
	}
	\label{alg:mgs}
	\begin{algorithmic}[]
		\For {$j$ in $1,\ldots,p$}
		    \State $\vartheta_j^{[1]}\gets\psi_j$.
		    \If {$j\geq 2$}
    		    \For {$i$ in $1,\ldots,j-1$}
    		        \State $\vartheta_j^{[i+1]}\gets
    		            \vartheta_j^{[i]}-
    		                \vartheta_i\int\int\vartheta_j^{[i]}(s)\varrho(s,t)\vartheta_i(t)$.
    		    \EndFor
    		\EndIf
    		\If {$\int\int\vartheta_j^{[j]}(s)\varrho(s,t)\vartheta_j^{[j]}(t)>
    		    \text{preset small positive threshold}$}
    		    \State $\vartheta_j\gets\vartheta_j^{[j]}/
    		        \{\int\int\vartheta_j^{[j]}(s)\varrho(s,t)\vartheta_j^{[j]}(t)\}^{1/2}$.
    		\Else \State $\vartheta_j\gets0$.
    		\EndIf
		\EndFor
	\end{algorithmic}
\end{algorithm}

\subsection{Sparsity and measurement errors}
Like most FDA techniques,
FPLS algorithms are designed for dense settings,
i.e.,
realizations of $X$ are supposed to be densely observed,
since their implementations inevitably involve approximations to integrals.
This condition is not expected to be fulfilled under all circumstances.
For example, in typical clinical trials,
participants cannot be monitored 24/7;
instead,
they are required to visit the clinic repeatedly on specific dates.
Due to cost and convenience,
the scheduled visiting frequency is doomed to be sparse for essentially every subject.
What is  worse is that
subjects tend to show up on their own basis
with frequencies lower and more irregular than scheduled.
Similar difficulties can arise in missing data problems
where a number of recordings are lost for whatever reason.

The training sample consists of 
$n$ two-tuples
$(X_1, Y_1),\ldots,(X_n, Y_n)$ independently and identically distributed (iid) as $(X, Y)$.
Specifying the sparsity and measurement errors simultaneously,
we suppose the $i$th trajectory is measured at only $L_i$ (random) time points
(say $T_{i1},\ldots,T_{iL_i}$)
with corresponding contaminated observations
\begin{equation}\label{eq:observation}
    \widetilde{X}_i(T_{i\ell})=X_i(T_{i\ell})+\sigma_e e_{i\ell},
    \quad\ell = 1,\ldots, L_i,
\end{equation}
where $\sigma_e>0$ and the $e_{i\ell}$ are white noise with mean zero and variance one.
We assume that all the time points and error terms 
are independent across subjects and from each other.
More rigorous description is detailed in Appendix \ref{appendix:technical}.
This joint setup of sparsity and error-in-variable is also considered in existing literature
including but not limited to \cite{YaoMullerWang2005a, YaoMullerWang2005b},
\cite{XiaoLiCheckleyCrainiceanu2018},
and \cite{RubinPanaretos2020}.

\begin{Remark}
    For each $i$, 
    it is not necessary to order $T_{i1},\ldots,T_{iL_i}$ in a specific way.
    Additionally we suggest not viewing $\widetilde{X}_i$ as the sum of $X_i$ and a white noise process,
    otherwise more mathematical effort is needed in the definition to ensure rigor.
    We utilize only (univariate) random variables 
    $\widetilde{X}_i(T_{i1}),\ldots,\widetilde{X}_i(T_{iL_i})$
    and never attempt to approximate integrals involving an entire function $\widetilde{X}_i$.
\end{Remark}

As pioneers who extended classical FPC to this challenging setting,
\citet{JamesHastieSugar2000} postulated a reduced rank mixed effects model 
fitted by the expectation-maximization algorithm and penalized least squares.
Abbreviated as PACE,
the proposal of \citet{YaoMullerWang2005a, YaoMullerWang2005b}
introduces a local linear smoother (LLS) 
estimator for $v_A$
followed by FPC scores $\rho_j$ ($=\int\phi_j(X_i-\mu_X)$)
which are approximated by conditional expectations.
To the best of our knowledge,
there are still few extensions of FPLS 
applicable to such a scenario.
In this work,
we attempt to fill in this blank 
by developing a new technique named  Partial LEAst Squares for Sparsity (PLEASS),
handling sparse observations and measurement errors simultaneously.

Here is a sketch of the procedure for PLEASS.
First,
thanks to the iid assumption on subjects,
we are able to pool together all the observations 
in order to recover the variance and covariance functions
from which basis functions are extracted.
Then, $\beta$ is estimated by 
plugging empirical counterparts into $\beta_p$ at \eqref{eq:beta.p}.
It is worth noting that,
since $X^*$ is not observed densely, 
PLEASS does not give a consistent prediction for $\eta(X^*)$ at \eqref{eq:eta.x.star};
instead it constructs a confidence interval (CI) for $\eta(X^*)$
through conditional expectation. 

The remainder of this paper is organized as follows.
Section \ref{sec:method} details the implementation procedure for PLEASS.
In Section \ref{sec:theory},
we present asymptotic results on the consistency of estimators and
on the distribution of $\eta(X^*)$.
Section \ref{sec:numerical} applies PACE and PLEASS to both simulated and authentic datasets
and compares their resulting performances. 
Concluding remarks are given in Section \ref{sec:conclusion}.
Finally  we include more technical arguments in appendices.

\section{Methodology}\label{sec:method}

\subsection{Estimation and prediction}\label{sec:estimation}

The first phase of PLEASS is to find estimators for
$\mu_X$ at \eqref{eq:scalar-on-function},
$v_A$ at \eqref{eq:auto.cov},
$v_C$ at \eqref{eq:cross.cov},
and $\sigma_e^2$ at \eqref{eq:observation},
respectively,
say, 
$\hat{\mu}_X$, $\hat{v}_A$, $\hat{v}_C$ and $\hat{\sigma}_e^2$.
Existing methods for 
reconstructing the variance and covariance structure from sparse observations
roughly fall into three categories:
i) kernel smoothing 
(e.g., \citealp[LLS in][]{YaoMullerWang2005a, YaoMullerWang2005b} and \citealp{LiHsing2010}
and \citealp[the modified kernel smoothing in][]{PaulPeng2011}),
ii) spline smoothing
(e.g., fast covariance estimation (FACE) by \citealp{XiaoLiCheckleyCrainiceanu2018}),
and iii) maximum likelihood
(ML, e.g., \citealp[restricted ML in][]{JamesHastieSugar2000} and \citealp{PengPaul2009}
and \citealp[quasi-ML in][]{ZhouLinLiang2018}).
Typically,
the third category requires initial values obtained through the first two
and is hence more time-consuming.
In the numerical study (Section \ref{sec:numerical} below),
we adopt both
LLS 
(whose details are relegated to Appendix \ref{appendix:LLS},
following \citealp{YaoMullerWang2005a, YaoMullerWang2005b}) and FACE.
LLS,
which is also exploited by PACE,
has nice asymptotic properties \citep{HallMullerWang2006},
whereas FACE runs faster and has competitive accuracy.

\begin{Remark}
    In theory,
    the framework of PLEASS is flexible as to how to estimate
    $\mu_X$,
    $v_A$,
    $v_C$,
    and $\sigma_e^2$,
    as long as
    $\|\hat{\mu}_X-\mu_X\|_{\infty}$,
    $\|\hat{v}_A-v_A\|_{\infty}$,
    $\|\hat{v}_C-v_C\|_{\infty}$,
    and $|\hat{\sigma}_e^2-\sigma_e^2|$
    all converge to zero as $n$ diverges
    (with $\|\cdot\|_{\infty}$ denoting the $L^{\infty}$-norm).
    It is even more flexible in practice
    and permits any way of recovery preferred by users.
    Theoretical results in upcoming Section \ref{sec:theory} 
    are merely demos corresponding to LLS; our results can be adapted to other approaches.
\end{Remark}

It is understood that in numerical implementation integrals have to be approximated by,
e.g., quadrature rules.
\citet{Tasaki2009} gave upper bounds on the (absolute) approximation errors for
Riemann and trapezoidal sums; 
these bounds
tend to zero as the discretized grid becomes dense.
We hereafter use (abuse) the integral notation even for corresponding numerical approximations.

Recursively define the empirical counterpart of $\mathcal{V}_A^j$ at \eqref{eq:V.A.j} by
\begin{equation}\label{eq:V.hat.A}
    \widehat{\mathcal{V}}_A^j(f)(\cdot)
    =\int \widehat{\mathcal{V}}_A^{j-1}(f)(t)\hat{v}_A(t,\cdot)\dd t,
    \quad\forall f\in L^2(\mathbb{T}).
\end{equation}
The recursion is initialized by taking $\mathcal{V}^0$ to be the identity operator.
Then orthogonal basis functions $\hat{w}_1,\ldots,\hat{w}_p$
are constructed from 
$\hat{v}_C,\widehat{\mathcal{V}}_A(\hat{v}_C),\ldots,\widehat{\mathcal{V}}_A^{p-1}(\hat{v}_C)$
(following Algorithm \ref{alg:mgs} or \citealp[pp.~102]{Lange2010}).
Evidently a plug-in estimator for $\beta$ is given by
\begin{equation}\label{eq:beta.p.hat}
	\hat{\beta}_p = [\hat{w}_1,\ldots,\hat{w}_p]\hat{\bm{c}}_p.
\end{equation}
These estimators converge to the true $\beta$ as $n$ and $p$, respectively, diverge at specific rates
(see Theorem \ref{thm:converge.beta}),
with
\begin{equation}\label{eq:c.p.hat}
	\hat{\bm{c}}_p 
	= \left[
	    \int \hat w_1 \hat v_C, \ldots,
	    \int \hat w_p \hat v_C
	\right]^\top.
\end{equation}
estimating \eqref{eq:c.p}.

Predicting $\eta(X^*)$ at \eqref{eq:eta.x.star} is a problem fairly different from estimation.
Since $\widetilde{X}^*$ (viz. the contaminated $X^*$) is only observed 
at $L^*$ ($\sim L$) time points,
it is not practical to numerically integrate 
the product of $\hat{\beta}_p$ and $\widetilde{X}^*$.
Instead we target the prediction of a surrogate for $\eta(X^*)$. 
That surrogate, denoted by $\tilde{\eta}_{\infty}(X^*)$, is  defined at \eqref{eq:eta.inf.tilde} below. 
Write
\begin{align}
	\widetilde{\bm{X}}^* &=\left[\widetilde{X}^*(T_1^*),\ldots,\widetilde{X}^*(T_{L^*}^*)\right]^\top, 
	\notag\\
	\bm{\mu}_X^* &= \E(\widetilde{\bm{X}}^*|L^*, T_1^*,\ldots,T_{L^*}^*)
	    =\left[\mu_X(T_1^*),\ldots,\mu_X(T_{L^*}^*)\right]^\top,
	\notag\\
	\SSigma_{\widetilde{X}^*} 
	&=\left[v_A(T_{\ell_1}^*,T_{\ell_2}^*)\right]_{1\leq\ell_1,\ell_2\leq L^*}+\sigma_e^2\II_{L^*},
	\notag
\end{align}
and, for integer $j\in[1,p]$,
$$
	\bm{h}_j^* =\left[\mathcal{V}_A(w_j)(T_1^*),\ldots,\mathcal{V}_A(w_j)(T_{L^*}^*)\right]^\top.
$$
Conditional on $L^*$ and $T_1^*,\ldots,T_{L^*}^*$,
in view of the identity
$$
    \cov(\widetilde{\bm{X}}^{*\top}, \xi_1^*, \ldots, \xi_p^*|L^*, T_1^*,\ldots,T_{L^*}^*) =
        \left[\begin{array}{cccc} 
            \SSigma_{\widetilde{X}^*}	&\bm{h}_1^*	&\cdots	&\bm{h}_p^*
            \\ 
            \bm{h}_1^{*\top}
            \\
            \vdots	&	&\II_p
            \\
            \bm{h}_p^{*\top}
        \end{array}\right],
$$
the best linear unbiased prediction for $\xi_j^*$ is
\begin{equation}\label{eq:xi.j.tilde.star}
    \tilde{\xi}_j^*
    =\E(\xi_j^*\mid\widetilde{\bm{X}}^{*\top}, L^*, T_1^*,\ldots,T_{L^*}^*)
    =\bm{h}_j^{*\top}\SSigma_{\widetilde{X}^*}^{-1}(\widetilde{\bm{X}}^*-\bm{\mu}_X^*).
\end{equation}
This predictor minimizes 
$
    \E[\{\xi_j^*-f(\widetilde{\bm{X}}^*)\}^2 | L^*, T_1^*,\ldots,T_{L^*}^*] 
$
over all linear functions $f$ subject to 
$\E\{\xi_j^*-f(\widetilde{\bm{X}}^*)\mid L^*, T_1^*,\ldots,T_{L^*}^*\}=0$.
It is even the best prediction over all  measurable $f$, linear or not, 
as long as $\xi_1^*,\ldots,\xi_p^*$ and $\widetilde{\bm{X}}^*$ are jointly Gaussian
\citep[][Theorem 1]{Harville1976}.
Geometrically speaking, 
$\tilde{\xi}_j^*$ at \eqref{eq:xi.j.tilde.star}
is the (orthogonal) projection of $\xi_j^*$ at \eqref{eq:xi.j.star} onto
${\rm span}\{\widetilde{X}^*(T_1^*),\ldots,\widetilde{X}^*(T_{L^*}^*)\}$
(given $L^*$ and $T_1^*,\ldots,T_{L^*}^*$).
Then the projection of $\eta_p(X^*)-\mu_Y$ onto the same space is
$\tilde{\eta}_p(X^*)-\mu_Y$;
recall $\eta_p(X^*)$ is defined at \eqref{eq:eta.p}.  If we define the 
$L^*\times p$ matrix $\mathbf{H}_p^*=[\bm{h}_1^*,\ldots,\bm{h}_p^*]$,
then we have 
\begin{equation}\label{eq:eta.p.tilde}
    \tilde{\eta}_p(X^*)
        =\mu_Y + [\tilde{\xi}_1^*,\ldots,\tilde{\xi}_p^*]\bm{c}_p
        =\mu_Y
        	+\bm{c}_p^\top
        		\mathbf{H}_p^{*\top}\SSigma_{\widetilde{X}^*}^{-1}(\widetilde{\bm{X}}^*-\bm{\mu}_X^*).
\end{equation}
Accordingly,
\begin{equation}\label{eq:eta.inf.tilde}
	\tilde{\eta}_{\infty}(X^*) = \lim_{p\to\infty}\tilde{\eta}_p(X^*)
\end{equation}
is a natural surrogate for $\eta(X^*)$ at \eqref{eq:eta.x.star}.

It is therefore justified to predict $Y^*$
by the empirical counterpart of \eqref{eq:eta.p.tilde}, namely,
\begin{equation}\label{eq:eta.p.hat}
	\hat{\eta}_p(X^*)
    =\bar{Y}
       	+\hat{\bm{c}}_p^\top
       		\widehat{\mathbf{H}}_p^{*\top}
       		\widehat{\SSigma}_{\widetilde{X}^*}^{-1}(\widetilde{\bm{X}}^*-\hat{\bm{\mu}}_X^*),
\end{equation}
which is constructed by replacing population quantities 
$\mu_Y$, $\bm{c}_p$, $\SSigma_{\widetilde{X}^*}$, $\bm{\mu}_X^*$, and $\mathbf{H}_p^*$
all at \eqref{eq:eta.p.tilde} with,
respectively,
$\bar{Y} = n^{-1}\sum_{i=1}^n Y_i$, $\hat{\bm{c}}_p$ at \eqref{eq:c.p.hat}, and
\begin{align}
    \widehat{\SSigma}_{\widetilde{X}^*} 
        &=\left[\hat{v}_A(T_{\ell_1}^*,T_{\ell_2}^*)\right]_{1\leq\ell_1,\ell_2\leq L^*}
            +\hat{\sigma}_e^2\II_{L^*},
    \label{eq:sigma.x.hat}\\
    \hat{\bm{\mu}}_X^* 
        &=\left[\hat{\mu}_X(T_1^*),\ldots,\hat{\mu}_X(T_{L^*}^*)\right]^\top,
    \label{eq:mu.x.hat.star}\\
   	\widehat{\mathbf{H}}_p^*
   		&=[\widehat{\mathcal{V}}_A(\hat{w}_j)(T_{\ell}^*)]_{
   		    \substack{1\leq j\leq p \\ 1\leq\ell\leq L^*}
   		  }.
   	\label{eq:H.p.hat}
\end{align}

It remains to construct a CI for $\eta(X^*)$ at \eqref{eq:eta.x.star}.
From the perspective of projection again,
we have
\begin{align*}
    \cov([\xi_1^*-\tilde{\xi}_1^*,&\ldots,\xi_p^*-\tilde{\xi}_p^*]^\top
        \mid L^*, T_1^*,\ldots,T_{L^*}^*)
    \\
    =&\ \cov([\xi_1^*,\ldots,\xi_p^*]^\top \mid L^*, T_1^*,\ldots,T_{L^*}^*)
    -\cov([\tilde{\xi}_1^*,\ldots,\tilde{\xi}_p^*]^\top \mid L^*, T_1^*,\ldots,T_{L^*}^*)
    \\
    =&\ \II_p-\mathbf{H}_p^{*\top}\SSigma_{\widetilde{X}^*}^{-1}\mathbf{H}_p^*.
\end{align*}
Under Gaussian assumptions (as in Corollary \ref{thm:asymptotic.normal})
and conditioning on $L^*$ and $T_1^*,\ldots,T_{L^*}^*$, 
the error
$\hat{\eta}_p(X^*)-\eta(X^*)$ is asymptotically normally distributed.
An asymptotic $(1-\alpha)$ (conditional Wald) CI for $\eta(X^*)$ at \eqref{eq:eta.x.star} is then
$$
    \hat{\eta}_p(X^*)\pm \Phi^{-1}_{1-\alpha/2}\left\{
        \hat{\bm{c}}_p^\top
        (
            \II_p
            -\widehat{\mathbf{H}}_p^{*\top}\widehat{\SSigma}_{\widetilde{X}^*}^{-1}\widehat{\mathbf{H}}_p^*
        )
        \hat{\bm{c}}_p
    \right\}^{1/2},    
$$
where $\Phi^{-1}_{1-\alpha/2}$ is the $(1-\alpha/2)$ standard normal quantile.

\subsection{Selection of number of basis functions}

We are unclear on how to estimate the degrees of freedom (DoF) asscociated with 
PLEASS prediction $\hat{\eta}_p(X^*)$ at \eqref{eq:eta.p.hat},
partially because of its intrinsic complexity;
at least there seems no natural extension from 
the work of \cite{KramerSugiyama2011}
on DoF computation for (multivariate) PLS.
As a consequence, 
rather than using generalized cross validation \citep{CravenWahba1979}
and various information criteria,
it sounds more reasonable to employ  (leave-one-out) cross-validation (CV) as the tuning scheme:
choose an integer $p\in [0,p_{\max}]$ by minimizing
$$
    {\rm CV}(p)=n^{-1}\sum_{i=1}^n\{Y_i-\hat{\eta}_p^{(-i)}(X_i)\}^2
$$
in which $\hat{\eta}_p^{(-i)}(X_i)$ predicts the $i$th response
with all the other subjects kept for training.
Define by ${\rm FVE}(j)=\sum_{k=1}^j\lambda_k / \sum_{k=1}^{\infty}\lambda_k$
(with $\lambda_k$ replaced by empirical counterparts in practice)
the fraction of variance explained (FVE) by the first $j$ eigenfunctions.
An upper bound for $p$ is then given by,
e.g.,
\begin{equation}\label{eq:pfve}
    p_{\max}=\min\{j\in\mathbb{Z}^+: {\rm FVE}(j)\geq 95\%\}.
\end{equation}
This cut-off is one of the default truncation rules frequently used for the Karhunen-Lo\`{e}ve series.
Since,
as mentioned in Section~\ref{sec:introduction},
FPLS typically needs fewer terms than FPC 
to reach a comparable accuracy,
\eqref{eq:pfve} is very likely to be large enough for tuning PLEASS.
Another heuristic upper bound is provided by \citet[][Section 3]{DelaigleHall2012a}:
$p_{\max}=n/2$,
acceptable for a small or moderate $n$.

\begin{algorithm}[t!]
	\caption{PLEASS tuned through CV}
	\label{alg:pleass}
	\begin{algorithmic}[]
	    \State Obtain $\hat{\mu}_X$, $\hat{v}_A$, $\hat{v}_C$ and $\hat{\sigma}_e^2$
	        following Appendix \ref{appendix:LLS}.
		\For {$j$ in $1,\ldots,p_{\max}-1$}
            \State $\widehat{\mathcal{V}}_A^j(\hat{v}_C)(\cdot)
                    \gets\int_{\mathbb{T}}
                        \hat{v}_A(\cdot, t)\widehat{\mathcal{V}}_A^{j-1}(\hat{v}_C)(t)\dd t.
                $
		\EndFor
	    \State Extract $\hat{w}_j$ from 
	        $\hat{v}_C,
	        \widehat{\mathcal{V}}_A(\hat{v}_C)
	        ,\ldots,
	        \widehat{\mathcal{V}}_A^{p_{\max}-1}(\hat{v}_C)$
	        following Algorithm \ref{alg:mgs}.
	    \State $\hat{\beta}_0\gets0$.
        \State $\hat{\eta}_0(X^*)\gets \bar{Y}$.
		\For {$p$ in $1,\ldots,p_{\max}$}
            \State 
            	$\hat{\beta}_p\gets
            		[\hat{w}_1,\ldots,\hat{w}_p]\hat{\bm{c}}_p$
            	with $\hat{\bm{c}}_p$ at \eqref{eq:c.p.hat}.
		    \State $\hat{\eta}_p(X^*)\gets
    		            \bar{Y}
	                   	+\hat{\bm{c}}_p^\top
	                   		\widehat{\mathbf{H}}_p^{*\top}
	                   		\widehat{\SSigma}_{\widetilde{X}^*}^{-1}(\widetilde{\bm{X}}^*-\hat{\bm{\mu}}_X^*)
                    $
            \State\qquad with $\widehat{\SSigma}_{\widetilde{X}^*}$ at \eqref{eq:sigma.x.hat},
                    $\hat{\bm{\mu}}_X^*$ at \eqref{eq:mu.x.hat.star}
                    and $\widehat{\mathbf{H}}_p^*$ at \eqref{eq:H.p.hat}.
		\EndFor
		\State $p_{\rm opt}\gets\argmin_{0\leq p\leq p_{\max}}{\rm CV}(p)$.
		\State 
		$(1-\alpha)$ CI for $\eta(X^*)
		\gets\hat{\eta}_{p_{\rm opt}}(X^*)\pm \Phi^{-1}_{1-\alpha/2}\left\{
                \hat{\bm{c}}_{p_{\rm opt}}^\top
                (\II_{p_{\rm opt}}
                    -\widehat{\mathbf{H}}_{p_{\rm opt}}^\top
                    \widehat{\SSigma}_{\widetilde{X}^*}^{-1}
                    \widehat{\mathbf{H}}_{p_{\rm opt}})
                \hat{\bm{c}}_{p_{\rm opt}}
            \right\}^{1/2}
		$.
	\end{algorithmic}
\end{algorithm}

\section{Asymptotic properties}\label{sec:theory}

Our theoretical results are established under \ref{cond:rv.1}--\ref{cond:tau.p.sup}
in Appendix \ref{appendix:technical}. 
The first six of these assumptions formalize the setup of sparsity and measurement errors;
\ref{cond:auto.cov}--\ref{cond:bandwidth.4}  
are prepared for the consistency of LLS in Appendix \ref{appendix:LLS}.
For arbitrary fixed $p$,
the consistency of $\hat{\beta}_p$ at \eqref{eq:beta.p.hat} 
is a direct corollary of \citet[][Theorem~1]{Zhou2019}.
Unfortunately,
this argument may not apply to the scenario with diverging $p=p(n)$,
since the sequential construction in \eqref{eq:V.hat.A}
tends to induce a bias accumulating with increasing $p$.
As a result,
it is indispensable to impose a sufficiently slow divergence rate on $p$,
such as, e.g., those required by \ref{cond:tau.p.L2} or \ref{cond:tau.p.sup}.

\begin{Theorem}\label{thm:converge.beta}
    Assume that \ref{cond:rv.1}--\ref{cond:tau.p.L2} all hold.
    As $n$ goes to infinity,
    $\|\hat{\beta}_p-\beta\|_2\to_p 0$.
    If we substitute the stronger assumption \ref{cond:tau.p.sup} for \ref{cond:tau.p.L2},
    and assume, in addition, that $\|\beta_p-\beta\|_{\infty}\to_p 0$,
    then the convergence of $\hat{\beta}_p$ becomes uniform, 
    i.e.,
    $\|\hat{\beta}_p-\beta\|_{\infty}\to_p 0$.
\end{Theorem}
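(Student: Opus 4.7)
The plan is to split the error by the triangle inequality,
$$
\|\hat\beta_p-\beta\|_\star\;\le\;\|\hat\beta_p-\beta_p\|_\star+\|\beta_p-\beta\|_\star,
\qquad \star\in\{2,\infty\},
$$
and attack the two pieces separately. The second, deterministic, piece is handled for free: in $L^2$ it is exactly the statement proved in \citet[Theorem~3.2]{DelaigleHall2012b} together with the characterization $\beta\in\overline{\mathrm{KS}_\infty}$, while for the stronger, uniform, claim the assumption $\|\beta_p-\beta\|_\infty\to_p 0$ is imposed as a hypothesis. Thus the whole argument reduces to showing that the stochastic error $\|\hat\beta_p-\beta_p\|_\star$ goes to zero as $n\to\infty$ and $p=p(n)$ diverges at the prescribed rate.

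For the stochastic piece I would run a perturbation analysis that tracks three ingredient errors, $\|\hat\mu_X-\mu_X\|_\infty$, $\|\hat v_A-v_A\|_\infty$ and $\|\hat v_C-v_C\|_\infty$, whose convergence to zero in probability is furnished by the LLS consistency results of Appendix~\ref{appendix:LLS} (the LLS-based template noted in the Remark following Theorem~\ref{thm:converge.beta}). The first step is an induction on $j$ showing
$$
\bigl\|\widehat{\mathcal V}_A^{\,j}(\hat v_C)-\mathcal V_A^{\,j}(v_C)\bigr\|_\star
\;\lesssim\;\|\mathcal V_A\|_{\mathrm{op}}^{\,j}\!\left(\|\hat v_C-v_C\|_\star+j\,\|\hat v_A-v_A\|_\star\sup_{k<j}\|\mathcal V_A^{\,k}(v_C)\|_\star\right),
$$
which lets me transfer the ingredient errors, with a factor polynomial in $p$ and multiplicative in powers of $\|\mathcal V_A\|_{\mathrm{op}}$, onto the Krylov generators $\hat v_C,\widehat{\mathcal V}_A(\hat v_C),\ldots$ used in Algorithm~\ref{alg:mgs}. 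The second step is to push this perturbation through the modified Gram-Schmidt routine: provided the scalars $\int\!\!\int\vartheta_j^{[j]}\hat v_A\vartheta_j^{[j]}$ remain bounded below by the preset threshold uniformly in $j\le p$, standard perturbation bounds for Gram-Schmidt give $\sup_{j\le p}\|\hat w_j-w_j\|_\star\to_p 0$. Combining these two steps with Cauchy-Schwarz on the definition of $\hat{\bm c}_p$ in \eqref{eq:c.p.hat} produces a bound on $\|\hat\beta_p-\beta_p\|_\star$ that is proportional to the ingredient errors times an amplification factor depending only on $p$ and on fixed population quantities.

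The main obstacle, and the reason \ref{cond:tau.p.L2} (respectively \ref{cond:tau.p.sup}) is needed, is controlling this amplification factor as $p$ diverges with $n$. The iterated powers $\mathcal V_A^{\,j}$ and the Gram-Schmidt normalization both amplify small perturbations, and the minimum value of $\int\!\!\int\vartheta_j^{[j]}v_A\vartheta_j^{[j]}$ can shrink rapidly in $p$ because it is governed by the tail of the eigenvalues of $\mathcal V_A$ inside $\mathrm{KS}_p$; conditions \ref{cond:tau.p.L2} and \ref{cond:tau.p.sup} are precisely the quantitative statements ensuring that the product of this amplification factor and the LLS ingredient rates vanishes in $L^2$- and $L^\infty$-norm respectively. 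I would therefore formulate a single working lemma of the form "amplification factor $\times$ ingredient error $\to_p 0$", verify it under each of \ref{cond:tau.p.L2} and \ref{cond:tau.p.sup}, and conclude. The sup-norm case is essentially identical after replacing $L^2$ inner products by suprema, invoking the uniform boundedness of $\hat v_A$ on the compact $\mathbb T^2$, and using the assumed $\|\beta_p-\beta\|_\infty\to_p 0$ in the final application of the triangle inequality.
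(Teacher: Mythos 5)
Your outer decomposition (triangle inequality, with $\|\beta_p-\beta\|_2\to 0$ from \citet[Theorem~3.2]{DelaigleHall2012b} and the uniform version assumed) matches the paper, and your first step --- the induction transferring the errors in $\hat v_A,\hat v_C$ onto the iterated Krylov generators with a factor growing like $j\|v_A\|^{j-1}$ --- is essentially Lemma~\ref{lemma:converge.cxxb.j}. But the route you take after that diverges from the paper's in a way that leaves a genuine gap. The paper never perturbs the Gram--Schmidt routine at all: it uses the closed-form representation $\beta_p=[\mathcal V_A(\beta),\ldots,\mathcal V_A^p(\beta)]\DDD_p^{-1}\aalpha_p$ from \citet[Eq.~3.6]{DelaigleHall2012b}, so that the entire stochastic error reduces to perturbation of the Gram matrix $\DDD_p$ and the vector $\aalpha_p$. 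There the smallest eigenvalue $\tau_p$ enters cleanly through $\|\DDD_p^{-1}\|_2=\tau_p^{-1}$ and a Neumann-type expansion of $\widehat\DDD_p^{-1}$, and the resulting explicit rates are exactly the quantities that conditions \ref{cond:tau.p.L2} and \ref{cond:tau.p.sup} are written to kill. Individual basis functions $\hat w_j$ are never compared to $w_j$.

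The gap in your version is precisely at the point where you invoke ``standard perturbation bounds for Gram--Schmidt'' to get $\sup_{j\le p}\|\hat w_j-w_j\|_\star\to_p 0$. First, the hypothesis you need for that --- the normalization scalars $\int\!\!\int\vartheta_j^{[j]}\hat v_A\vartheta_j^{[j]}$ staying bounded below uniformly in $j\le p$ --- is not among the theorem's assumptions; \ref{cond:tau.p.L2} explicitly allows $\tau_p\to 0$ and only requires certain products of $\tau_p^{-1}$, $\tau_p^{-2}$, $p$, $\|v_A\|_2^{2p}$ and the LLS rates $\zeta_A,\zeta_C$ to vanish. Second, your ``working lemma'' (amplification factor $\times$ ingredient error $\to_p 0$) is stated but not derived: the whole content of the theorem is the quantitative identification of that amplification factor with expressions in $\tau_p$, $p$, and $\|v_A\|_2^{p}$ so that it can be checked against \ref{cond:tau.p.L2}/\ref{cond:tau.p.sup}, and a sequential Gram--Schmidt perturbation analysis compounds errors step by step in a way that is considerably harder to match to those conditions than the one-shot matrix inversion the paper uses. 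Without carrying out that bookkeeping, the argument does not close.
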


Analogous to PACE,
our PLEASS results in an inconsistent prediction (see Theorem \ref{thm:converge.eta}):
the discrepancy  
$\hat{\eta}_p(X^*)-\tilde{\eta}_{\infty}(X^*)$ between our forecast and our surrogate converges to zero (unconditionally and in probability) 
but not the discrepancy  $\hat{\eta}_p(X^*)-\eta(X^*)$ between our forecast and the true mean of $Y^*$.
Nevertheless,
this phenomenon is far from disappointing:
one implication is that 
$\hat{\eta}_p(X^*)-\eta(X^*)$ is asymptotically distributed as
$\tilde{\eta}_{\infty}(X^*)-\eta(X^*)$;
an asymptotic distribution of $\eta(X^*)$ hence follows.
In particular,
the result for Gaussian cases is presented in Corollary \ref{thm:asymptotic.normal}.

\begin{Theorem}\label{thm:converge.eta}
    Under \ref{cond:rv.1}--\ref{cond:tau.p.L2},
    as $n$ goes to infinity,
    $\hat{\eta}_p(X^*)-\tilde{\eta}_{\infty}(X^*)$ converges to zero (unconditionally) in probability.
\end{Theorem}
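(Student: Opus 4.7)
The plan is to split
$$
\hat{\eta}_p(X^*) - \tilde{\eta}_{\infty}(X^*)
= \bigl[\hat{\eta}_p(X^*) - \tilde{\eta}_p(X^*)\bigr] + \bigl[\tilde{\eta}_p(X^*) - \tilde{\eta}_{\infty}(X^*)\bigr]
$$
and handle the two brackets separately. The second one involves only population quantities; since \ref{cond:tau.p.L2} forces $p = p(n) \to \infty$ with $n$, it tends to zero almost surely by the very definition of $\tilde{\eta}_{\infty}(X^*)$ at \eqref{eq:eta.inf.tilde}, and therefore in probability.

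For the first bracket I would expand, using \eqref{eq:eta.p.tilde} and \eqref{eq:eta.p.hat},
$$
\hat{\eta}_p(X^*) - \tilde{\eta}_p(X^*)
= (\bar{Y} - \mu_Y)
+ \hat{\bm{c}}_p^\top \widehat{\mathbf{H}}_p^{*\top} \widehat{\SSigma}_{\widetilde{X}^*}^{-1}(\widetilde{\bm{X}}^* - \hat{\bm{\mu}}_X^*)
- \bm{c}_p^\top \mathbf{H}_p^{*\top} \SSigma_{\widetilde{X}^*}^{-1}(\widetilde{\bm{X}}^* - \bm{\mu}_X^*).
$$
The scalar $\bar{Y} - \mu_Y$ is $O_p(n^{-1/2})$ by the weak law. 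For the bilinear-form difference I would telescope the four factors $\bm{c}_p, \mathbf{H}_p^*, \SSigma_{\widetilde{X}^*}^{-1}, \bm{\mu}_X^*$, replacing one ``hat'' by its population counterpart at a time, and bound each telescoped piece by Cauchy--Schwarz in combination with: the uniform consistencies $\|\hat{\mu}_X - \mu_X\|_\infty \to_p 0$ and $\|\hat{v}_A - v_A\|_\infty \to_p 0$ (which give control of $\hat{\bm{\mu}}_X^* - \bm{\mu}_X^*$ and $\widehat{\SSigma}_{\widetilde{X}^*} - \SSigma_{\widetilde{X}^*}$); the consistency $|\hat{\sigma}_e^2 - \sigma_e^2| \to_p 0$, coupled with a Neumann-series expansion for $\widehat{\SSigma}_{\widetilde{X}^*}^{-1} - \SSigma_{\widetilde{X}^*}^{-1}$, observing that the minimum eigenvalue of $\SSigma_{\widetilde{X}^*}$ is at least $\sigma_e^2 > 0$ so both inverses are uniformly norm-bounded on an event of high probability; and the convergences $\|\hat{\bm{c}}_p - \bm{c}_p\|_2 \to_p 0$ together with $\|\widehat{\mathcal{V}}_A(\hat{w}_j) - \mathcal{V}_A(w_j)\|_\infty \to_p 0$ that are already produced en route to Theorem \ref{thm:converge.beta}.

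The main obstacle is the bookkeeping when $p = p(n)$ diverges: the recursion \eqref{eq:V.hat.A} and the Gram--Schmidt step producing $\hat{w}_j$ compound errors in $\hat{v}_A$ across iterations, so a naive union bound over $j \leq p$ would blow up. Condition \ref{cond:tau.p.L2} is tailored to damp this accumulation, and I would piggyback on the same bounds for $\|\hat{w}_j - w_j\|_2$ and $\|\hat{\bm{c}}_p - \bm{c}_p\|_2$ that feed the proof of Theorem \ref{thm:converge.beta}, mildly sharpened to the supremum norm needed for pointwise evaluation of $\widehat{\mathcal{V}}_A(\hat{w}_j)$ at the grid $T_1^*, \ldots, T_{L^*}^*$. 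A final subtlety is that the statement asserts \emph{unconditional} convergence: because $L^*$ is bounded above (by \ref{cond:rv.1}--\ref{cond:rv.6}), all the bounds above hold uniformly over the randomness in $L^*, T_1^*, \ldots, T_{L^*}^*, \widetilde{\bm{X}}^*$ on an event of probability tending to one, and an application of the tower property yields the unconditional conclusion stated.
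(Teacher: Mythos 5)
Your overall architecture is reasonable, and several ingredients (the $\bar Y-\mu_Y$ term, the Neumann-series control of $\widehat{\SSigma}_{\widetilde{X}^*}^{-1}-\SSigma_{\widetilde{X}^*}^{-1}$ via the lower bound $\sigma_e^2$ on the smallest eigenvalue, the dominated-convergence step to remove the conditioning) match what the paper does or would need. But the central step has a genuine gap: you propose to telescope the bilinear form factor by factor and to invoke $\|\hat{\bm{c}}_p-\bm{c}_p\|_2\to_p 0$ and $\|\widehat{\mathcal{V}}_A(\hat w_j)-\mathcal{V}_A(w_j)\|_{\infty}\to_p 0$ as facts ``already produced en route to Theorem \ref{thm:converge.beta}.'' They are not. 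The paper's proof of Theorem \ref{thm:converge.beta} deliberately never controls the individual Gram--Schmidt outputs $\hat w_j$ or the coordinates of $\hat{\bm{c}}_p$; it works instead with the alternative representation \eqref{eq:beta.p.alter} in terms of the raw Krylov vectors and $\DDD_p^{-1}\aalpha_p$, precisely because the sequential orthonormalization can amplify errors as $p=p(n)$ diverges and componentwise consistency of the $\hat w_j$ is neither proved nor needed. Your plan to ``piggyback'' on bounds for $\|\hat w_j - w_j\|_2$ therefore rests on lemmas that do not exist in the paper and would require a separate (and delicate) argument.

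The idea that closes this gap, and that your proposal misses, is that the prediction only involves the aggregate $\mathbf{H}_p^{*}\bm{c}_p$, never the individual columns: since $\bm{h}_j^*$ stacks the values $\mathcal{V}_A(w_j)(T_\ell^*)$ and $\beta_p=\sum_j c_{pj}w_j$, one has $\mathbf{H}_p^{*}\bm{c}_p=[\mathcal{V}_A(\beta_p)(T_1^*),\ldots,\mathcal{V}_A(\beta_p)(T_{L^*}^*)]^\top$, and likewise $\widehat{\mathbf{H}}_p^{*}\hat{\bm{c}}_p=[\widehat{\mathcal{V}}_A(\hat\beta_p)(T_\ell^*)]_\ell$. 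The difference is then bounded by $\|v_A\|_{\infty}\|\hat\beta_p-\beta_p\|_2+\|\hat v_A-v_A\|_{\infty}\|\hat\beta_p\|_2$, which tends to zero by Lemma \ref{lemma:converge.1} and Theorem \ref{thm:converge.beta} --- no control of any individual $\hat w_j$ or $\hat c_{pj}$ is required, and the accumulation problem you correctly identify simply does not arise. A secondary quibble: you assert $L^*$ is bounded above ``by \ref{cond:rv.1}--\ref{cond:rv.6},'' but \ref{cond:rv.1} only requires $\E(L)<\infty$; the unconditional statement should instead be obtained, as you also suggest at the end, by proving conditional convergence for arbitrary fixed $L^*,T_1^*,\ldots,T_{L^*}^*$ and applying dominated convergence to the (bounded) conditional probabilities.
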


\begin{Corollary}\label{thm:asymptotic.normal}
    Fix $L^*$ and $T_1^*,\ldots,T_{L^*}^*$.
    Assume \ref{cond:rv.1}--\ref{cond:tau.p.L2}
    as well as the following two extra conditions:
    \begin{enumerate}[label=\arabic*)]
        \item
            FPLS scores $\int w_j(X-\mu_X)$
            and measurement errors $e_{i\ell}$
            are jointly Gaussian.
        \item
            $\lim_{p\to\infty}\bm{c}_p^\top
                (\II_p
                    -\mathbf{H}_p^{*\top}\SSigma_{\widetilde{X}^*}^{-1}\mathbf{H}_p^*
                )\bm{c}_p
                =\omega>0
            $.
    \end{enumerate}
    Then,
    as $n\to\infty$,
    $$
	    \frac{\hat{\eta}_p(X^*)-\eta(X^*)}{
	        \sqrt{
	            \hat{\bm{c}}_p^\top
	            (\II_p
	                -\widehat{\mathbf{H}}_p^{*\top}
	                \widehat{\SSigma}_{\widetilde{X}^*}^{-1}
	                \widehat{\mathbf{H}}_p^*)
	            \hat{\bm{c}}_p
	        }
	    }\to_d\mathcal{N}(0,1).
    $$
\end{Corollary}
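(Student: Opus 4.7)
The plan is to decompose the numerator as
$$
\hat{\eta}_p(X^*)-\eta(X^*)
= \bigl\{\hat{\eta}_p(X^*)-\tilde{\eta}_{\infty}(X^*)\bigr\}
+ \bigl\{\tilde{\eta}_{\infty}(X^*)-\eta(X^*)\bigr\},
$$
and show that the first bracket is $o_p(1)$ while the second converges in distribution to $\mathcal{N}(0,\omega)$; the statement then follows from Slutsky's theorem once the denominator is shown to converge in probability to $\omega$. Theorem \ref{thm:converge.eta} delivers the first bracket directly, so the bulk of the work is on the ``projection error'' $\tilde{\eta}_{\infty}(X^*)-\eta(X^*)$ and on consistency of the variance estimator.

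For the projection error I would split it further as $\{\tilde{\eta}_p(X^*)-\eta_p(X^*)\}+\{\eta_p(X^*)-\eta(X^*)\}$. By the $L^2$ convergence $\beta_p\to\beta$ built into the definition of ${\rm KS}_\infty$ and the fact that $X^*-\mu_X\in L^2(\mathbb{T})$ almost surely, the second piece vanishes as $p\to\infty$. For the first piece, recall from the derivation preceding \eqref{eq:xi.j.tilde.star} that under Gaussianity the BLUP $\tilde{\xi}_j^*$ is actually the conditional mean, so given $L^*,T_1^*,\ldots,T_{L^*}^*$ the vector $[\xi_1^*-\tilde{\xi}_1^*,\ldots,\xi_p^*-\tilde{\xi}_p^*]^\top$ is jointly Gaussian with mean zero and covariance matrix $\II_p-\mathbf{H}_p^{*\top}\SSigma_{\widetilde{X}^*}^{-1}\mathbf{H}_p^*$ (exactly the quadratic form appearing in the denominator). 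Consequently
$$
\tilde{\eta}_p(X^*)-\eta_p(X^*)
= -\bm{c}_p^\top[\xi_1^*-\tilde{\xi}_1^*,\ldots,\xi_p^*-\tilde{\xi}_p^*]^\top
\sim \mathcal{N}\!\left(0,\ \bm{c}_p^\top(\II_p-\mathbf{H}_p^{*\top}\SSigma_{\widetilde{X}^*}^{-1}\mathbf{H}_p^*)\bm{c}_p\right),
$$
and condition 2) of the corollary forces the variance to converge to $\omega>0$, yielding $\tilde{\eta}_{\infty}(X^*)-\eta(X^*)\sim\mathcal{N}(0,\omega)$.

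It remains to argue that the studentizing denominator
$\hat{\bm{c}}_p^\top(\II_p-\widehat{\mathbf{H}}_p^{*\top}\widehat{\SSigma}_{\widetilde{X}^*}^{-1}\widehat{\mathbf{H}}_p^*)\hat{\bm{c}}_p$
converges in probability to $\omega$. Here I would lean on the uniform consistency of the plug-in ingredients $\hat{\mu}_X,\hat{v}_A,\hat{v}_C,\hat{\sigma}_e^2$ guaranteed by the LLS theory cited for Theorem \ref{thm:converge.beta}, which propagates to uniform consistency of $\hat{w}_j$ and $\widehat{\mathcal{V}}_A(\hat{w}_j)$ on $\mathbb{T}$. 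Because we condition on the finite design $L^*,T_1^*,\ldots,T_{L^*}^*$, the matrices $\widehat{\mathbf{H}}_p^*$ and $\widehat{\SSigma}_{\widetilde{X}^*}$ are finite-dimensional with entries converging to their population analogues; combined with $\hat{\bm{c}}_p\to\bm{c}_p$ and the assumed limit in 2), continuous mapping gives the required convergence of the quadratic form.

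The main obstacle will be controlling these quantities while $p=p(n)$ diverges. The condition \ref{cond:tau.p.L2} is precisely designed to keep the accumulated bias in \eqref{eq:V.hat.A} in check; I expect to use it to show that $\|\hat{w}_j-w_j\|_2$ and $\|\widehat{\mathcal{V}}_A(\hat{w}_j)-\mathcal{V}_A(w_j)\|_\infty$ vanish uniformly over $j\le p(n)$, so that the inverse $\widehat{\SSigma}_{\widetilde{X}^*}^{-1}$ (whose size is fixed by $L^*$) stays well-conditioned and the $p(n)$-dimensional inner products under $\bm{c}_p$ remain stable. A diagonal-subsequence argument then lets $p$ and $n$ grow simultaneously, and the three strands — the $o_p(1)$ estimation error, the conditional Gaussian limit, and the consistent variance estimator — combine via Slutsky to give the conclusion.
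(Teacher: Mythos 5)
Your proposal is correct and follows essentially the same route as the paper's proof: conditional on the design, the BLUP residual vector is Gaussian with covariance $\II_p-\mathbf{H}_p^{*\top}\SSigma_{\widetilde{X}^*}^{-1}\mathbf{H}_p^*$, so $\tilde{\eta}_p(X^*)-\eta_p(X^*)$ is normal with the stated variance converging to $\omega$; Theorem \ref{thm:converge.eta} absorbs the estimation error, and Slutsky's theorem finishes. Your write-up is somewhat more explicit than the paper's (notably in spelling out the consistency of the studentizing denominator and the vanishing of $\eta_p(X^*)-\eta(X^*)$, both of which the paper leaves implicit), but the underlying argument is the same.
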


\section{Numerical illustration}\label{sec:numerical}

PLEASS is compared here with PACE in terms of finite-sample numerical performance.
As mentioned in Section \ref{sec:estimation}, 
both LLS and FACE
(implemented respectively via 
\texttt{R} packages
\texttt{fdapace} \citep{R-fdapace}
and \texttt{face} \citep{R-face})
were utilized to eestimate population quantities
$\mu_X$ at \eqref{sec:introduction},
$v_A$ at \eqref{eq:auto.cov},
$v_C$ at \eqref{eq:cross.cov},
and $\sigma_e^2$ at \eqref{eq:observation}.
Resulting combinations, 
viz. PLEASS+LLS, PACE+LLS, PLEASS+FACE and PACE+FACE,
are abbreviated as PLEASS.L, PACE.L, PLEASS.F and PACE.F,
respectively.
Our code trunks are accessible at 
\url{https://github.com/ZhiyangGeeZhou/PLEASS}.

\subsection{Simulation}\label{sec:simulation}

Each sample consisted of
$n=300$ iid paired realizations of $(X, Y)$
with $X$ and $Y$ both of zero mean.
$X$ was set up as a Gaussian process,
i.e., $\lambda_j^{-1/2}\rho_j=\lambda_j^{-1/2}\int\phi_j(X-\mu_X)$ were all iid as standard normal.
Error terms $e_{i\ell}$ were also standard normal.
We took 100, 90, 80, 10, 9, 8, 1, 0.9, and 0.8
as the top nine eigenvalues of operator $\mathcal{V}_A$ at \eqref{eq:V.A}; all the rest
were 0.
Correspondingly,
the top nine eigenfunctions were taken to be
(normalized) shifted Legendre polynomials 
\citep[refer to][pp.~773--774]{Hochstrasser1972}
of order 1 to 9,
say $P_1,\ldots,P_9$;
unit-normed and mutually orthogonal on $[0,1]$,
they were generated through \texttt{R}-package \texttt{orthopolynom} \citep{R-orthopolynom}.
The slope function $\beta$ was given by one of the following cases:
\begin{align}
    \beta &= P_1 + P_2 + P_3, \label{eq:beta.simu.1}
    \\
    \beta &= P_4 + P_5 + P_6,\label{eq:beta.simu.2}
    \\
    \beta &= P_7 + P_8 + P_9.\label{eq:beta.simu.3}
\end{align}
Two sorts of signal-to-noise-ratio (SNR) were defined,
i.e.,
${\rm SNR}_X = (\sum_{j=1}^{\infty}\lambda_j)^{1/2}/\sigma_e$
and ${\rm SNR}_Y= {\rm sd}(\int\beta X)/\sigma_{\varepsilon}$.
For simplicity,
we took ${\rm SNR}_X={\rm SNR}_Y$ ($=3$ or $10$).
To embody the sparsity assumptions,
in each sample,
$X_i$ was observed only at $L_i$ ($\stackrel{\iid}{\sim}{\rm Unif}\{3,4,5,6\}$)
points uniformly selected from $[0,1]$.
In total there were six combinations of settings.
200 iid samples were generated for each of them.
We randomly reserved 20\% of the subjects in each sample for testing 
and used the remainder for training. 
After running through all samples,
we computed 200 values of relative integrated squared estimation error (ReISEE)
$$
    {\rm ReISEE} = \|\beta\|_2^{-2}\|\beta-\hat{\beta}\|_2^2.
$$
Since neither PACE nor PLEASS leads to consistent predictions,
it is better to evaluate the prediction quality via the coverage percentage (CP)
of CIs constructed for testing subjects,
viz. 
$$
    {\rm CP} = \sum_{i\in I_{{\rm test}}}\mathbbm{1}\left\{\eta(X_{i})\in\widehat{\rm CI}_{i}\right\}\Big/\Big.\#I_{{\rm test}},
$$
where $\widehat{\rm CI}_{i}$ is the asymptotic (95\%) CI for $\eta(X_{i})$,
and $I_{{\rm test}}$ is the index set for testing portion
with cardinality $\#I_{{\rm test}}$.

\begin{figure}[!tp]
	\centering
	\begin{subfigure}{.5\textwidth}
		\centering
		\includegraphics[width=.9\textwidth, height=.25\textheight]
		    {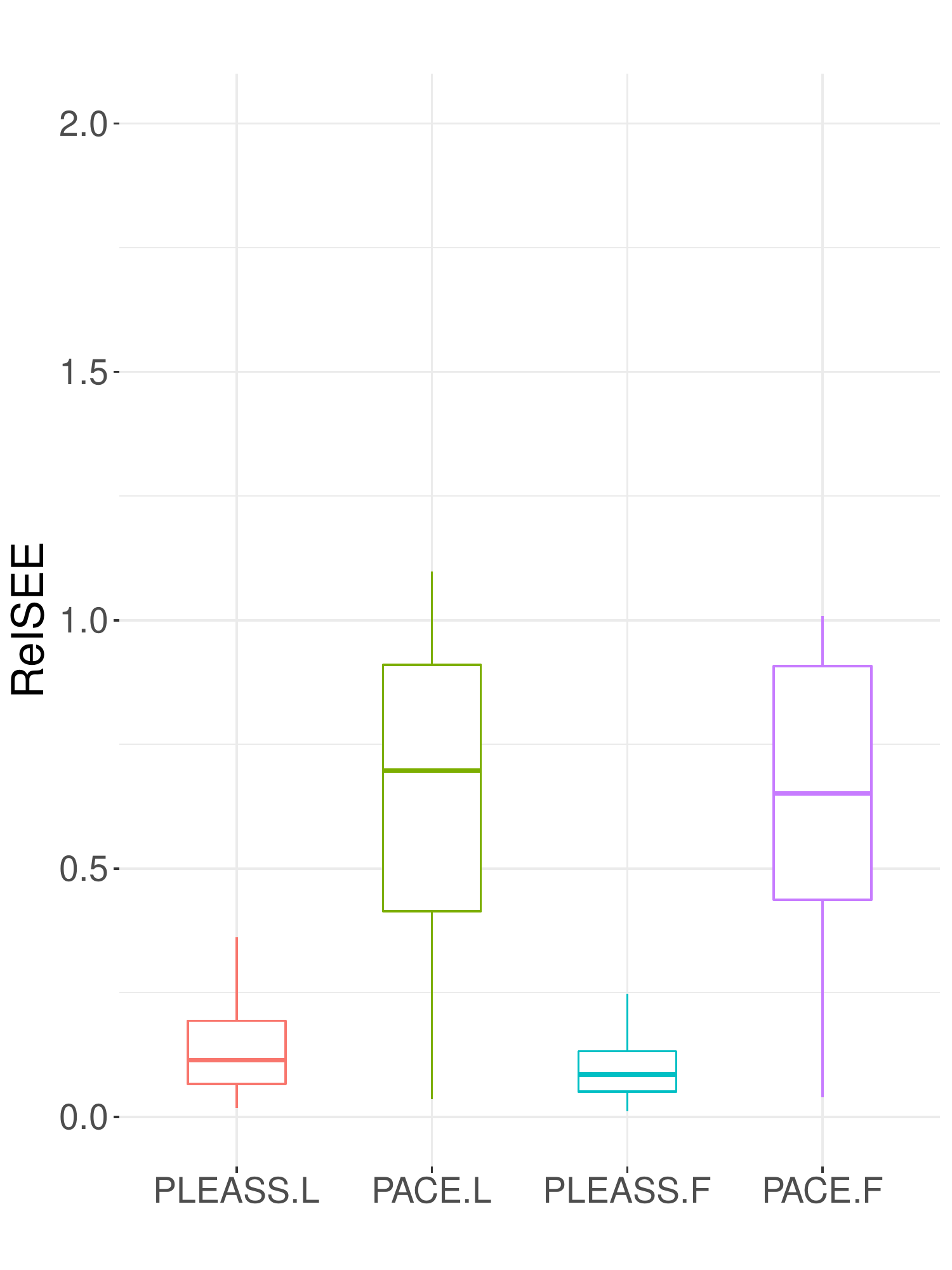}
		\caption{\footnotesize $\beta$ at \eqref{eq:beta.simu.1} with ${\rm SNR}_x={\rm SNR}_y=3$}
	\end{subfigure}%
	\begin{subfigure}{.5\textwidth}
		\centering
		\includegraphics[width=.9\textwidth, height=.25\textheight]
		    {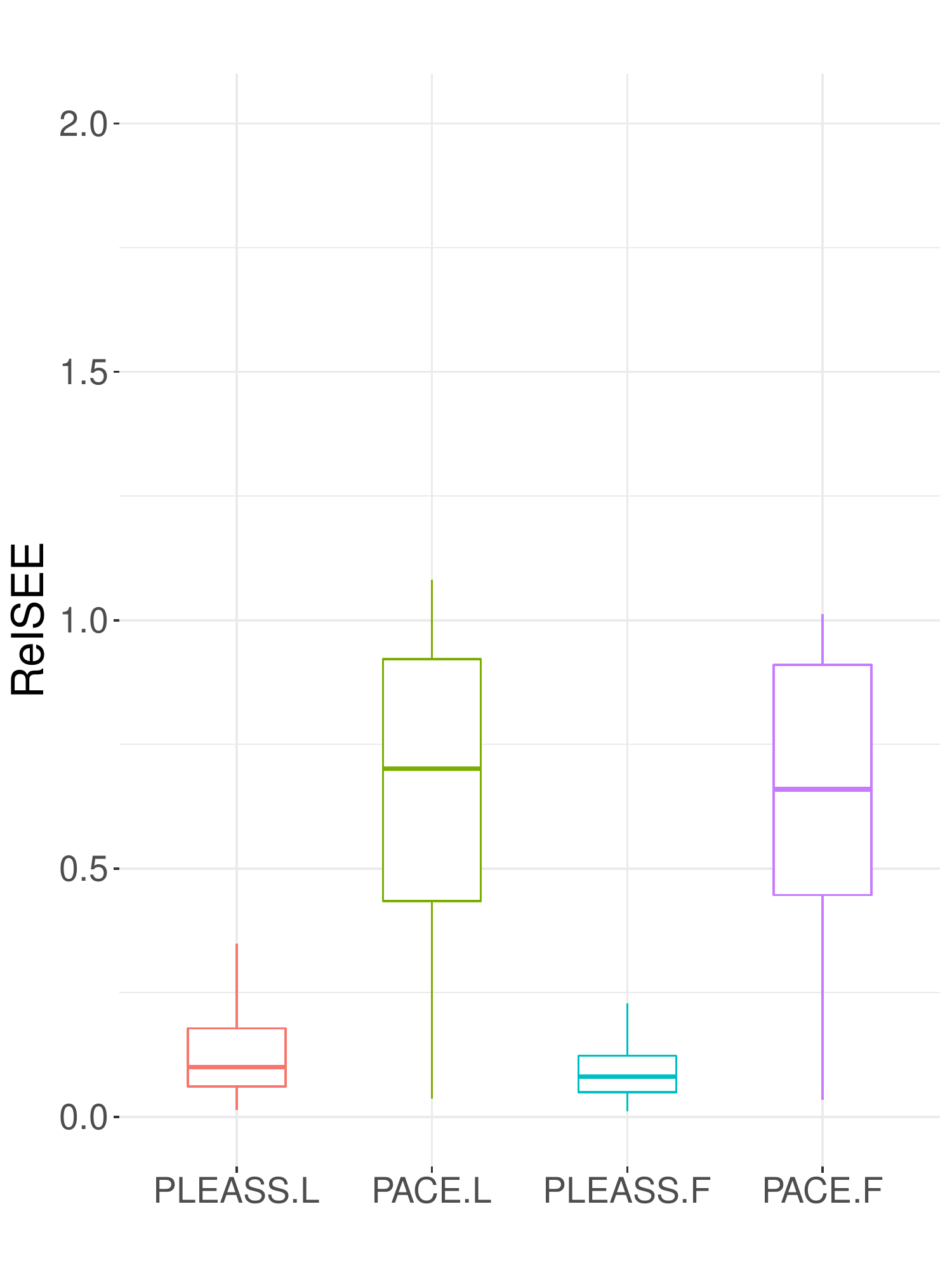}
		\caption{\footnotesize $\beta$ at \eqref{eq:beta.simu.1} with ${\rm SNR}_x={\rm SNR}_y=10$}
	\end{subfigure}
	\begin{subfigure}{.5\textwidth}
		\centering
		\includegraphics[width=.9\textwidth, height=.25\textheight]
		    {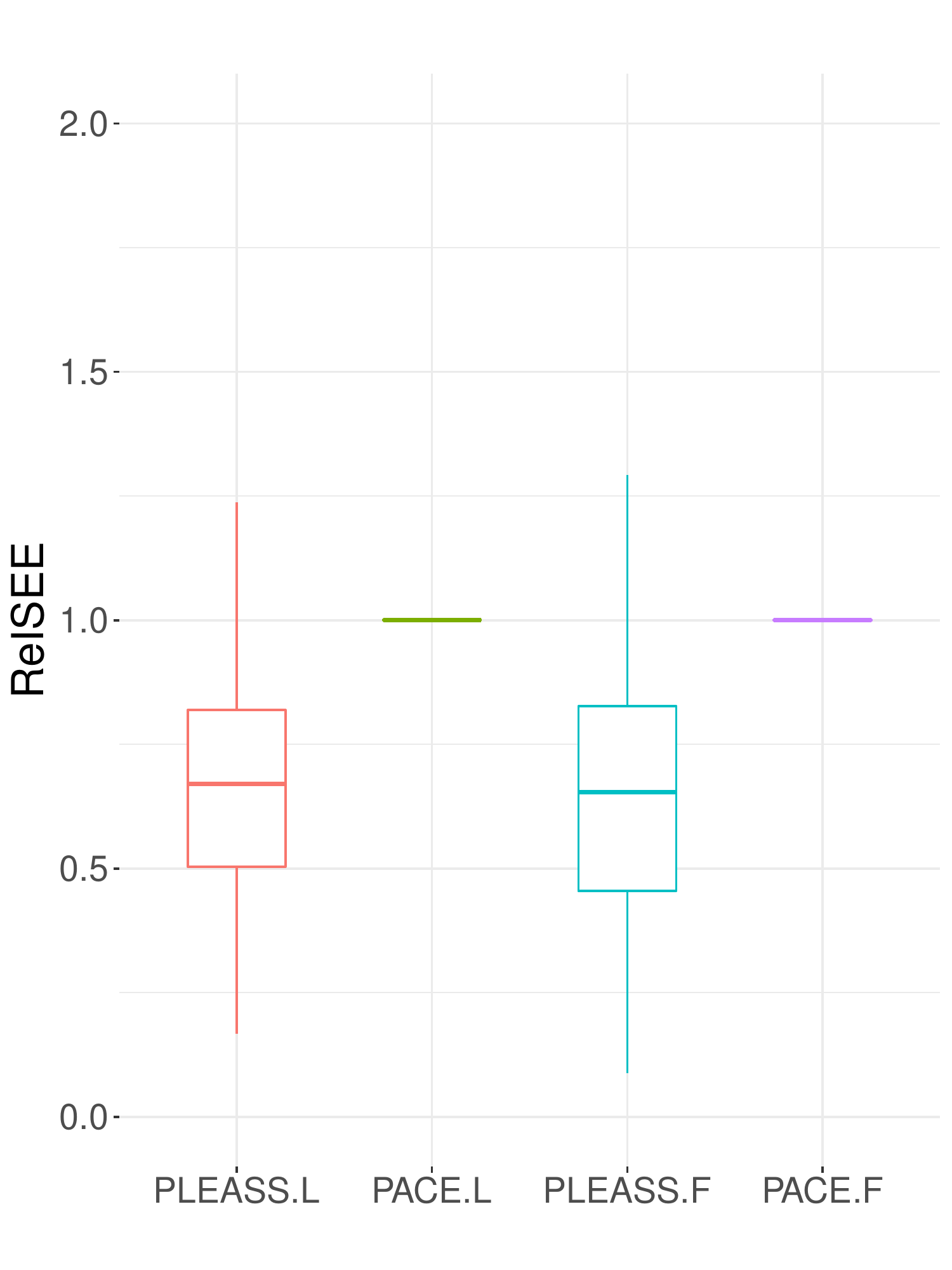}
		\caption{\footnotesize $\beta$ at \eqref{eq:beta.simu.2} with ${\rm SNR}_x={\rm SNR}_y=3$}
	\end{subfigure}%
	\begin{subfigure}{.5\textwidth}
		\centering
		\includegraphics[width=.9\textwidth, height=.25\textheight]
		    {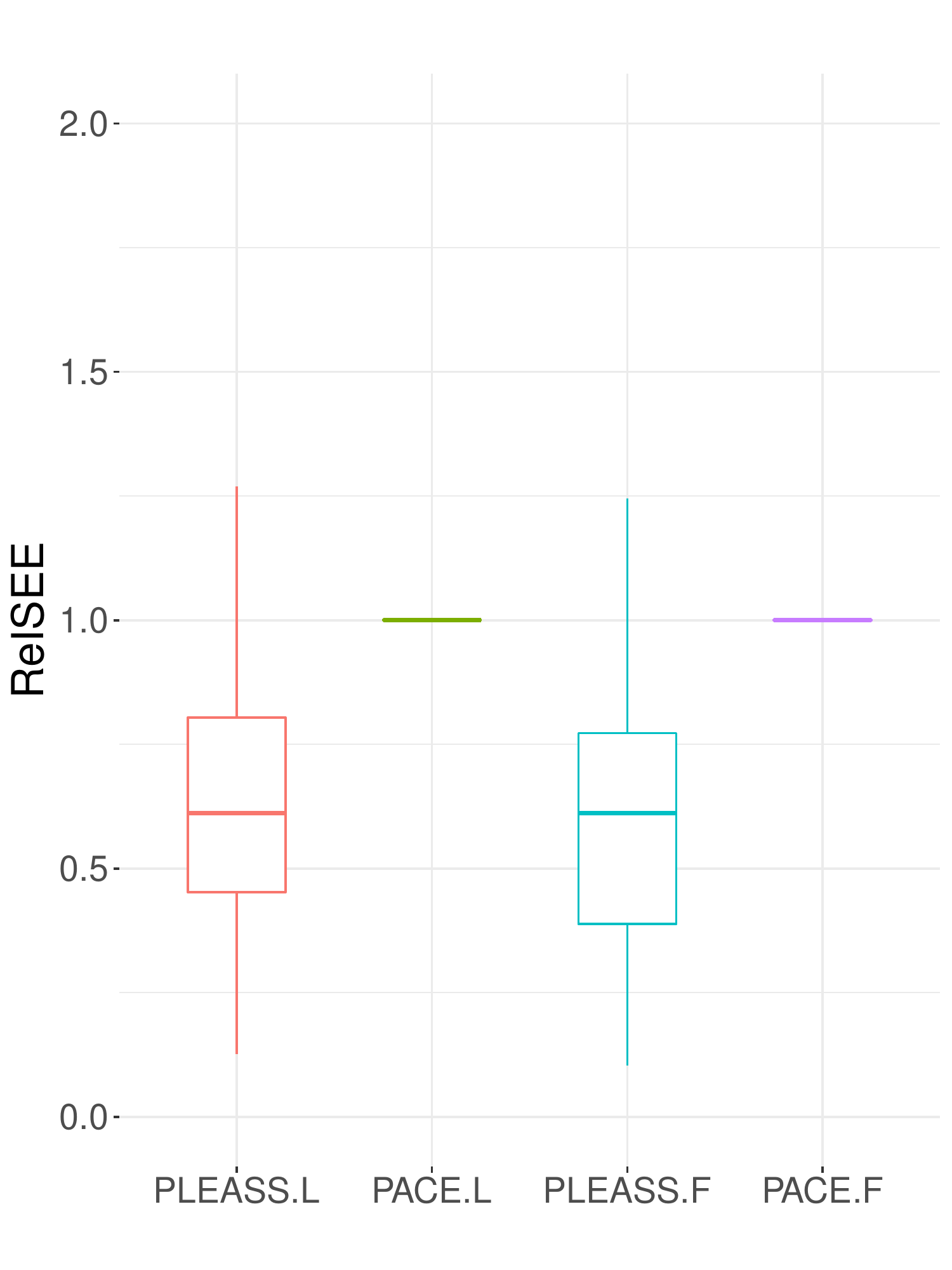}
		\caption{\footnotesize $\beta$ at \eqref{eq:beta.simu.2} with ${\rm SNR}_x={\rm SNR}_y=10$}
	\end{subfigure}
	\begin{subfigure}{.5\textwidth}
		\centering
		\includegraphics[width=.9\textwidth, height=.25\textheight]
		    {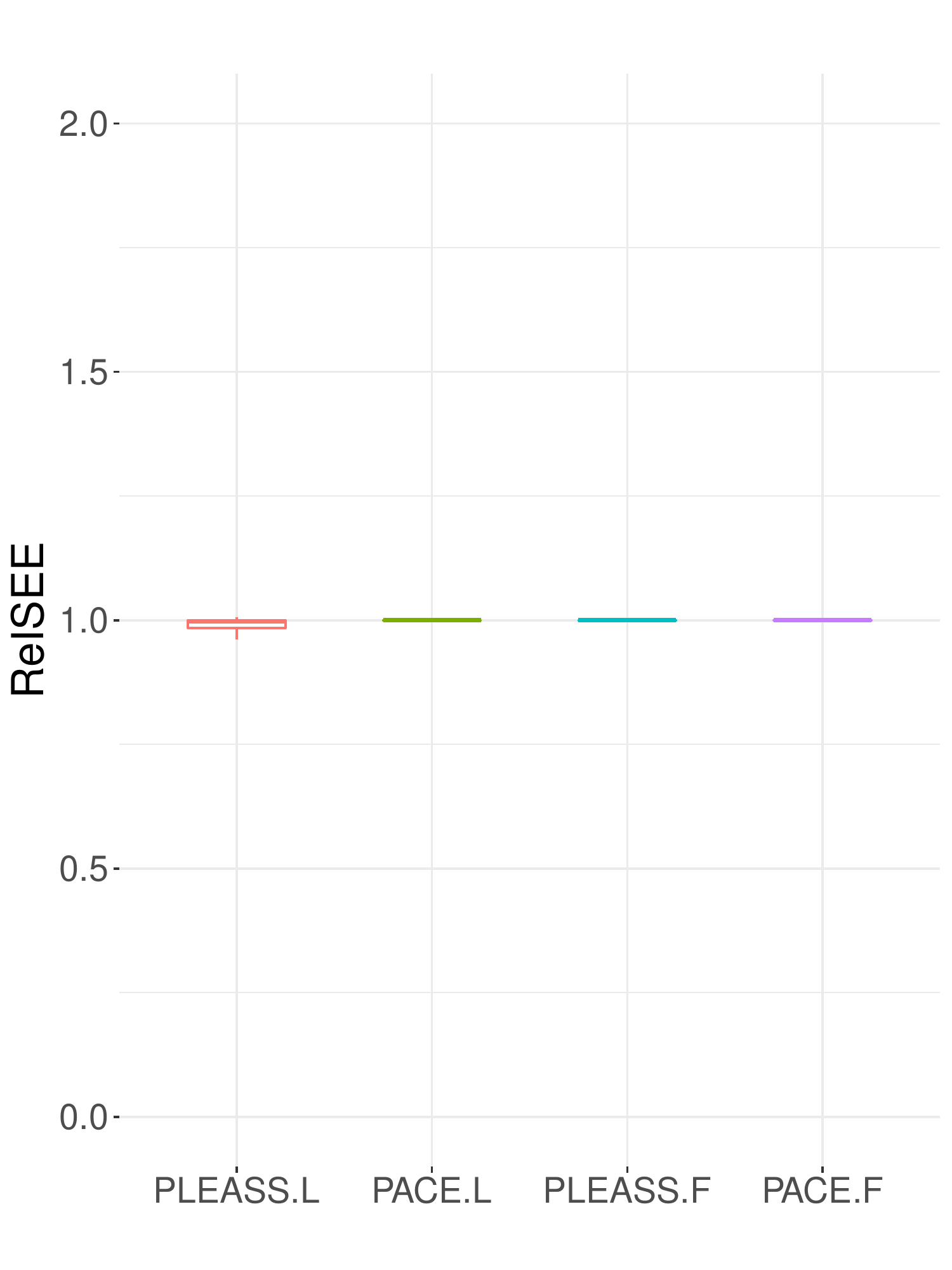}
		\caption{\footnotesize $\beta$ at \eqref{eq:beta.simu.3} with ${\rm SNR}_x={\rm SNR}_y=3$}
	\end{subfigure}%
	\begin{subfigure}{.5\textwidth}
		\centering
		\includegraphics[width=.9\textwidth, height=.25\textheight]
		    {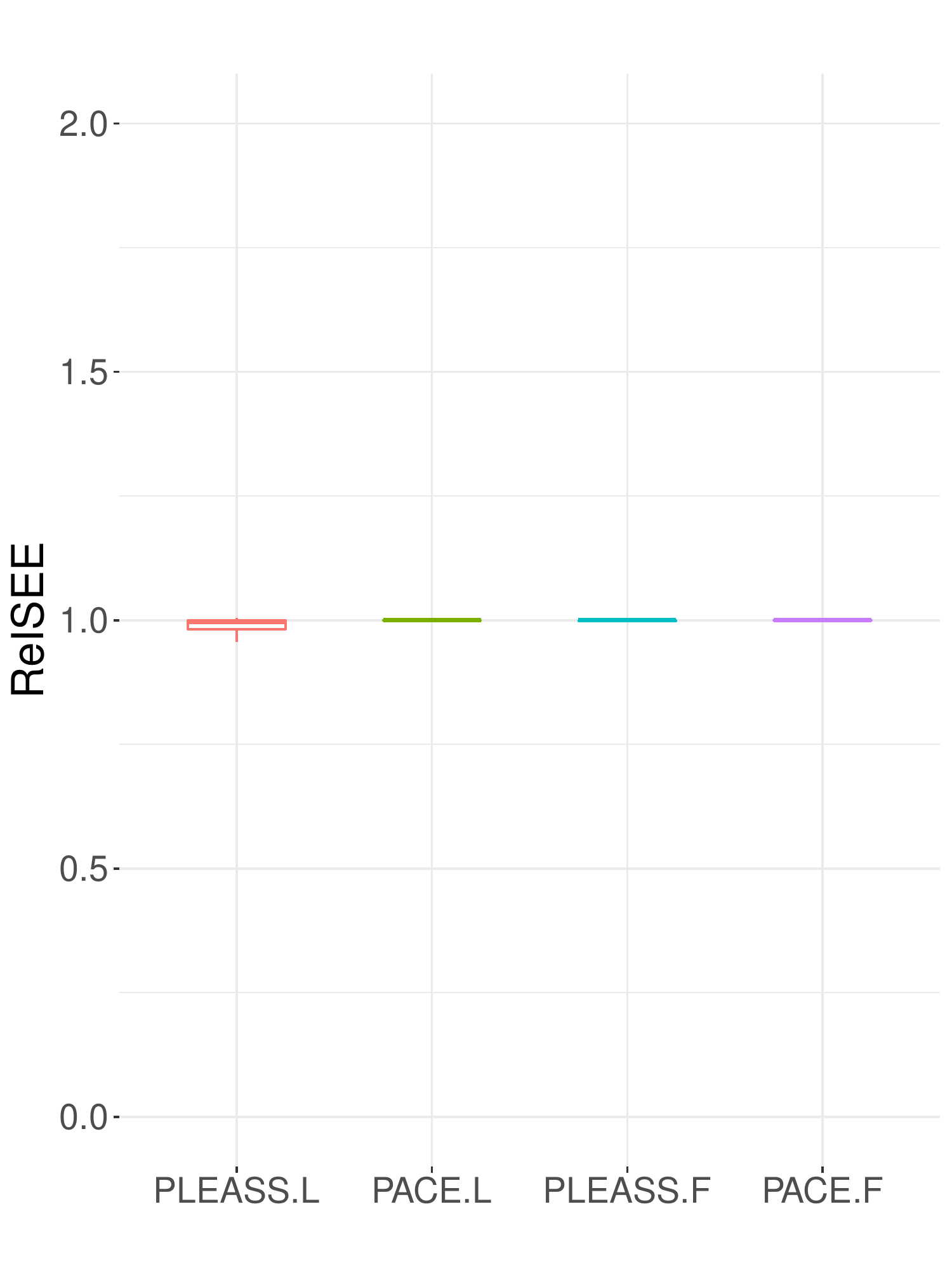}
		\caption{\footnotesize $\beta$ at \eqref{eq:beta.simu.3} with ${\rm SNR}_x={\rm SNR}_y=10$}
	\end{subfigure}
	\caption{\small
		Boxplots of ReISEE values under different simulated settings:
		SNR value varies with column,
        while rows differ in $\beta$.
		In each subfigure,
		from left to right,
		the four boxes respectively correspond to PLEASS.L, PLEASS.F, PACE.L, and PACE.F. 
	} 
	\label{fig:reisee.n=300}
\end{figure}

When $\beta$ was constructed from eigenfunctions corresponding to large or moderate eigenvalues
(viz. $\beta$ at \eqref{eq:beta.simu.1} or \eqref{eq:beta.simu.2}), 
PLEASS performed better in term of ReISEE;
see the first two rows of Figure \ref{fig:reisee.n=300}.
Particularly,
at the second row of Figure \ref{fig:reisee.n=300},
ReISEE values of PLEASS were mostly lower than one,
while PACE boxes was trapped at one.
An ReISEE box sticking around one
implied estimates concentrated around the most trivial $\hat{\beta}=0$,
i.e., the corresponding method failed to output non-trivial estimates.
This failure was caused by 
zero inner products between $\hat{v}_C$ and estimated basis functions;
this happened frequently
if $\beta$ was mainly associated with a small portion of total variation (of $\mathcal{V}_A$)
that was likely to be smoothed out in recovering $v_A$ and $v_C$.
Such was exactly the case for PACE in the scenario \eqref{eq:beta.simu.2}
and for both PACE and PLEASS with $\beta$ at \eqref{eq:beta.simu.3}.

\begin{figure}[!tp]
	\centering
	\begin{subfigure}{.5\textwidth}
		\centering
		\includegraphics[width=.9\textwidth, height=.25\textheight]
		    {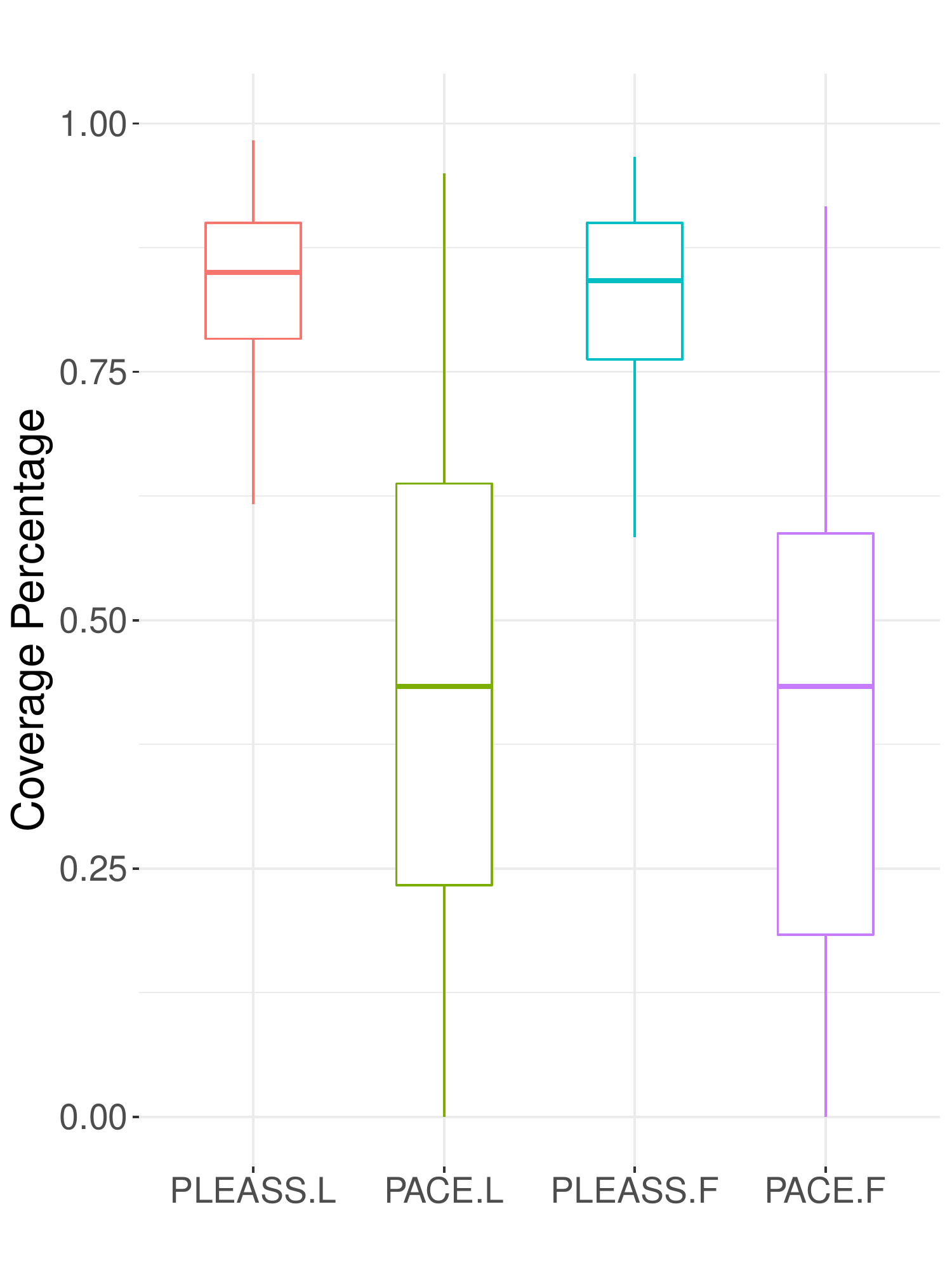}
		\caption{\footnotesize $\beta$ at \eqref{eq:beta.simu.1} with ${\rm SNR}_x={\rm SNR}_y=3$}
	\end{subfigure}%
	\begin{subfigure}{.5\textwidth}
		\centering
		\includegraphics[width=.9\textwidth, height=.25\textheight]
		    {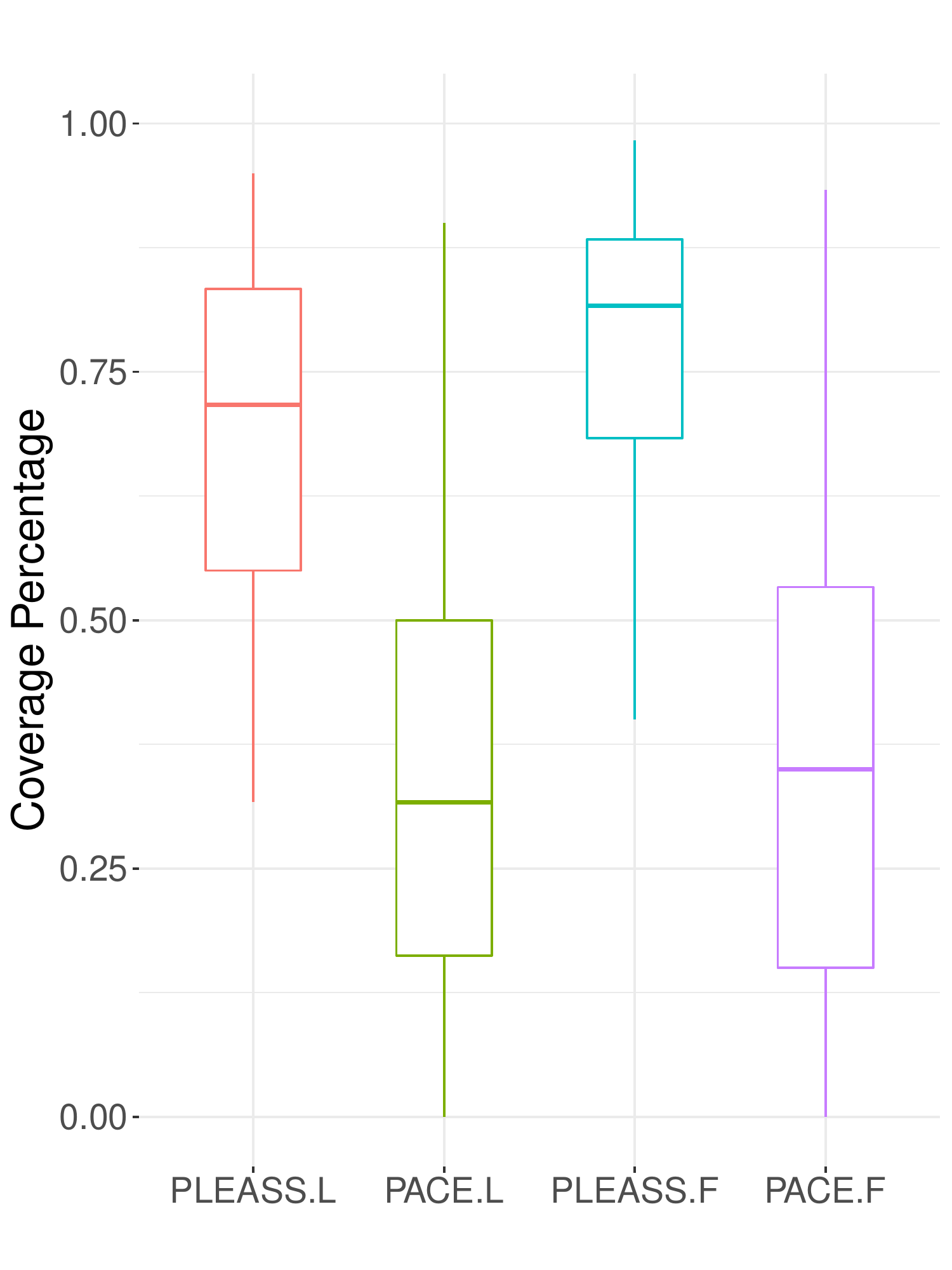}
		\caption{\footnotesize $\beta$ at \eqref{eq:beta.simu.1} with ${\rm SNR}_x={\rm SNR}_y=10$}
	\end{subfigure}
	\begin{subfigure}{.5\textwidth}
		\centering
		\includegraphics[width=.9\textwidth, height=.25\textheight]
		    {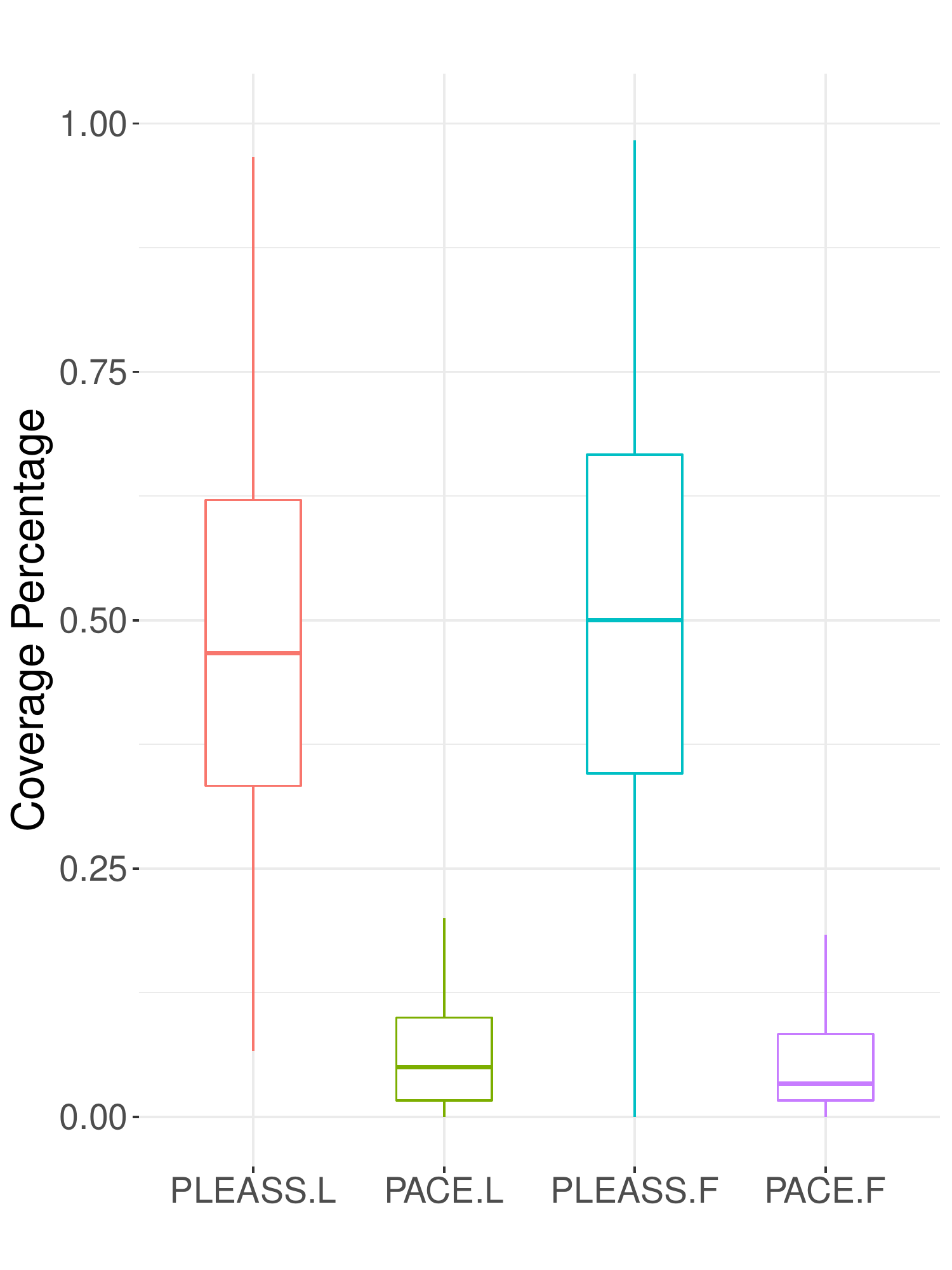}
		\caption{\footnotesize $\beta$ at \eqref{eq:beta.simu.2} with ${\rm SNR}_x={\rm SNR}_y=3$}
	\end{subfigure}%
	\begin{subfigure}{.5\textwidth}
		\centering
		\includegraphics[width=.9\textwidth, height=.25\textheight]
		    {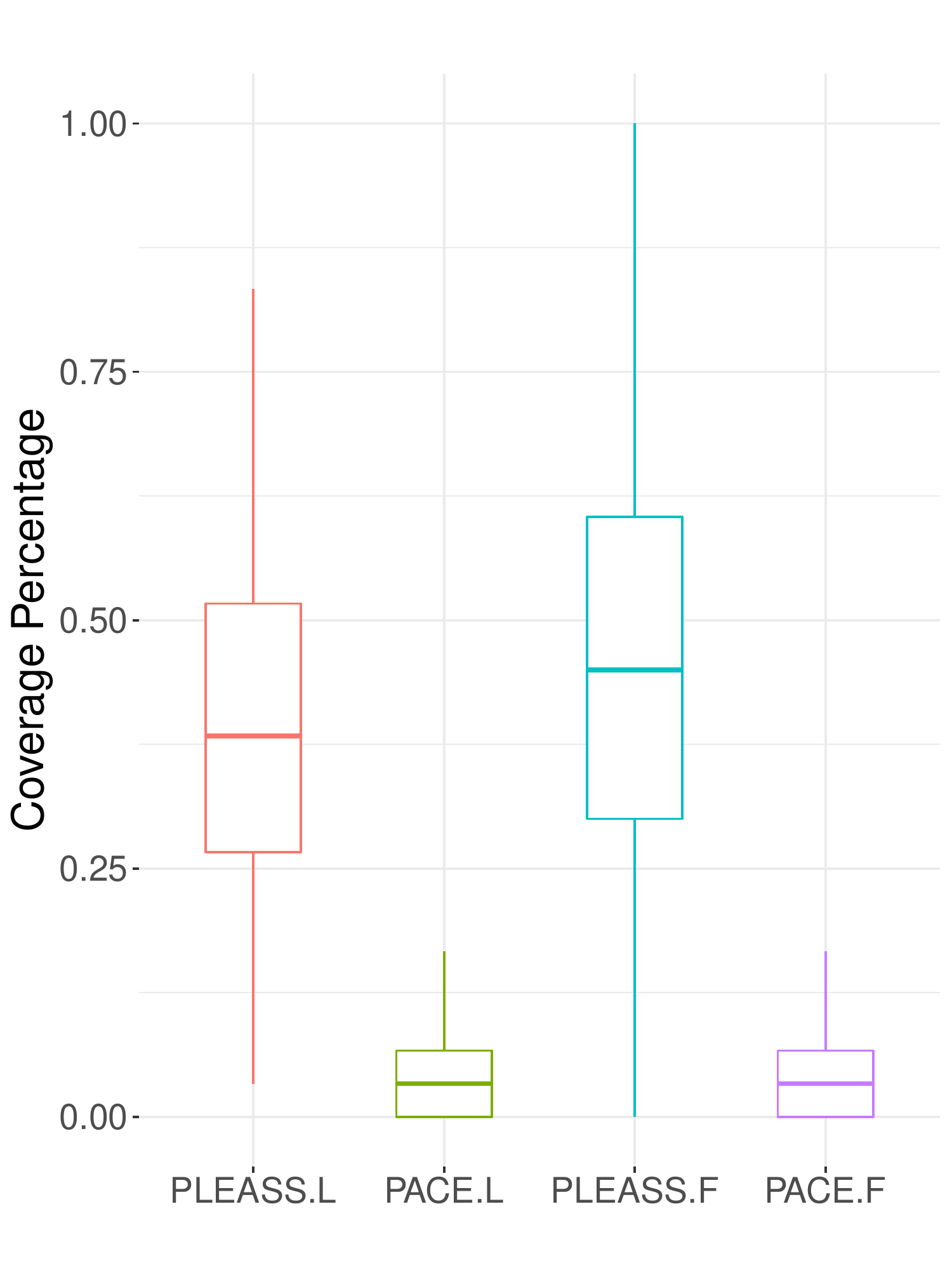}
		\caption{\footnotesize $\beta$ at \eqref{eq:beta.simu.2} with ${\rm SNR}_x={\rm SNR}_y=10$}
	\end{subfigure}
	\begin{subfigure}{.5\textwidth}
		\centering
		\includegraphics[width=.9\textwidth, height=.25\textheight]
		    {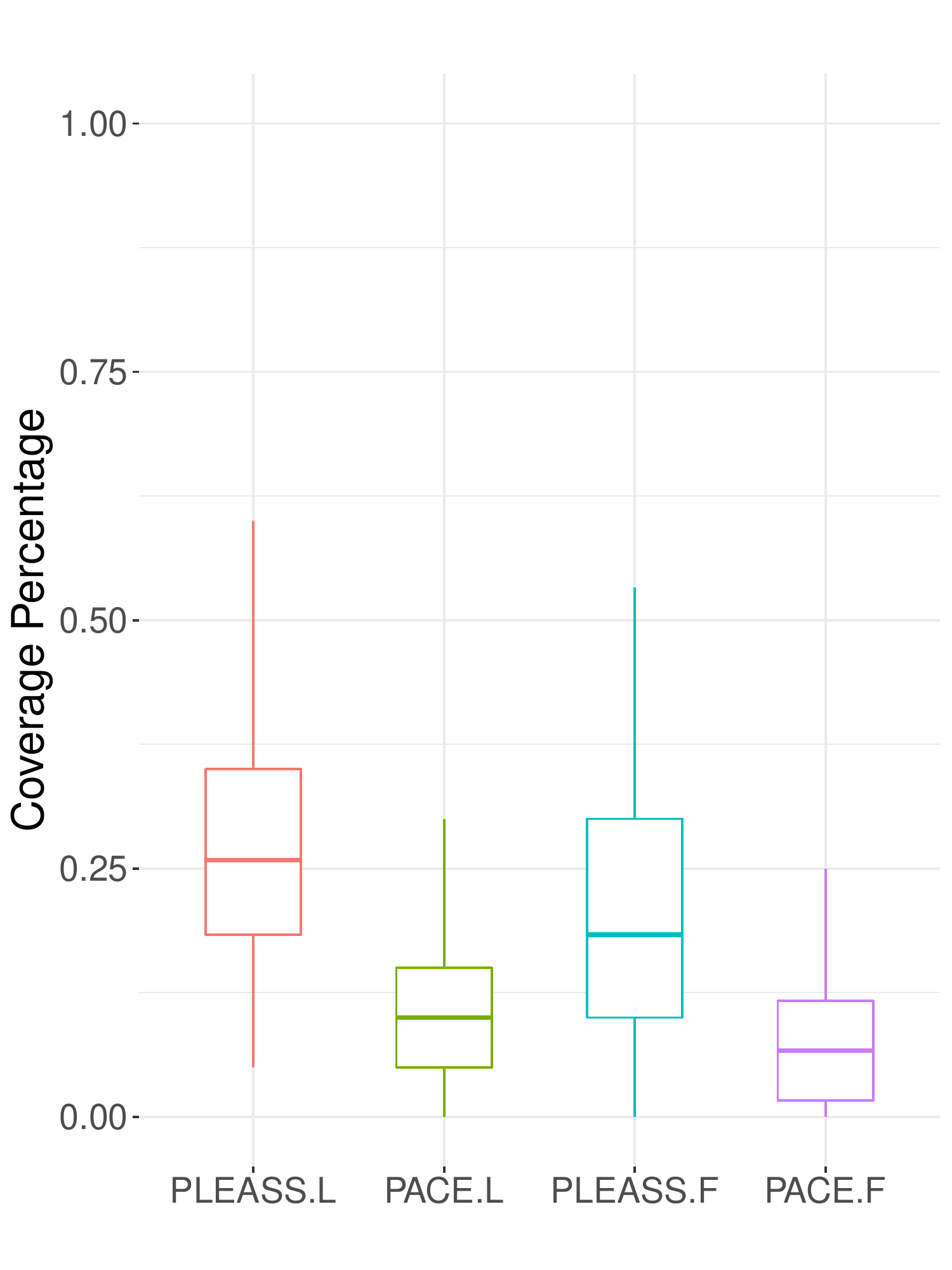}
		\caption{\footnotesize $\beta$ at \eqref{eq:beta.simu.3} with ${\rm SNR}_x={\rm SNR}_y=3$}
	\end{subfigure}%
	\begin{subfigure}{.5\textwidth}
		\centering
		\includegraphics[width=.9\textwidth, height=.25\textheight]
		    {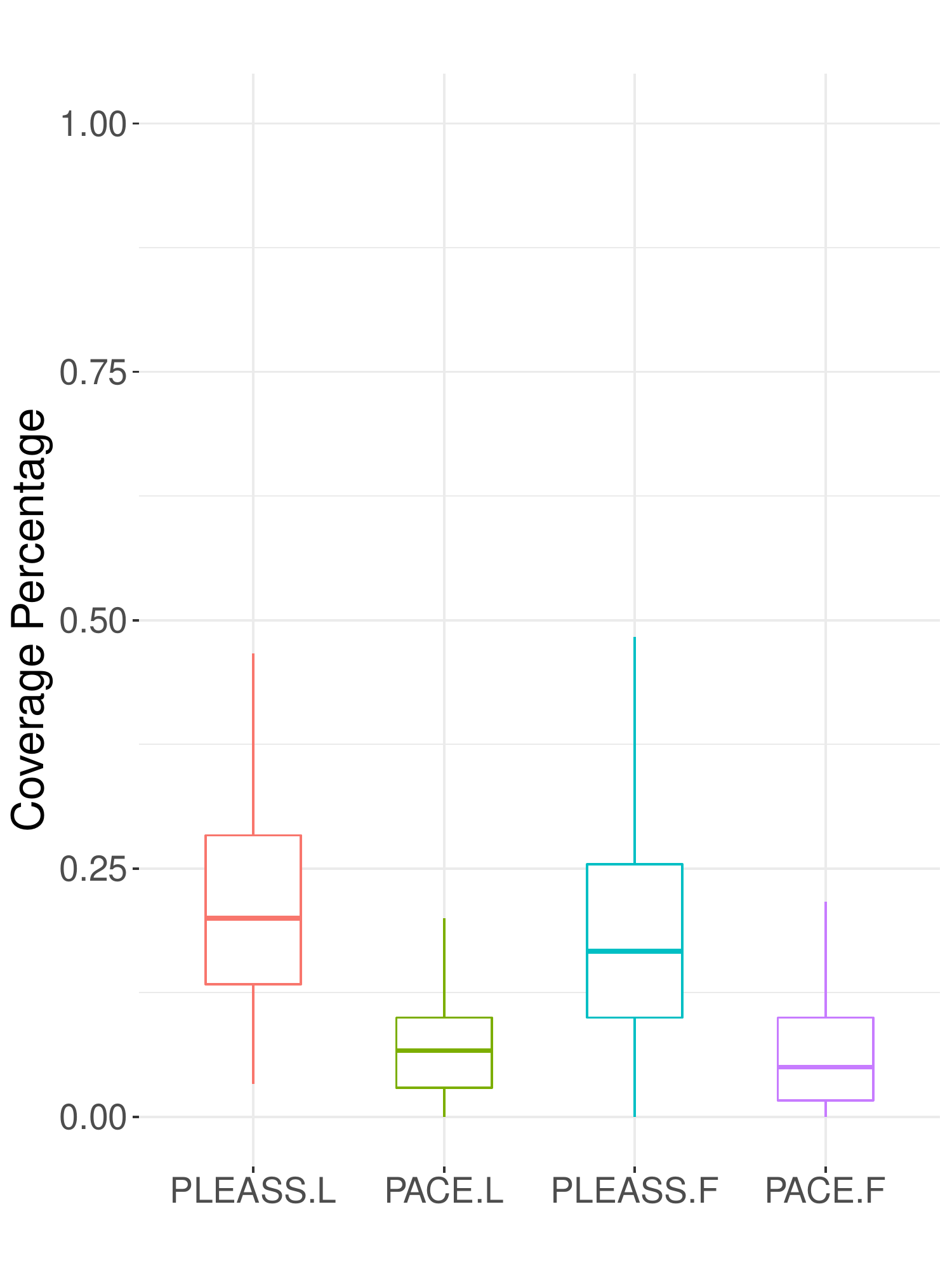}
		\caption{\footnotesize $\beta$ at \eqref{eq:beta.simu.3} with ${\rm SNR}_x={\rm SNR}_y=10$}
	\end{subfigure}
	\caption{\small
		Boxplots of CP values under different simulated settings:
		SNR value varies with column,
        while rows differ in $\beta$.
		In each subfigure,
		from left to right,
		the four boxes respectively correspond to PLEASS.L, PLEASS.F, PACE.L, and PACE.F. 
	} 
	\label{fig:cover.n=300}
\end{figure}

As seen in Figure \ref{fig:cover.n=300}, CP boxes belonging to PACE stayed at a low level, 
especially for scenarios \eqref{eq:beta.simu.2} and \eqref{eq:beta.simu.3}.
This phenomenon was consistent with the performance of PACE 
in estimating $\beta$ under corresponding settings.
In contrast,
PLEASS was more likely to output CP values closer to the stated level (95\%),
though we must admit that their coverages were still far from satisfactory 
especially with $\beta$ at \eqref{eq:beta.simu.2} and \eqref{eq:beta.simu.3}.
Looking into those $\eta(X_i)$ not covered by $\widehat{\rm CI}_i$,
we noticed that 
the majority of missed $\eta(X_i)$ fell at the right-hand side of $\widehat{\rm CI}_i$.
A possible cause of miss-covering lay in the bias of estimates for means of $X$ and $Y$;
a larger size of training set might be helpful.
Moreover,
although SNR had little impact on estimation
(compare the two columns of Figure \ref{fig:reisee.n=300}),
CP values appeared to be higher with a smaller SNR 
(compare the two columns of Figure \ref{fig:cover.n=300}):
$\eta(X_i)$ did not vary with SNR, 
while larger $\hat{\sigma}_e^2$ 
(resulting from smaller SNR)
widened $\widehat{\rm CI}_i$ and enhanced the coverage of $\widehat{\rm CI}_i$.

\subsection{Application to real datasets}

We then applied PLEASS to two real datasets.
The first came from a clinical trial,
whereas the second was densely observed but recorded with missing values.

\begin{description}
    \item[Primary Biliary Cholangitis (PBC) data.]
        Initially shared by \citet{TherneauGrambsch2000}, the
        dataset \texttt{pbcseq} 
        (accessible in \textsf{R}-package \texttt{survival}, \citealp{R-survival}) 
        was collected in a randomized placebo controlled trial of D-penicillamine, 
        a drug designed for PBC.
        PBC is a chronic disease in which bile ducts in the liver are slowly destroyed;
        it can cause more serious problems including liver cancer.
        All the participants of the clinical trial 
        were supposed to revisit the Mayo Clinic at six months, one year, and annually 
        after their initial diagnoses.
        However,
        participants' actual visiting frequencies, 
        with an average of 6,
        varied among patients,
        ranging from 1 to 16. This led to sparse and irregular recordings.
        Although the clinical trial lasted from January 1974 through May 1984,
        to satisfy the prerequisites of LLS,
        we  included only measurements within the first 3000 days
        and kicked out subjects with fewer than two visits.
        At each visit,
        several body indexes were measured and recorded,
        including alkaline phosphatase (ALP, in U/L)
        and aspartate aminotransferase (AST, in U/mL),
        both evaluating the health condition of liver. 
        We focused on this pair of indicators
        and attempted to model a linear connection between
        participants' latest AST measurements (response) 
        and their ALP profiles (functional predictor).
    \item[Diffusion tensor imaging (DTI) data.]
        Fractional anisotropy (FA)
        is measured at a specific spot in the white matter in the brain,
        ranging from 0 to 1 and
        reflecting the fiber density, axonal diameter and myelination.
        Along a tract of interest,
        these values forms an FA tract profile.
        Collected at the Johns Hopkins University 
        and Kennedy-Krieger Institute,
        dataset \texttt{DTI} 
        (in \textsf{R}-package \texttt{refund}, \citealp{R-refund})
        contained FA tract profiles for the corpus callosum
        measured via DTI.
        Though these trajectories were not sparsely measured,
        a few of them suffered from missing records
        which could be handled by PACE and PLEASS 
        without presmoothing or interpolation.
        We investigated the relationship between 
        participants' FA tract profiles (predictor)
        and their Paced Auditory Serial Addition Test (PASAT) scores (response),
        where PASAT is a traditional tool assessing impairments in the cognitive functioning
        and is extensively used in the diagnosis of,
        e.g., the multiple sclerosis \citep{Tombaugh2006}.
\end{description}

For each dataset,
200 random splits were carried out.
In each split,
(roughly) 80\% of the subjects were put into the training set
while the remainder were kept for testing. 
After predicting responses for the test set,
we generated values of relative mean squared prediction error (ReMSPE), viz.
$$
    {\rm ReMSPE}=\frac{\sum_{i\in I_{{\rm test}}}(Y_{i}-\widehat{Y}_{i})^2}
        {\sum_{i\in I_{{\rm test}}}(Y_{i}-\bar{Y}_{{\rm train}})^2},
$$
for each approach and each split.
Here $\widehat{Y}_i$
is the prediction for the $i$th response,
and $\bar{Y}_{{\rm train}}$ is the mean training response.
ReMSPE values for PBC and DTI cases were collected 
and summarized into boxes;
see Figure \ref{fig:remspe}.
In both applications,
PLEASS was demonstrated to be more competitive than PACE,
enjoying lower medians and smaller dispersion of ReMSPE values.
Analogous to the previous simulation study,
Figure \ref{fig:remspe} shows that
FACE performs close to LLS
when used with PLEASS.
As a result,
PLEASS.F might be preferred if a low time consumption were particularly appreciated.


\begin{figure}[t!]
	\centering
	\begin{subfigure}{.5\textwidth}
		\centering
		\includegraphics[width=\textwidth, height=.3\textheight]
		    {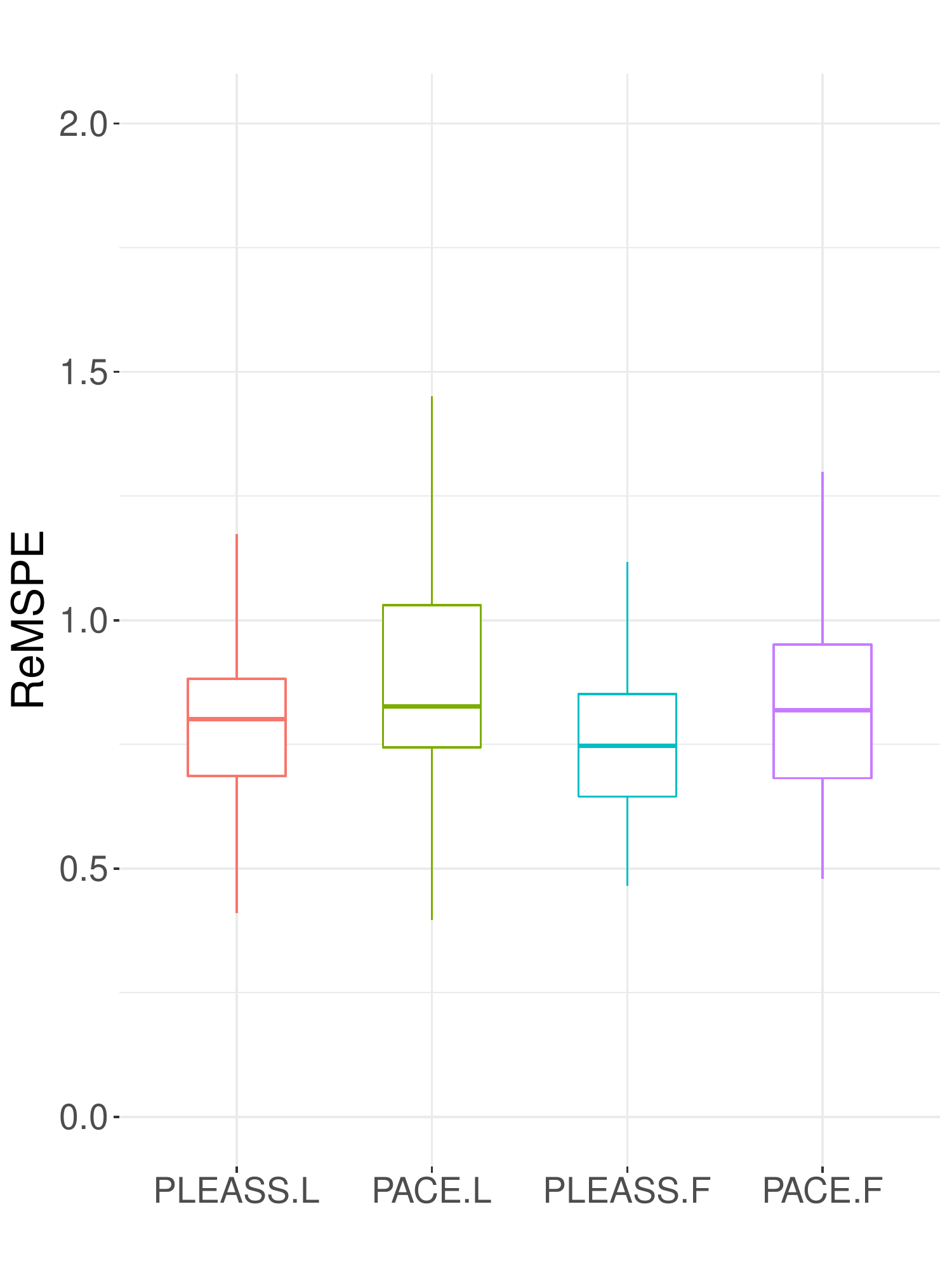}
		\caption{\footnotesize ALP (predictor) vs. lastest AST (response)}
	\end{subfigure}%
	\begin{subfigure}{.5\textwidth}
		\centering
		\includegraphics[width=\textwidth, height=.3\textheight]
		    {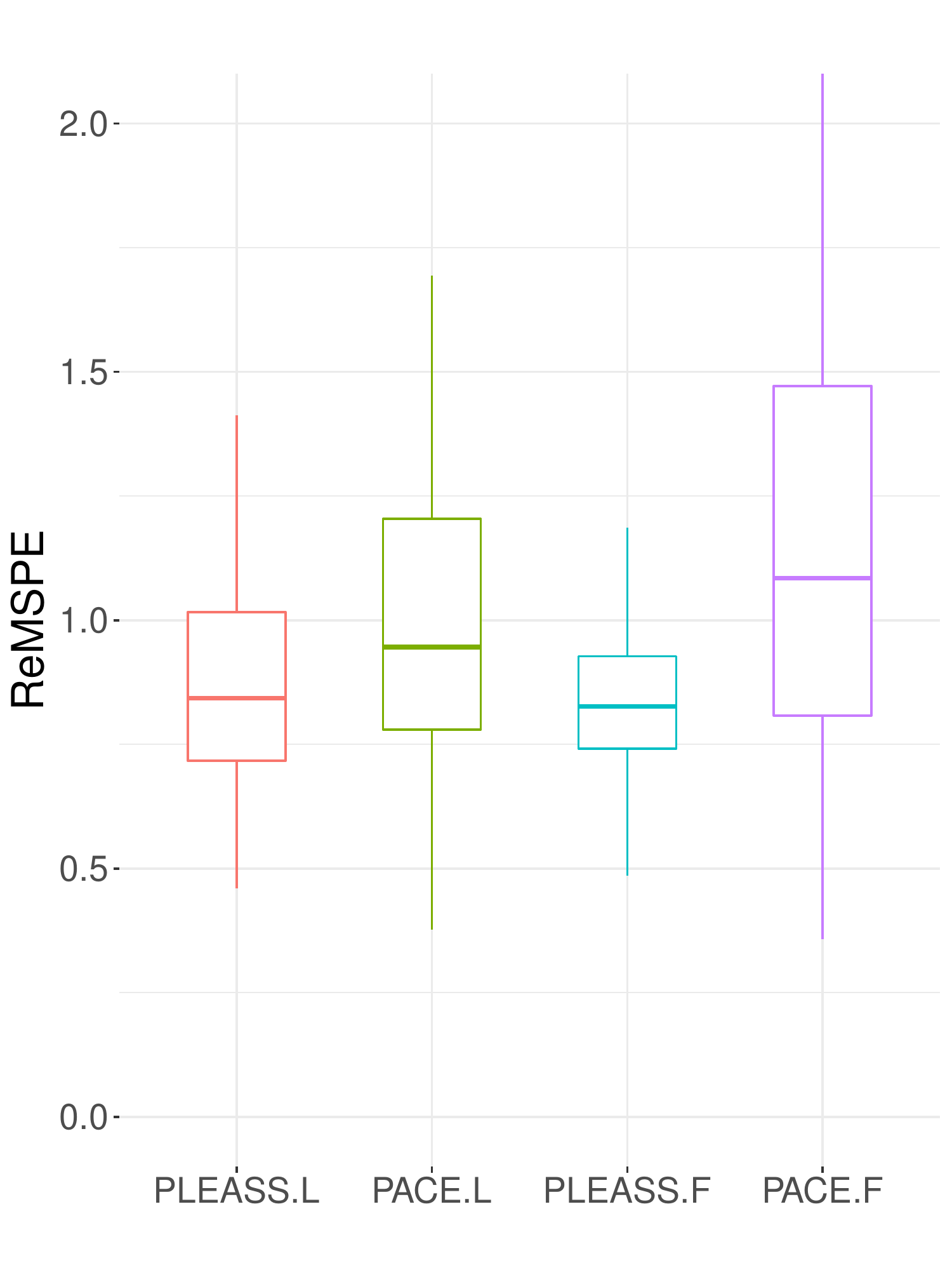}
		\caption{\footnotesize FA tract profile (predictor) vs. PASAT (response)}
	\end{subfigure}%
	\caption{\small 
		ReMSPE boxplots for real data analysis.
		In each subfigure,
		from left to right,
		the four boxes respectively correspond to PLEASS.L, PLEASS.F, PACE.L, and PACE.F. 
	} 
	\label{fig:remspe}
\end{figure}

\section{Conclusion and discussion}\label{sec:conclusion}

The main contributions of our work are summarized as follows.
First,
we propose PLEASS, 
a variant of FPLS modified for scenarios
in which functional predictors are observed sparsely and with contamination.
Second,
not only do we give
estimators and predictions via PLEASS, but
also we construct CIs for mean responses.
Allowing $p$ to diverge as a function of $n$,
our theoretical work
is among the few asymptotic results available for FPLS and its variants.
Third,
we numerically reveal the advantage of PLEASS 
in specific scenarios.


Estimators for the variance and covariance structure
may be further revised.
If trajectories are no longer independent of each other 
(e.g., spatially correlated curves representing distinct cities),
it is more reasonable to employ
the proposal of \citet{PaulPeng2011},
viz. a weighted version of LLS.
Another concern is the nature of missingness:
the mode of sparsity here is assumed independent across trajectories and measurement errors.
Even if the missingness is permitted to be correlated with the values of unobserved time points, 
we speculate that,
after necessary modifications, 
estimates of the ML type would be still promising in estimating components of covariance structure.

In contrast with PLEASS,
which is for now concentrated on SoFR only,
PACE is more versatile:
it is applicable even to function-on-function regression 
(FoFR, with response and predictor both functional) and 
is capable as well of recovering predictor trajectories.
Merging PLEASS into the framework of \cite{Zhou2020},
we may adapt it to FoFR with sparsely/densely observed functional predictors/responses.
Moreover the application of PLEASS is not limited to linear models,
since it is practicable to 
embed FPLS techniques into the iteratively reweighted least squares for maximizing likelihood
\citep{Marx1996};
\cite{Albaqshi2017} and \cite{WangIbrahimZhu2020}
successfully applied this idea to 
functional logistic regression and functional joint modeling, respectively.

\section*{Acknowledgment}

Special thanks go to Professors Ling Zhou and Huazhen Lin
(both serving for the Southwestern University of Finance and Economics, China)
and Professor Hua Liang
(George Washington University, United States)
for the generous sharing of their source codes.
The authors' work is financially supported by
the Natural Sciences and Engineering Research Council of Canada (NSERC).

\appendix

\section{Local linear smoother}\label{appendix:LLS}

Let $\kappa=\kappa(\cdot)$ be a function on $\mathbb{R}$ satisfying 
\ref{cond:kernel.1}--\ref{cond:kernel.3} in Appendix \ref{appendix:technical};
examples include the symmetric Beta family
\citep[][Eq.~2.5]{FanGijbels1996}
which has the Epanechnikov kernel 
$\kappa(t)=.75(1-t^2)\mathbbm{1}(|t|\leq 1)$ as a special case.
LLS actually falls into the framework of weighted least squares (WLS)
\citep[][pp.~58--59]{FanGijbels1996}.
Given integers $M$ and $m$ 
(with values specified in the following cases \ref{LLS:mu.x}--\ref{LLS:sigma.2.x}) and
matrices $\mathbf{1}_M$ (the $M$-vector of ones), 
$\bm{u}$ (an $M$-vector), 
$\mathbf{T}$ (an $M\times m$ matrix)
and $\mathbf{W}$ (an $M\times M$ non-negative definite matrix),
one solves
\[
    \min_{a_0,\bm{a}}(\bm{u}-a_0\mathbf{1}_M-\mathbf{T}\bm{a})^\top
    \mathbf{W}(\bm{u}-a_0\mathbf{1}_M-\mathbf{T}\bm{a})
\]
for a scalar $a_0\in\mathbb{R}$ and an $m$-vector $\bm{a}=[a_1,\ldots, a_m]^\top$.
In fact, LLS  only uses the WLS solution for $a_0$,
namely,
\begin{align}
    \hat{a}_0
    &=(
        \mathbf{1}_M^\top\mathbf{W}^{1/2}\mathbf{P}_{\mathbf{W}^{1/2}\mathbf{T}}^{\perp}
        \mathbf{W}^{1/2}\mathbf{1}_M
    )^+
    \mathbf{1}_M^\top\mathbf{W}^{1/2}\mathbf{P}_{\mathbf{W}^{1/2}\mathbf{T}}^{\perp}
    \mathbf{W}^{1/2}\bm{u}
    \notag\\
    &=[
        \mathbf{1}_M^\top
        \{\mathbf{W}-
            \mathbf{W}\mathbf{T}
            (\mathbf{T}^\top\mathbf{W}\mathbf{T})^+
            \mathbf{T}^\top\mathbf{W}
        \}
        \mathbf{1}_M
    ]^+
    \mathbf{1}_M^\top
    \{\mathbf{W}-
        \mathbf{W}\mathbf{T}
        (\mathbf{T}^\top\mathbf{W}\mathbf{T})^+
        \mathbf{T}^\top\mathbf{W}
    \}
    \bm{u}
    \label{eq:a0.hat}
\end{align}
in which  the Moore-Penrose generalized inverse is denoted by ``$+$''
and 
$
    \mathbf{P}_{\mathbf{W}^{1/2}\mathbf{T}}^{\perp}
    =\mathbf{I}-
        \mathbf{W}^{1/2}\mathbf{T}
        (\mathbf{T}^\top\mathbf{W}\mathbf{T})^+
        \mathbf{T}^\top\mathbf{W}^{1/2}.
$
In particular,
four different combinations of $\bm{u}$, $\mathbf{T}$ and $\mathbf{W}$ yield
estimates of the four targets of interest,
$\mu_X$, $v_C$, $v_A$, and $\tilde{v}$, as follows:
\begin{enumerate}[label=(\roman*)] 
    \item\label{LLS:mu.x}
        Given $t\in\mathbb{T}$,
        estimate $\mu_X(t)$ by $\hat{\mu}_X(t)=\hat{a}_0$ from \eqref{eq:a0.hat}
        with $\sum_{1\leq i\leq n}L_i$-vectors
        \[
            \bm{u}=
            \left[
                \widetilde{X}_1(T_{11})
                ,\ldots, 
                \widetilde{X}_1(T_{1L_1})
                ,\ldots, 
                \widetilde{X}_n(T_{n1})
                ,\ldots, 
                \widetilde{X}_n(T_{n L_n})
            \right]^\top
        \]
        and
        $
            \mathbf{T}=
            \left[
                t-T_{11}, \ldots, t-T_{1L_1}, \ldots, t-T_{n1}, \ldots, t-T_{n L_n}
            \right]^\top
        $
        and $\sum_{1\leq i\leq n}L_i\times \sum_{1\leq i\leq n}L_i$ matrix
        \[
            \mathbf{W}=\text{diag}
            \left\{
                \kappa\left(\frac{t-T_{11}}{h_{\mu}}\right), 
                \ldots, 
                \kappa\left(\frac{t-T_{1L_1}}{h_{\mu}}\right), 
                \ldots, 
                \kappa\left(\frac{t-T_{n1}}{h_{\mu}}\right), 
                \ldots, 
                \kappa\left(\frac{t-T_{nL_n}}{h_{\mu}}\right) 
            \right\}.
        \]
    \item 
        Write $\bar{Y}=n^{-1}\sum_{1\leq i\leq n}Y_i$.
        For arbitrary $t\in\mathbb{T}$,
        $\hat{v}_C(t)=\hat{a}_0-\bar{Y}\cdot\hat{\mu}_X(t)$,
        where $\hat{a}_0$ follows \eqref{eq:a0.hat}
        with $\sum_{1\leq i\leq n}L_i$-vectors
        \begin{multline*}
            \bm{u}=
            \Big[
                \{\widetilde{X}_1(T_{11})-\hat{\mu}_X(T_{11})\}(Y_1-\bar{Y}),
                \ldots,
                \{\widetilde{X}_1(T_{1L_1})-\hat{\mu}_X(T_{1L_1})\}(Y_1-\bar{Y}),
                \ldots, 
                \\
                \{\widetilde{X}_n(T_{n1})-\hat{\mu}_X(T_{n1})\}(Y_n-\bar{Y}),
                \ldots, 
                \{\widetilde{X}_n(T_{n L_n})-\hat{\mu}_X(T_{n L_n})\}
                    (Y_n-\bar{Y})
            \Big]^\top
        \end{multline*}
        and
        $
            \mathbf{T}=
            \left[
                t-T_{11}, \ldots, t-T_{1L_1}, \ldots, t-T_{n1}, \ldots, t-T_{n L_n}
            \right]^\top
        $
        as well as $\sum_{1\leq i\leq n}L_i\times \sum_{1\leq i\leq n}L_i$ matrix
        \[
            \mathbf{W}=\text{diag}
            \left\{
                \kappa\left(\frac{t-T_{11}}{h_C}\right), 
                \ldots, 
                \kappa\left(\frac{t-T_{1L_1}}{h_C}\right), 
                \ldots, 
                \kappa\left(\frac{t-T_{n1}}{h_C}\right), 
                \ldots, 
                \kappa\left(\frac{t-T_{nL_n}}{h_C}\right) 
            \right\}.
        \]
    \item 
        Fix $s,t\in\mathbb{T}$. 
        Then $\hat{v}_A(s,t)=\hat{a}_0-\hat{\mu}_X(s)\hat{\mu}_X(t)$,
        where $\hat{a}_0$ is fitted as \eqref{eq:a0.hat}
        with $\sum_{1\leq i\leq n}L_i(L_i-1)$-vector
        \begin{multline*}
            \bm{u}=
            \Big[
                \ldots,
                \widetilde{X}_i(T_{i\ell})\widetilde{X}_i(T_{i1}),
                \ldots,
                \widetilde{X}_i(T_{i\ell})\widetilde{X}_i(T_{i,\ell-1}),
                \\
                \widetilde{X}_i(T_{i\ell})\widetilde{X}_i(T_{i,\ell+1}),
                \ldots,
                \widetilde{X}_i(T_{i\ell})\widetilde{X}_i(T_{i L_i})
                \ldots
            \Big]^\top,  
        \end{multline*}
        $\sum_{1\leq i\leq n}L_i(L_i-1)\times 2$ matrix
        \[
            \mathbf{T}=
            \left[\begin{array}{cccccccc}
                \ldots &s-T_{i\ell} &\cdots &s-T_{i\ell} &s-T_{i\ell} &\cdots &s-T_{i\ell} &\cdots \\
                \cdots &t-T_{i1} &\cdots &t-T_{i,\ell-1} &t-T_{i,\ell+1} &\cdots &t-T_{i L_i} &\cdots
            \end{array}\right]^\top
        \]
        and $\sum_{1\leq i\leq n}L_i(L_i-1)\times\sum_{1\leq i\leq n}L_i(L_i-1)$ matrix
        \begin{multline*}
            \mathbf{W}=\text{diag}
            \Bigg\{
                \ldots, 
            \Bigg.
            \Bigg.
                \kappa\left(\frac{s-T_{i\ell}}{h_A}\right)
                    \kappa\left(\frac{t-T_{i1}}{h_A}\right), 
                \ldots, 
                \kappa\left(\frac{s-T_{i\ell}}{h_A}\right)
                    \kappa\left(\frac{t-T_{i,\ell-1}}{h_A}\right),
            \Bigg.\\
            \Bigg.
                \kappa\left(\frac{s-T_{i\ell}}{h_A}\right)
                    \kappa\left(\frac{t-T_{i,\ell+1}}{h_A}\right),
                \ldots, 
                \kappa\left(\frac{s-T_{i\ell}}{h_A}\right)
                    \kappa\left(\frac{t-T_{iL_i}}{h_A}\right), 
                \ldots
            \Bigg\}.    
        \end{multline*}
    \item\label{LLS:sigma.2.x}
        Rotate the two-tuple $(T_{i\ell_1}, T_{i\ell_2})$ to become
        $$
            \left[
                \begin{array}{c}
                    T_{i\ell_1}^\#  \\
                    T_{i\ell_2}^\#
                \end{array}
            \right]
            =
            \left[
                \begin{array}{cc}
                    \sqrt{2}/2  &\sqrt{2}/2  \\
                    -\sqrt{2}/2 &\sqrt{2}/2 
                \end{array}
            \right]
            \left[
                \begin{array}{c}
                    T_{i\ell_1}  \\
                    T_{i\ell_2}
                \end{array}
            \right].
        $$
        For arbitrarily fixed $t\in\mathbb{T}$,
        $\tilde{v}(t)=\hat{a}_0-\hat{\mu}_X^2(t)$,
        where $\hat{a}_0$ follows \eqref{eq:a0.hat}
        with $\sum_{1\leq i\leq n}L_i$-vector
        $$
            \bm{u}=
            \left[
                \widetilde{X}_1^2(T_{11}^\#)
                \ldots 
                \widetilde{X}_1^2(T_{1L_1}^\#)
                \ldots 
                \widetilde{X}_n^2(T_{n1}^\#)
                \ldots 
                \widetilde{X}_n^2(T_{n L_n}^\#)
            \right]^\top, 
        $$
        $\sum_{1\leq i\leq n}L_i\times 2$ matrix
        $$
            \mathbf{T}=
            \left[\begin{array}{ccccccc}
                -T_{11}^\# &\cdots &-T_{1L_i}^\# &\cdots &-T_{n1}^\# &\cdots &-T_{nL_i}^\#
                \\
                t/\sqrt{2}-T_{11}^\# &\cdots &t/\sqrt{2}-T_{1L_1}^\# &\cdots 
                    &t/\sqrt{2}-T_{n1}^\# &\cdots &t/\sqrt{2}-T_{nL_n}^\#
            \end{array}\right]^\top
        $$
        and $\sum_{1\leq i\leq n}L_i\times\sum_{1\leq i\leq n}L_i$ matrix
        \begin{multline*}
            \mathbf{W}=\text{diag}
            \Bigg\{
                \kappa\left(\frac{t/\sqrt{2}-T_{11}^\#}{h_{\sigma}}\right), 
                \ldots, 
                \kappa\left(\frac{t/\sqrt{2}-T_{1L_1}^\#}{h_{\sigma}}\right), 
                \ldots,
                \\
                \kappa\left(\frac{t/\sqrt{2}-T_{n1}^\#}{h_{\sigma}}\right), 
                \ldots, 
                \kappa\left(\frac{t/\sqrt{2}-T_{nL_n}^\#}{h_{\sigma}}\right) 
            \Bigg\}.
        \end{multline*}
        Then, 
        as suggested in \citet{YaoMullerWang2005a, YaoMullerWang2005b},
        $\sigma_e^2$ is estimated by averaging 
        $\tilde{v}(t)-\hat{v}_A(t,t)$ over a truncated version of $\mathbb{T}=[0,1]$, 
        say
        $
            \mathbb{T}_1=[1/4, 3/4],
        $
        i.e.,
        $
            \hat{\sigma}_e^2
            =2\int_{\mathbb{T}_1}\{\tilde{v}(t)-\hat{v}_A(t,t)\}\dd t.
        $
\end{enumerate}

Bandwidths $h_{\mu}$, $h_C$, $h_A$ and $h_{\sigma}$ 
are all tuned through GCV,
i.e., they are chosen to minimize
\begin{multline*}
    \frac{\bm{u}^\top\mathbf{W}^{1/2}
        \mathbf{P}_{\mathbf{W}^{1/2}[\bm{1}_M,\mathbf{T}]}^{\perp}
        \mathbf{W}^{1/2}\bm{u}}
        {\{\sum_{i=1}^nL_i-\text{tr}(\mathbf{P}_{\mathbf{W}^{1/2}[\bm{1}_M,\mathbf{T}]})\}^2}
    \\
    =
    \frac{\bm{u}^\top\{
            \mathbf{W}-
                \mathbf{W}[\bm{1}_M,\mathbf{T}]
                ([\bm{1}_M,\mathbf{T}]^\top\mathbf{W}[\bm{1}_M,\mathbf{T}])^+
                [\bm{1}_M,\mathbf{T}]^\top\mathbf{W}
        \}\bm{u}}
        {\{\sum_{i=1}^nL_i-\text{tr}(\mathbf{P}_{\mathbf{W}^{1/2}[\bm{1}_M,\mathbf{T}]})\}^2}
\end{multline*}
with their respective corresponding $\bm{u}$, $\mathbf{T}$ and $\mathbf{W}$.
\citet[][Eq.~4.3]{FanGijbels1996} suggested a rule of thumb
which is a good starting point in determining candidate pools for bandwidths.

\section{Technical details: assumptions, lemmas, and proofs}\label{appendix:technical}

Recall the setting of sparsity and error-in-variable:
for the $i$th subject,
given the number of observation times $L_i\stackrel{\iid}{\sim} L$ (satisfying \ref{cond:rv.1}), 
noisy trajectories $\widetilde{X}_i$ are observed only at time points 
$T_{i\ell}\stackrel{\iid}{\sim} T$
such that $\widetilde{X}_i(T_{i\ell})=X_i(T_{i\ell})+\sigma_e e_{i\ell}$,
$\ell = 1,\ldots, L_i$,
where $X_i$ ($\stackrel{\iid}{\sim} X$) are underlying functional predictors,
and measurement errors $e_{i\ell}$ are iid as $e$.
The independence is imposed as in \ref{cond:rv.2},
with requirement \ref{cond:rv.3} on moments.
Write $f_1$, $f_2$, and $f_3$ 
as the respective density functions of 
$T_{i\ell}$, $(T_{i\ell}, \widetilde{X}_i(T_{i\ell}))$,
and $(T_{i\ell_1}, T_{i\ell_2}, \widetilde{X}_i(T_{i\ell_1}), \widetilde{X}_i(T_{i\ell_2}))$.
These density functions are expected be somehow smooth,
as demanded by \ref{cond:rv.4}--\ref{cond:rv.6}.
Without the continuity assumed in \ref{cond:auto.cov}
it would be logically impossible to recover functions $\mu_X$, $v_A$, and $v_C$ by LLS.
Hyper-parameters of LLS are restricted by conditions \ref{cond:kernel.1}--\ref{cond:bandwidth.4}:
the first three exclude certain commonly used kernels (e.g., the Gaussian kernel) 
but admit at least the symmetric Beta family
\citep[][Eq.~2.5]{FanGijbels1996};
the remaining four of \ref{cond:kernel.1}--\ref{cond:bandwidth.4} comprise the cornerstone of the consistency of LLS recovery,
making sure that bandwidths converge at proper rates (as $n$ diverges).
Condition \ref{cond:tau.p.L2} (resp. \ref{cond:tau.p.sup})
implies the convergence rate of PLEASS coefficient estimator
in the $L^2$ (resp. $L^{\infty}$) sense.
Importantly conditions \ref{cond:tau.p.L2} and \ref{cond:tau.p.sup} restrict the divergence rate of $p$ ($=p(n)$)
to be at most $O(n^{1/2}h_A^2)$ if $\|v_A\|_2<1$
and even slower once $\|v_A\|_2\geq 1$.
This restriction on $p$ is pretty close to the setting of \citet[][Theorem~5.3]{DelaigleHall2012b}
who limited the discussion to cases of $\|v_A\|_2<1$ only
(which is reachable by changing the scale on which $X_i$ is measured).
In detail our assumptions are:
\begin{enumerate}[label=(C\arabic*)]
    \item\label{cond:rv.1}
        $\E(L)<\infty$ and $\Pr(L\geq 2)>0$.
    \item\label{cond:rv.2}
        $X_i,T_{i1},\ldots,T_{iL_i}$ and $e_{i1},\ldots,e_{iL_i}$
        are all independent of $L_i$ in the sense that, given $L_i=\ell$, $X_i,T_{i1},\ldots,T_{i\ell}$ and $e_{i1},\ldots,e_{i\ell}$ are all independent and the conditional laws are those of $X$, $T$, and $e$.
    \item\label{cond:rv.3}
        $\E\{X(T)-\mu_X(T)+ \sigma_e e\}^4<\infty$.
    \item\label{cond:rv.4}
        $(\dd^2/\dd t^2)f_1$ exists and is continuous on $\mathbb{T}$.
        The support of $f_1$ is $\mathbb{T}$.
    \item\label{cond:rv.5}
        $(\dd^2/\dd t^2)f_2$ exists and is uniformly continuous on $\mathbb{T}\times\mathbb{R}$.
    \item\label{cond:rv.6}
        $\{\dd^2/(\dd t_1\dd t_2)\}f_3$,
        $(\dd^2/\dd t_1^2)f_3$
        and $(\dd^2/\dd t_2^2)f_3$
        all exist and are uniformly continuous on
        $\mathbb{T}^2\times\mathbb{R}^2$.
    \item\label{cond:auto.cov}
        $\mu_X$ and $v_C$ are both continuous on $\mathbb{T}$,
        and $v_A$ is continuous on $\mathbb{T}^2$.
        Hence $\|\mu\|_{\infty}$, $\|v_C\|_{\infty}$, and $\|v_A\|_{\infty}$ are all finite.
    \item\label{cond:kernel.1}
        The kernel function $\kappa$ in Appendix \ref{appendix:LLS} is 
        symmetric (w.r.t. the $y$ axis) and nonnegative 
        on $\mathbb{R}$ such that
        $\int_{\mathbb{R}}\kappa(t)\dd t=1$.
    \item\label{cond:kernel.2}
        The kernel function $\kappa$ is compactly supported,
        i.e., ${\rm supp}(\kappa)$ is bounded.
    \item\label{cond:kernel.3}
        The Fourier transform of $\kappa$
        is absolutely integrable,
        i.e., 
        $\int_{\mathbb{R}}|\int_{\mathbb{R}}e^{-ist}\kappa(s)\dd s|\dd t<\infty$.
        An implication is 
        the continuity of $\kappa$ (almost everywhere) within ${\rm supp}(\kappa)$.
        Holding \ref{cond:kernel.2} too,
        we automatically have two moment conditions on $\kappa$:
        $\int_{\mathbb{R}}t^2\kappa(t)\dd t<\infty$
        and $\int_{\mathbb{R}}\kappa^2(t)\dd t<\infty$.
    \item\label{cond:bandwidth.1}
        $h_{\mu}\to 0$, 
        $nh_{\mu}^4\to\infty$, and
        $nh_{\mu}^6=O(1)$,
        as $n\to\infty$. 
        Hence $\zeta_{\mu}=n^{-1/2}h_{\mu}^{-1}=o(1)$.
    \item\label{cond:bandwidth.2}
        $h_A\to 0$, 
        $nh_A^6\to\infty$, and
        $nh_A^8=O(1)$,
        as $n\to\infty$.
        Hence $\zeta_A=n^{-1/2}h_A^{-2}=o(1)$.
    \item\label{cond:bandwidth.3}
        $h_{\sigma}\to 0$, 
        $nh_{\sigma}^4\to\infty$, and
        $nh_{\sigma}^6=O(1)$,
        as $n\to\infty$.
        Hence $\zeta_{\sigma}=n^{-1/2}(h_A^{-2}+h_{\sigma}^{-1})=o(1)$.
    \item\label{cond:bandwidth.4}
        $h_C\to 0$, 
        $nh_C^4\to\infty$, and
        $nh_C^6=O(1)$,
        as $n\to\infty$.
        Hence $\zeta_C=n^{-1/2}(h_{\mu}^{-1}+h_C^{-1})=o(1)$.
    \item\label{cond:tau.p.L2}
        As $n\to\infty$,
        $p=p(n)=O(\zeta_A^{-1})$.
        Additional requirements on $p$ vary with the magnitude of $\|v_A\|_2$; 
        they also depend on $\tau_p,$
        the smallest eigenvalue of $\mathbf{D}_p$ which is defined at \eqref{eq:D.p}.
        \begin{itemize}
            \item
		        $O\{\tau_p^{-1}p\|v_A\|_2^{2p}\zeta_C\max(1, \tau_p^{-1}p\|v_A\|_2^{2p})\}$
		        and 
		        $O\{\tau_p^{-1}p^2\|v_A\|_2^{2p}\zeta_A\max(1, \tau_p^{-1}p\|v_A\|_2^{2p})\}$
		        are both of order $o(1)$ I
		        if $\|v_A\|_2\geq 1$;
	        \item if $\|v_A\|_2<1$,
		        then
		        $\tau_p^{-2}\max(\zeta_A,\zeta_C)$
		        and
		        $\tau_p^{-1}\max(\zeta_A,\zeta_C)$
		        are both of order $o(1)$.
        \end{itemize}
    \item\label{cond:tau.p.sup}
        Condition \ref{cond:tau.p.L2} holds with the $L^2$-norm $\|\cdot\|_2$ replaced by the infinity norm $\|\cdot\|_{\infty}$.
\end{enumerate}

The first fourteen of the conditions above are inherited from \citet{YaoMullerWang2005a, YaoMullerWang2005b}.
So is Lemma \ref{lemma:converge.1}
which states the convergence rate of LLS estimators.
We then extend \eqref{eq:converge.vc} to a more general version 
(see Lemma \ref{lemma:converge.cxxb.j}).

\begin{Lemma}[\citealp{YaoMullerWang2005a}, Theorem 1 and Corollary 1; 
    \citealp{YaoMullerWang2005b}, Lemma A.1]\label{lemma:converge.1}
    Under assumptions \ref{cond:rv.1}--\ref{cond:bandwidth.4},
    as $n\to\infty$,
    \begin{align}
        \|\hat{\mu}_X-\mu_X\|_{\infty}
        &=O_p(\zeta_{\mu})=o_p(1),
        \notag\\
        \|\hat{v}_A-v_A\|_{\infty}
        &=O_p(\zeta_A)=o_p(1),
        \notag\\
        |\hat{\sigma}_e^2-\sigma_e^2|
        &=O_p(\zeta_{\sigma})=o_p(1),
        \notag\\
        \intertext{and}
        \|\hat{v}_C-v_C\|_{\infty}
        &=O_p(\zeta_C)=o_p(1),\label{eq:converge.vc}
    \end{align}
    where $\zeta_{\mu}$, $\zeta_A$, $\zeta_{\sigma}$ and $\zeta_C$
    are respectively defined as in 
    \ref{cond:bandwidth.1}--\ref{cond:bandwidth.4}.
\end{Lemma}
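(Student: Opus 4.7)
The plan is to follow the strategy of Yao, M\"uller and Wang: each of the four LLS targets is a specific instance of the closed form $\hat{a}_0$ at (\ref{eq:a0.hat}), so the proof reduces to analyzing a ratio of kernel-weighted sums. For every target, the steps are: (i) substitute the model decomposition of the input vector $\bm{u}$ into the numerator and denominator of $\hat{a}_0$, (ii) split the resulting quantity into a deterministic bias piece and a mean-zero stochastic piece, and (iii) control each pointwise before upgrading to uniform convergence over $\mathbb{T}$ (or $\mathbb{T}^2$).

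For $\hat{\mu}_X(t)$, I would write $\widetilde{X}_i(T_{i\ell}) = \mu_X(T_{i\ell}) + \{X_i(T_{i\ell}) - \mu_X(T_{i\ell})\} + \sigma_e e_{i\ell}$ and Taylor-expand $\mu_X$ to second order about $t$. The local-linear scaffold absorbs the linear Taylor term exactly, leaving a bias of order $h_{\mu}^2$ (controlled by the smoothness hypotheses \ref{cond:auto.cov} and \ref{cond:rv.4}) and a stochastic piece whose variance is $O((nh_{\mu})^{-1})$ via the independence structure in \ref{cond:rv.2} and the moment bound \ref{cond:rv.3}. Under \ref{cond:bandwidth.1} the stochastic part dominates and produces the rate $\zeta_{\mu}$. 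The argument for $\hat{v}_A$ is the two-dimensional analogue using the smoothness of $f_3$ from \ref{cond:rv.6}; because $\ell_1 \neq \ell_2$, the cross products $\sigma_e^2 e_{i\ell_1} e_{i\ell_2}$ have mean zero and only inflate the variance to $O((n h_A^2)^{-1})$, delivering $\zeta_A$. For $\hat{v}_C$, substituting $\{\widetilde{X}_i(T_{i\ell}) - \hat{\mu}_X(T_{i\ell})\}(Y_i - \bar{Y})$ introduces an extra $\hat{\mu}_X - \mu_X$ factor whose contribution is $O_p(\zeta_{\mu})$ and is therefore absorbed into $\zeta_C$. For $\hat{\sigma}_e^2$, the rotation $(T_{i\ell_1}^{\#}, T_{i\ell_2}^{\#})$ allows LLS applied to $\widetilde{X}_i^2(T_{i\ell}^{\#})$ to recover $v_A(t,t) + \sigma_e^2$ up to the usual $h_{\sigma}^2$ bias; subtracting $\hat{v}_A(t,t)$ and averaging over the interior set $\mathbb{T}_1$ to avoid boundary effects then gives $\zeta_{\sigma}$.

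The main obstacle is upgrading from pointwise to uniform convergence over $\mathbb{T}$ and $\mathbb{T}^2$. The plan is the standard chaining-type argument: equicontinuity of the kernel-weighted sums follows from the kernel regularity in \ref{cond:kernel.2}--\ref{cond:kernel.3} together with the uniformly bounded partial derivatives of $f_1$, $f_2$, $f_3$ coming from \ref{cond:rv.4}--\ref{cond:rv.6}; I would then cover $\mathbb{T}$ (resp.\ $\mathbb{T}^2$) by $O(h^{-1})$ (resp.\ $O(h^{-2})$) balls, control the maximum over this mesh via a Bernstein-type concentration inequality using the fourth moment \ref{cond:rv.3}, and close the gap by equicontinuity. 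The logarithmic slack introduced by the union bound is absorbed by \ref{cond:bandwidth.1}--\ref{cond:bandwidth.4}, which impose $nh^d \to \infty$ at a sufficient rate and ensure that the deterministic bias remains of smaller order than the stochastic fluctuation throughout.
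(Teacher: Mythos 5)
The paper does not actually prove this lemma: it is imported verbatim from the cited Yao--M\"uller--Wang papers (the text states that the first fourteen conditions, and the lemma itself, are ``inherited'' from those references). Your sketch is a faithful reconstruction of the cited argument in its bias/variance part: the second-order Taylor expansion absorbed by the local-linear scaffold, the $h^2$ bias versus the $n^{-1/2}h^{-1}$ (resp.\ $n^{-1/2}h^{-2}$ for the surface) stochastic term, the bookkeeping that \ref{cond:bandwidth.1}--\ref{cond:bandwidth.4} make the bias subdominant, and the propagation of the $\hat{\mu}_X$ and $\hat{v}_A$ plug-in errors into $\zeta_C$ and $\zeta_{\sigma}$ are all exactly right. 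Where you genuinely diverge is the uniformity step: the cited proofs obtain the sup-norm rates via the Fourier-inversion technique (writing the kernel-weighted sums as integrals of empirical characteristic functions), which is precisely why \ref{cond:kernel.3} demands an absolutely integrable Fourier transform of $\kappa$ --- an assumption your covering-plus-concentration route leaves unused. Your route is viable and would even yield a slightly sharper $\sqrt{\log n/(nh)}$-type fluctuation bound (the logarithm is indeed absorbed since \ref{cond:bandwidth.1}--\ref{cond:bandwidth.2} force $h=O(n^{-1/6})$), but one step needs repair: a Bernstein-type inequality requires bounded or sub-exponential summands, whereas \ref{cond:rv.3} supplies only fourth moments of $X(T)-\mu_X(T)+\sigma_e e$, so you would have to insert a truncation argument (splitting each summand at a level growing polynomially in $n$ and controlling the tail by Markov's inequality) before the union bound can be applied. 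With that addition your argument closes; as written it is a correct outline with one standard but non-optional technical gap.
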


\begin{Lemma}\label{lemma:converge.cxxb.j}
    Assume \ref{cond:auto.cov}--\ref{cond:bandwidth.4}
    and that there is a $C>0$ such that for all $n$  we have $p\in [1,C\zeta_A^{-1}]$.
    Then, for each $\epsilon>0$, there are  positive constants $C_1$ and $C_2$ and an integer $n_0>0$ such that,
    for each $n> n_0$,
    \begin{align*}
        \Pr\left[
            \bigcap_{j=1}^{p}\{
                \|\mathcal{V}_A^j(\beta)-\widehat{\mathcal{V}}_A^j(\beta)\|_2
                \leq C_1\|v_A\|_2^{j-1}\zeta_C
                    + C_2(j-1)\|v_A\|_2^{j-1}\zeta_A
            \}
        \right]&\geq 1-\epsilon,
        \\
        \intertext{and}
        \Pr\left[
            \bigcap_{j=1}^{p}\{
                \|\mathcal{V}_A^j(\beta)-\widehat{\mathcal{V}}_A^j(\beta)\|_{\infty}
                \leq C_1\|v_A\|_{\infty}^{j-1}\zeta_C
                    + C_2(j-1)\|v_A\|_{\infty}^{j-1}\zeta_A
            \}
        \right]&\geq 1-\epsilon.
    \end{align*}
\end{Lemma}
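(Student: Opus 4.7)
The first move is to decode the notation: since $\mathcal{V}_A(\beta)=v_C$, we have $\mathcal{V}_A^j(\beta)=\mathcal{V}_A^{j-1}(v_C)$, and the $j=1$ instance of the claim reduces exactly to \eqref{eq:converge.vc}, which forces the reading $\widehat{\mathcal{V}}_A^j(\beta)=\widehat{\mathcal{V}}_A^{j-1}(\hat v_C)$. Accordingly I would set $a_j=\mathcal{V}_A^{j-1}(v_C)$, $\hat a_j=\widehat{\mathcal{V}}_A^{j-1}(\hat v_C)$ and $\delta_j=\|a_j-\hat a_j\|_2$, and aim for a deterministic recursive bound on $\delta_j$ restricted to a high-probability event.

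To set up that event, given $\epsilon>0$ and using $\|f\|_2\le\|f\|_\infty$ on $\mathbb{T}^d$ (since $|\mathbb{T}|=1$), Lemma~\ref{lemma:converge.1} produces constants $M_A,M_C>0$ and an integer $n_0$ such that
\[
    \Omega_n := \{\|\hat v_A-v_A\|_2\le M_A\zeta_A\}\cap\{\|\hat v_C-v_C\|_2\le M_C\zeta_C\}
\]
satisfies $\Pr(\Omega_n)\ge 1-\epsilon$ for all $n>n_0$. Everything below is deterministic on $\Omega_n$.

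For the recursion itself, I would split $a_{j+1}-\hat a_{j+1}=\mathcal{V}_A(a_j-\hat a_j)+(\mathcal{V}_A-\widehat{\mathcal{V}}_A)(\hat a_j)$ and apply $\|\mathcal{V}_A(f)\|_2\le\|v_A\|_2\|f\|_2$ (Cauchy--Schwarz on the kernel, already recorded in the introduction) together with $\|\hat a_j\|_2\le\|a_j\|_2+\delta_j$ and $\|a_j\|_2\le\|v_A\|_2^{j-1}\|v_C\|_2$ to get on $\Omega_n$
\[
    \delta_{j+1}\le(\|v_A\|_2+M_A\zeta_A)\delta_j + M_A\|v_C\|_2\zeta_A\|v_A\|_2^{j-1}.
\]
Writing $\tilde r=\|v_A\|_2(1+c)$ with $c=M_A\zeta_A/\|v_A\|_2$ and unrolling with $\delta_1\le M_C\zeta_C$ gives
\[
    \delta_j\le \tilde r^{j-1}M_C\zeta_C + M_A\|v_C\|_2\zeta_A\sum_{k=1}^{j-1}\tilde r^{j-1-k}\|v_A\|_2^{k-1}.
\]
The hypothesis $p=O(\zeta_A^{-1})$ then does the decisive work: since $(j-1)c=O(1)$ uniformly in $j\le p$, one has $(1+c)^{j-1}\le e^{(j-1)c}\le K$ for some constant $K$ independent of $j$. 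Therefore $\tilde r^{j-1}\le K\|v_A\|_2^{j-1}$, so the first summand is $\le KM_C\|v_A\|_2^{j-1}\zeta_C$; and each term in the sum satisfies $\tilde r^{j-1-k}\|v_A\|_2^{k-1}=\|v_A\|_2^{j-2}(1+c)^{j-1-k}\le K\|v_A\|_2^{j-2}$, yielding $K(j-1)\|v_A\|_2^{j-2}$ for the whole sum, which is of the advertised form $C_2(j-1)\|v_A\|_2^{j-1}\zeta_A$ once the single factor $\max(1,\|v_A\|_2^{-1})$ is absorbed into $C_2$. The $L^\infty$ half is identical after replacing $\|\cdot\|_2$ with $\|\cdot\|_\infty$ throughout, using $\|\mathcal{V}_A(f)\|_\infty\le\|v_A\|_\infty\|f\|_\infty$ (again because $|\mathbb{T}|=1$) and the $L^\infty$ analogues of $M_A$ and $M_C$ already furnished by Lemma~\ref{lemma:converge.1}.

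The main obstacle is exactly the geometric blow-up inside the unrolled recursion: each swap of $\hat v_A$ for $v_A$ costs a multiplicative factor $(1+M_A\zeta_A/\|v_A\|_2)$, which compounds with $j$ and, absent the budget $p\le C\zeta_A^{-1}$, would outpace the stated bound. That constraint on the divergence rate of $p$ is the single non-negotiable ingredient; the rest is triangle-inequality bookkeeping.
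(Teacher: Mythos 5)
Your proof is correct and follows essentially the same route as the paper's: reduce $j=1$ to \eqref{eq:converge.vc}, work on the high-probability event supplied by Lemma~\ref{lemma:converge.1}, telescope the one-step error into an operator applied to the previous error plus an operator-difference term, unroll the recursion, and use $p\le C\zeta_A^{-1}$ to tame the $(1+O(\zeta_A))^{j-1}$ inflation via $e^{(j-1)c}\le K$. The only (immaterial) differences are that you apply the population operator $\mathcal{V}_A$ to the difference and the operator discrepancy to $\hat a_j$, whereas the paper does the reverse, and your bookkeeping lands on $\|v_A\|_2^{j-2}\|v_C\|_2$ rather than $\|v_A\|_2^{j-1}\|\beta\|_2$, which you correctly absorb into the constant.
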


\begin{proof}[Proof of Lemma \ref{lemma:converge.cxxb.j}]
    Recall the definitions of $\mathcal{V}_A$ in \eqref{eq:V.A} 
    and of $\widehat{\mathcal{V}}_A$ in \eqref{eq:V.hat.A}. 
    Since $\mathcal{V}_A(\beta)=v_C$ 
    and $\widehat{\mathcal{V}}_A(\beta)=\hat{v}_C$,
    Lemma \ref{lemma:converge.cxxb.j} reduces to \eqref{eq:converge.vc}
    when $j=1$.
    For integer $j\geq 2$ and each $t\in\mathbb{T}$,
    the identity
    \begin{align*}
        |\widehat{\mathcal{V}}_A^j(\beta)(t)&-\mathcal{V}_A^j(\beta)(t)|
        \\
        =&\ |
            \widehat{\mathcal{V}}_A\{\widehat{\mathcal{V}}_A^{j-1}(\beta)-\mathcal{V}_A^{j-1}(\beta)\}(t)
            +(\widehat{\mathcal{V}}_A-\mathcal{V}_A)\{V_A^{j-1}(\beta)\}(t)
        |
        \\
        \leq&\ \|\widehat{\mathcal{V}}_A^{j-1}(\beta)-\mathcal{V}_A^{j-1}(\beta)\|_2
        \left\{\int\hat{v}_A^2(s,t)\dd s\right\}^{1/2}
        \\
        &+ \|\mathcal{V}_A^{j-1}(\beta)\|_2
        \left[\int\{\hat{v}_A(s,t)-v_A(s,t)\}^2\dd s\right]^{1/2}
        \quad\text{(Cauchy-Schwarz)}
    \end{align*}
    implies that 
    \begin{align*}
        \|\mathcal{V}_A^j(\beta)-\widehat{\mathcal{V}}_A^j(\beta)\|_2
        \leq&\ 
        \|\hat{v}_A\|_2
            \|\mathcal{V}_A^{j-1}(\beta)-\widehat{\mathcal{V}}_A^{j-1}(\beta)\|_2
        +
        \|\mathcal{V}_A^{j-1}(\beta)\|_2\|v_A-\hat{v}_A\|_2,
        \\
        \|\mathcal{V}_A^j(\beta)-\widehat{\mathcal{V}}_A^j(\beta)\|_{\infty}
        \leq&\ 
        \|\hat{v}_A\|_{\infty}
            \|\mathcal{V}_A^{j-1}(\beta)-\widehat{\mathcal{V}}_A^{j-1}(\beta)\|_{\infty}
        +
        \|\mathcal{V}_A^{j-1}(\beta)\|_2\|v_A-\hat{v}_A\|_{\infty}.
    \end{align*}
    On iteration these two inequalities give that, respectively,
    \begin{align}
        \|\mathcal{V}_A^j(\beta)&-\widehat{\mathcal{V}}_A^j(\beta)\|_2
        \notag\\
        \leq&\ 
        \|\hat{v}_A\|_2^{j-1}\|
            \mathcal{V}_A(\beta)-\widehat{\mathcal{V}}_A(\beta)\|_2
        +
        \|v_A-\hat{v}_A\|_2\sum_{k=1}^{j-1}
            \|\mathcal{V}_A^k(\beta)\|_2\|\hat{v}_A\|_2^{j-k-1},
        \label{eq:bound.L2.est.Vj}
        \\
        \|\mathcal{V}_A^j(\beta)&-\widehat{\mathcal{V}}_A^j(\beta)\|_{\infty}
        \notag\\
        \leq&\ 
        \|\hat{v}_A\|_{\infty}^{j-1}\|
            \mathcal{V}_A(\beta)-\widehat{\mathcal{V}}_A(\beta)\|_{\infty}
        +
        \|v_A-\hat{v}_A\|_{\infty}\sum_{k=1}^{j-1}
            \|\mathcal{V}_A^k(\beta)\|_2\|\hat{v}_A\|_{\infty}^{j-k-1}.
        \label{eq:bound.sup.est.Vj}
    \end{align}
    For each $\epsilon>0$,
    there is $n_0>0$ such that,
    for all $n>n_0$,
    we have
    \begin{align*}
        1-\epsilon/2
        \leq&\ \Pr(\|\hat{v}_A-v_A\|_2\leq C_0\zeta_A)
        \leq\Pr(\|\hat{v}_A\|_2\leq \|v_A\|_2+C_0\zeta_A),
        \\
        1-\epsilon/2
        \leq&\ \Pr(\|\hat{v}_A-v_A\|_{\infty}\leq C_0\zeta_A)
        \leq\Pr(\|\hat{v}_A\|_{\infty}\leq \|v_A\|_{\infty}+C_0\zeta_A),
        \\
        1-\epsilon/2
        \leq&\ \Pr(\|\hat{v}_C-v_C\|_2\leq C_0\zeta_C),
        \intertext{and}
        1-\epsilon/2
        \leq&\ \Pr(\|\hat{v}_C-v_C\|_{\infty}\leq C_0\zeta_C),
    \end{align*}
    with constant $C_0>0$, by Lemma \ref{lemma:converge.1}.
    It follows from \eqref{eq:bound.L2.est.Vj} that
    \begin{align*}
        1- \epsilon
        \leq\Pr\Bigg[\bigcap_{j=1}^{p}\Bigg\{
            \|\mathcal{V}_A^j(\beta)-\widehat{\mathcal{V}}_A^j(\beta)&\|_2
            \leq(\|v_A\|_2+C_0\zeta_A)^{j-1}C_0\zeta_C
        \\
            &+
            C_0\zeta_A\sum_{k=1}^{j-1}
                \|v_A\|_2^k\|\beta\|_2(\|v_A\|_2+C_0\zeta_A)^{j-k-1}
        \Bigg\}\Bigg]
        \\
        \leq\Pr\Bigg[\bigcap_{j=1}^{p}\Bigg\{
            \|\mathcal{V}_A^j(\beta)-\widehat{\mathcal{V}}_A^j(\beta)&\|_2
            \leq C_0(1+C_0\zeta_A/\|v_A\|_2)^{j-1}\|v_A\|_2^{j-1}\zeta_C
        \\
            &+
            C_0\|\beta\|_2\zeta_A\|v_A\|_2^{j-1}\sum_{k=1}^{j-1}
                (1+C_0\zeta_A/\|v_A\|_2)^{j-k-1}
        \Bigg\}\Bigg]
        \\
        \leq\Pr\Bigg[\bigcap_{j=1}^{p}\{
            \|\mathcal{V}_A^j(\beta)-\widehat{\mathcal{V}}_A^j(\beta)&\|_2
            \leq C_1\|v_A\|_2^{j-1}\zeta_C
        \\
            &+
            C_2(j-1)\|v_A\|_2^{j-1}\zeta_A
        \}\Bigg],
        \quad\text{(if $p\leq C\zeta_A^{-1}$ with fixed $C>0$)}
    \end{align*}
    where $C_1=C_0\exp(CC_0/\|v_A\|_2)\geq C_0\exp(CC_0/\|v_A\|_{\infty})$ 
    and $C_2=\|\beta\|_2C_1$.
    It is worth noting that we have assumed that the range of $p$ is constrained in $[1, C\zeta_A^{-1}]$; the quantity
    $(1+C_0\zeta_A/\|v_A\|_2)^{p}$ may not be bounded if $p$ diverges too fast.
    Similarly, inequality
    \eqref{eq:bound.sup.est.Vj} implies that,
    for $1\leq p\leq C\zeta_A^{-1}$,
    $$
        \Pr\left[\bigcap_{j=1}^{p}\{
            \|\mathcal{V}_A^j(\beta)-\widehat{\mathcal{V}}_A^j(\beta)\|_{\infty}
            \leq C_1\|v_A\|_{\infty}^{j-1}\zeta_C
            +
            C_2(j-1)\|v_A\|_{\infty}^{j-1}\zeta_A
        \}\right]
        \geq 1-\epsilon.
    $$
\end{proof}

\begin{proof}[Proof of Theorem \ref{thm:converge.beta}]
    The following alternative expression for $\beta_p$ \eqref{eq:beta.p}, drawn from \citet[][Eq.~3.6]{DelaigleHall2012b}, 
    dramatically facilitates our further moves:
    \begin{equation}\label{eq:beta.p.alter}
    	\beta_p =\beta_p(\cdot)
    	= [\mathcal{V}_A(\beta)(\cdot),\ldots,\mathcal{V}_A^p(\beta)(\cdot)]\DDD_p^{-1}\aalpha_p,
    \end{equation}
    where
    \begin{align}
        \DDD_p=[d_{j_1,j_2}]_{1\leq j_1,j_2\leq p},
        \label{eq:D.p}\\
        \aalpha_p=[\alpha_1,\ldots,\alpha_p]^\top,
        \label{eq:alpha.p}
    \end{align}
    with
    $   
        d_{j_1,j_2}
        =\int\mathcal{V}_A^{j_1+1}(\beta)\mathcal{V}_A^{j_2}(\beta)
        =\int\mathcal{V}_A^{j_1}(\beta)\mathcal{V}_A^{j_2+1}(\beta)
    $
    and
    $
        \alpha_j
        =\int\mathcal{V}_A(\beta)\mathcal{V}_A^j(\beta)
        =\int v_C\mathcal{V}_A^j(\beta)
    $.
    As is known,
    $\DDD_p^{-1}$ and $\aalpha_p$ are bounded, respectively, as
    \begin{equation}\label{eq:bound.D.p.inv}
        \|\DDD_p^{-1}\|_2=\tau_p^{-1}
    \end{equation}
    and
    \begin{align}
        \|\aalpha_p\|_2
        =\left[\sum_{j=1}^p\left\{\int v_C\mathcal{V}_A^j(\beta)\right\}^2\right]^{1/2}
        \leq&\ \left[\sum_{j=1}^p\|v_C\|_2^2\|\mathcal{V}_A^j(\beta)\|_2^2\right]^{1/2}
        \quad\text{(Cauchy-Schwarz)}
        \notag\\
        =&\ \begin{cases}
            O(p^{1/2}\|v_A\|_2^p)
            &\text{if }\|v_A\|_2\geq 1
            \\
            O(1)
            &\text{if }\|v_A\|_2< 1.
        \end{cases}
        \label{eq:bound.alpha.p}
    \end{align}
    Corresponding to \eqref{eq:beta.p.alter}, 
    $\hat{\beta}_p$ at \eqref{eq:beta.p.hat} is rewritten as
    \begin{equation}\label{eq:beta.p.hat.alter}
        \hat{\beta}_p=\hat{\beta}_p(\cdot)
        = [\widehat{\mathcal{V}}_A(\beta)(\cdot),\ldots,\widehat{\mathcal{V}}_A^p(\beta)(\cdot)]
        \widehat{\DDD}_p^{-1}\hat{\aalpha}_p,
    \end{equation}
    in which
    $\widehat{\DDD}_p=[\hat{d}_{j_1,j_2}]_{1\leq j_1, j_2 \leq p}$
    and
    $\hat{\aalpha}_p=[\hat{\alpha}_1,\ldots,\hat{\alpha}_p]^\top$
    are respective empirical counterparts of $\DDD_p$ at \eqref{eq:D.p}
    and $\aalpha_p$ at \eqref{eq:alpha.p},
    with 
    $
        \hat{d}_{j_1,j_2}
        =\int \widehat{\mathcal{V}}_A^{j_1+1}(\beta)\widehat{\mathcal{V}}_A^{j_2}(\beta)
    $
    and
    $
        \hat{\alpha}_j
        =\int \widehat{\mathcal{V}}_A(\beta)\widehat{\mathcal{V}}_A^j(\beta)
        =\int \hat{v}_C\widehat{\mathcal{V}}_A^j(\beta)
    $.
    
    Observe that,
    by the Cauchy-Schwarz inequality,
    \begin{align*}
        |\alpha_j-\hat{\alpha}_j|
        =&\ 
        \left|
            \int (v_C-\hat{v}_C)\mathcal{V}_A^j(\beta)
        \right|
        +
        \left|
            \int \hat{v}_C\{\widehat{\mathcal{V}}_A^j(\beta)-\mathcal{V}_A^j(\beta)\}
        \right|
        \\
        \leq&\ 
        \|\beta\|_2\|v_A\|_2^j\|\hat{v}_C-v_C\|_2
        +
        \|\hat{v}_C\|_2\|\widehat{\mathcal{V}}_A^j(\beta)-\mathcal{V}_A^j(\beta)\|_2. 
    \end{align*}
    For every $\epsilon>0$ and $1\leq p\leq C\zeta_A^{-1}$,
    there is $n_0>0$ such that,
    $\forall n>n_0$,
    $$
        1-\epsilon
        \leq\Pr\left[\bigcap_{j=1}^p\{
            |\alpha_j-\hat{\alpha}_j|\leq 
                C_3\|v_A\|_2^{j-1}\zeta_C+C_4(j-1)\|v_A\|_2^{j-1}\zeta_A
        \}\right],
        \quad\text{(by Lemmas \ref{lemma:converge.1} and \ref{lemma:converge.cxxb.j})}
    $$
    with constants $C_3,C_4>0$.
    Analogously,
    writing $\Delta_{jk}=\hat{d}_{jk}-d_{jk}$,
    the Cauchy-Schwarz inequality implies that
    \begin{align*}
        |\Delta_{jk}|
        \leq&\ \|\widehat{\mathcal{V}}_A^{j+1}(\beta)-\mathcal{V}_A^{j+1}(\beta)\|_2
        \|\widehat{\mathcal{V}}_A^k(\beta)\|_2
        +
        \|\widehat{\mathcal{V}}_A^k(\beta)-\mathcal{V}_A^k(\beta)\|_2
        \|\mathcal{V}_A^{j+1}(\beta)\|_2
        \\
        \leq&\ 
        \|\widehat{\mathcal{V}}_A^{j+1}(\beta)-\mathcal{V}_A^{j+1}(\beta)\|_2
        \|\hat{v}_A\|_2^k\|\beta\|_2
        +
        \|\widehat{\mathcal{V}}_A^k(\beta)-\mathcal{V}_A^k(\beta)\|_2
        \|v_A\|_2^{j+1}\|\beta\|_2,
    \end{align*}
    and further, by Lemmas \ref{lemma:converge.1} and \ref{lemma:converge.cxxb.j},
    as long as $1\leq p\leq C\zeta_A^{-1}$,
    \begin{align*}
        1-\epsilon
        \leq\Pr\Bigg[\bigcap_{j,k=1}^p\{
            |\Delta_{jk}|
            \leq&\ 
            \|\widehat{\mathcal{V}}_A^{j+1}(\beta)-\mathcal{V}_A^{j+1}(\beta)\|_2
            (\|v_A\|_2+C_0\zeta_A^{-1})^k\|\beta\|_2
        \\
            &+
            \|\widehat{\mathcal{V}}_A^k(\beta)-\mathcal{V}_A^k(\beta)\|_2
            \|v_A\|_2^{j+1}\|\beta\|_2
        \}\Bigg]
        \\
        \leq\Pr\Bigg[\bigcap_{j,k=1}^p\{
            |\Delta_{jk}|
            \leq&\ 
            C_5\|v_A\|_2^{j+k}\zeta_C
            +C_6\max(j,k-1)\|v_A\|_2^{j+k}\zeta_A
        \}\Bigg],
    \end{align*}
    where $C_5$ and $C_6$ are positive constants.  
    Thus, 
    if $\DDelta_p=[\Delta_{jk}]_{p\times p}=\widehat{\mathbf{D}}_p-\mathbf{D}_p$,
    then
    \begin{align}
        \|\DDelta_p\|_2^2
        \leq&\ \sum_{1\leq j,k\leq p}\Delta_{jk}^2
        \notag\\
        =&\ O_p\left(
                \zeta_C^2\sum_{1\leq j,k\leq p}\|v_A\|_2^{2j+2k}
            \right)
        + O_p\left[
                \zeta_A^2\sum_{1\leq j,k\leq p}\max\{j^2,(k-1)^2\}\|v_A\|_2^{2j+2k}
            \right]
        \notag\\
        =&\ \begin{cases}
            O_p(p^2\|v_A\|_2^{4p}\zeta_C^2) + O_p(p^4\|v_A\|_2^{4p}\zeta_A^2)
            &\text{if }\|v_A\|_2\geq 1
            \\
            O_p(\zeta_C^2) + O_p(\zeta_A^2)
            &\text{if }\|v_A\|_2< 1.
        \end{cases}
        \label{eq:sum.delta.jk}
    \end{align}
    In a similar manner,
    one proves that
    \begin{align}
        \|\hat{\aalpha}_p-\aalpha_p\|_2^2
        =&\ \sum_{1\leq j\leq p}|\hat{\alpha}_j-\alpha_j|^2
        \notag\\
        =&\ O_p\left(
            C_1\zeta_C^2\sum_{1\leq j\leq p}\|v_A\|_2^{2j-2}
        \right)
        + O_p\left\{
            \zeta_A^2\sum_{1\leq j\leq p}(j-1)^2\|v_A\|_2^{2j-2}
        \right\}
        \notag\\
        =&\ \begin{cases}
            O_p(p\|v_A\|_2^{2p}\zeta_C^2) + O_p(p^3\|v_A\|_2^{2p}\zeta_A^2)
            &\text{if }\|v_A\|_2\geq 1
            \\
            O_p(\zeta_C^2) + O_p(\zeta_A^2)
            &\text{if }\|v_A\|_2< 1.
        \end{cases}
        \label{eq:diff.alpha.j}
    \end{align}
    Denote by $\tau_p$ the smallest eigenvalue of $\mathbf{D}_p$.
    Notice that,
    for $p=p(n)=O(\zeta_A^{-1})$,
    \begin{align*}
        \|\mathbf{D}_p^{-1}\DDelta_p\|_2
        \leq&\ \tau_p^{-1}\|\DDelta_p\|_2
        \notag\\
        =&\ \begin{cases}
            O_p(\tau_p^{-1}p\|v_A\|_2^{2p}\zeta_C) + O_p(\tau_p^{-1}p^2\|v_A\|_2^{2p}\zeta_A)
            &\text{if }\|v_A\|_2\geq 1
            \\
            O_p(\tau_p^{-1}\zeta_C) + O_p(\tau_p^{-1}\zeta_A)
            &\text{if }\|v_A\|_2< 1.
        \end{cases}
        \quad\text{(by \eqref{eq:bound.D.p.inv} and \eqref{eq:sum.delta.jk})}
    \end{align*}
    Provided that \ref{cond:tau.p.L2} holds,
    for sufficiently large $n$,
    one has $\tau_p^{-1}\|\DDelta_p\|_2<\gamma$,
    for some $\gamma\in (0,1)$.
    In this case,
    \citet[][Eq.~7.18]{DelaigleHall2012b} argued that,
    as $n$ goes to infinity,
    $$
        \widehat{\mathbf{D}}_p^{-1}
        =\{\mathbf{I}-\mathbf{D}_p^{-1}\DDelta_p+O_p(\tau_p^{-2}\|\DDelta_p\|_2^2)\}\mathbf{D}_p^{-1},
    $$
    which can be rewritten as
    \begin{align}
        \|\widehat{\mathbf{D}}_p^{-1}&-\mathbf{D}_p^{-1}\|_2
        \notag\\
        =&\ \|\{O_p(\tau_p^{-2}\|\DDelta_p\|_2^2) - \mathbf{D}_p^{-1}\DDelta_p\}\mathbf{D}_p^{-1}\|_2 
        \notag\\
        =&\ \begin{cases}
            \tau_p^{-1}\|
                O_p(\tau_p^{-2}p^2\|v_A\|_2^{4p}\zeta_C^2)
                +O_p(\tau_p^{-2}p^4\|v_A\|_2^{4p}\zeta_A^2) 
                -\mathbf{D}_p^{-1}\DDelta_p
            \|_2
            &\text{if }\|v_A\|_2\geq 1
            \\
            \tau_p^{-1}\|
                O_p(\tau_p^{-2}\zeta_C^2)
                +O_p(\tau_p^{-2}\zeta_A^2) 
                -\mathbf{D}_p^{-1}\DDelta_p
            \|_2
            &\text{if }\|v_A\|_2< 1
        \end{cases}
        \quad\text{(by \eqref{eq:sum.delta.jk})}
        \notag\\
        =&\ \begin{cases}
            O_p(\tau_p^{-2}p\|v_A\|_2^{2p}\zeta_C) + O_p(\tau_p^{-2}p^2\|v_A\|_2^{2p}\zeta_A)
            &\text{if }\|v_A\|_2\geq 1
            \\
            O_p(\tau_p^{-2}\zeta_C) + O_p(\tau_p^{-2}\zeta_A)
            &\text{if }\|v_A\|_2< 1.
        \end{cases}
        \quad\text{(by \ref{cond:tau.p.L2})}
        \label{eq:diff.D.inverse}
    \end{align}
    Combining \eqref{eq:bound.D.p.inv}, \eqref{eq:bound.alpha.p},
    \eqref{eq:diff.alpha.j} and \eqref{eq:diff.D.inverse},
    one  obtains
    \begin{align}
        \|\widehat{\mathbf{D}}_p^{-1}\hat{\aalpha}_p&-\mathbf{D}_p^{-1}\aalpha_p\|_2
        \notag\\
        \leq&\ \|\widehat{\mathbf{D}}_p^{-1}-\mathbf{D}_p^{-1}\|_2\|\aalpha_p\|_2
            +\|\widehat{\mathbf{D}}_p^{-1}\|_2\|\hat{\aalpha}_p-\aalpha_p\|_2
        \notag\\
        =&\ \begin{cases}
            O_p(\tau_p^{-2}p^{3/2}\|v_A\|_2^{3p}\zeta_C)
            + O_p(\tau_p^{-2}p^{5/2}\|v_A\|_2^{3p}\zeta_A)
            \\
            \qquad + O_p(\tau_p^{-1}p^{1/2}\|v_A\|_2^{p}\zeta_C)
            + O_p(\tau_p^{-1}p^{3/2}\|v_A\|_2^{p}\zeta_A)
            &\text{if }\|v_A\|_2\geq 1
            \\
            O_p(\tau_p^{-2}\zeta_C) 
            + O_p(\tau_p^{-2}\zeta_A)
            + O_p(\tau_p^{-1}\zeta_C)
            + O_p(\tau_p^{-1}\zeta_A)
            &\text{if }\|v_A\|_2< 1.
        \end{cases}
        \label{eq:diff.dalpha}
    \end{align}
    
    Next, for each $t\in\mathbb{T}$, we have
    \begin{align*}
        |\hat{\beta}_p(t)-\beta_p(t)|^2
        =\Bigg|
                [\widehat{\mathcal{V}}_A(\beta)(t),&\ldots,\widehat{\mathcal{V}}_A^p(\beta)(t)]
                \widehat{\DDD}_p^{-1}\hat{\aalpha}_p
        \\
                &-
                [\mathcal{V}_A(\beta)(t),\ldots,\mathcal{V}_A^p(\beta)(t)]
                \DDD_p^{-1}\aalpha_p
            \Bigg|^2
        \\
        \leq\Bigg|
            \|\widehat{\DDD}_p^{-1}\hat{\aalpha}_p&-\DDD_p^{-1}\aalpha_p\|_2
            \left[\sum_{j=1}^p\{\widehat{\mathcal{V}}_A^j(\beta)(t)\}^2\right]^{1/2}
        \\
            &+
            \|\DDD_p^{-1}\aalpha_p\|_2
            \left[\sum_{j=1}^p
                \{\widehat{\mathcal{V}}_A^j(\beta)(t)-\mathcal{V}_A^j(\beta)(t)\}^2\right]^{1/2}
        \Bigg|^2
        \\
        \leq 
            2\|\widehat{\mathbf{D}}_p^{-1}\hat{\aalpha}_p&-\mathbf{D}_p^{-1}\aalpha_p\|_2^2
            \left[\sum_{j=1}^p\{\widehat{\mathcal{V}}_A^j(\beta)(t)\}^2\right]
        \\
            &+
            2\|\mathbf{D}_p^{-1}\aalpha_p\|_2^2
            \left[\sum_{j=1}^p
                \{\widehat{\mathcal{V}}_A^j(\beta)(t)-\mathcal{V}_A^j(\beta)(t)\}^2\right].
    \end{align*}
    Thus $\|\hat{\beta}_p-\beta_p\|_2$ is bounded as below:
    \begin{align}
        \|\hat{\beta}_p-\beta_p\|_2^2
        \leq&\ 2\|\widehat{\mathbf{D}}_p^{-1}\hat{\aalpha}_p-\mathbf{D}_p^{-1}\aalpha_p\|_2^2
            \sum_{j=1}^p\|\mathcal{V}_A^j(\beta)\|_2^2
            + 
            2\|\mathbf{D}_p^{-1}\aalpha_p\|_2^2
            \sum_{j=1}^p\|\mathcal{V}_A^j(\beta)-\widehat{\mathcal{V}}_A^j(\beta)\|_2^2
        \notag\\
        \leq&\ 2\|\widehat{\mathbf{D}}_p^{-1}\hat{\aalpha}_p-\mathbf{D}_p^{-1}\aalpha_p\|_2^2
            \sum_{j=1}^p\|\mathcal{V}_A^j(\beta)\|_2^2
        \label{eq:beta.dist.1}\\
        &+ 2\tau_p^{-2}\|\aalpha_p\|_2^2
            \sum_{j=1}^p \|\widehat{\mathcal{V}}_A^j(\beta)-\mathcal{V}_A^j(\beta)\|_2^2.
        \label{eq:beta.dist.2}
    \end{align}
    Owing to \eqref{eq:diff.dalpha},
    $$
        \eqref{eq:beta.dist.1}
        =\begin{cases}
            O_p(\tau_p^{-4}p^{4}\|v_A\|_2^{8p}\zeta_C^2)
            + O_p(\tau_p^{-4}p^{6}\|v_A\|_2^{8p}\zeta_A^2)
            \\
            \qquad + O_p(\tau_p^{-2}p^{2}\|v_A\|_2^{4p}\zeta_C^2)
            + O_p(\tau_p^{-2}p^{4}\|v_A\|_2^{4p}\zeta_A^2)
            &\text{if }\|v_A\|_2\geq 1
            \\
            O_p(\tau_p^{-4}\zeta_C^2) 
            + O_p(\tau_p^{-4}\zeta_A^2)
            + O_p(\tau_p^{-2}\zeta_C^2)
            + O_p(\tau_p^{-2}\zeta_A^2)
            &\text{if }\|v_A\|_2< 1;
        \end{cases}
    $$
    the rate of \eqref{eq:beta.dist.2}
    is given by 
    \eqref{eq:bound.alpha.p} and \ref{lemma:converge.cxxb.j} jointly,
    i.e.,
    $$
        \eqref{eq:beta.dist.2}
        =\begin{cases}
            O_p(\tau_p^{-2}p^2\|v_A\|_2^{4p}\zeta_C^2)
            + O_p(\tau_p^{-2}p^4\|v_A\|_2^{4p}\zeta_A^2)
            &\text{if }\|v_A\|_2\geq 1
            \\
            O_p(\tau_p^{-2}\zeta_C^2)
            + O_p(\tau_p^{-2}\zeta_A^2)
            &\text{if }\|v_A\|_2< 1.
        \end{cases}
    $$
    In this way we deduce
    \begin{align}
       \|\hat{\beta}_p-\beta_p\|_2^2
       =\begin{cases}
            O_p(\tau_p^{-4}p^{4}\|v_A\|_2^{8p}\zeta_C^2)
            + O_p(\tau_p^{-4}p^{6}\|v_A\|_2^{8p}\zeta_A^2)
            \\
            \qquad + O_p(\tau_p^{-2}p^{2}\|v_A\|_2^{4p}\zeta_C^2)
            + O_p(\tau_p^{-2}p^{4}\|v_A\|_2^{4p}\zeta_A^2)
            &\text{if }\|v_A\|_2\geq 1
            \\
            O_p(\tau_p^{-4}\zeta_C^2) 
            + O_p(\tau_p^{-4}\zeta_A^2)
            + O_p(\tau_p^{-2}\zeta_C^2)
            + O_p(\tau_p^{-2}\zeta_A^2)
            &\text{if }\|v_A\|_2< 1.
        \end{cases}
        \label{eq:beta.dist}
    \end{align}
    Condition \ref{cond:tau.p.L2} then implies that both 
    \eqref{eq:beta.dist.1} and \eqref{eq:beta.dist.2} 
    converge to 0 in probability.
    The consistency of PLEASS estimators in the $L^2$ sense follows, 
    from the $L^2$ convergence of $\beta_p$ to $\beta$ 
    (\citealp[][Theorem~3.2]{DelaigleHall2012b}).
    
   Finally,
    we bound the estimation error in the supremum metric:
    \begin{align}
        \|\hat{\beta}_p&-\beta_p\|_{\infty}^2
        \notag\\
        =&\ \Bigg\|    
                [\widehat{\mathcal{V}}_A(\beta),\ldots,\widehat{\mathcal{V}}_A^p(\beta)]
                (\widehat{\DDD}_p^{-1}\hat{\aalpha}_p-\DDD_p^{-1}\aalpha_p)
                +
                [\widehat{\mathcal{V}}_A(\beta)-\mathcal{V}_A(\beta),\ldots,
                \widehat{\mathcal{V}}_A^p(\beta)-\mathcal{V}_A^p(\beta)]
                \DDD_p^{-1}\aalpha_p
        \Bigg\|_{\infty}
        \notag\\
        \leq&\ \left[
            \|\widehat{\DDD}_p^{-1}\hat{\aalpha}_p-\DDD_p^{-1}\aalpha_p\|_2
            \left\{\sum_{j=1}^p\|\widehat{\mathcal{V}}_A^j(\beta)\|_{\infty}^2\right\}^{1/2}
            +
            \|\DDD_p^{-1}\aalpha_p\|_2
            \left\{\sum_{j=1}^p
                \|\widehat{\mathcal{V}}_A^j(\beta)-\mathcal{V}_A^j(\beta)\|_{\infty}^2\right\}^{1/2}
        \right]^2
        \notag\\
        \leq&\ 2\|\widehat{\mathbf{D}}_p^{-1}\hat{\aalpha}_p-\mathbf{D}_p^{-1}\aalpha_p\|_2^2
            \sum_{j=1}^p\|\mathcal{V}_A^j(\beta)\|_{\infty}^2
            + 
            2\tau_p^{-2}\|\aalpha_p\|_2^2
            \sum_{j=1}^p\|\mathcal{V}_A^j(\beta)-\widehat{\mathcal{V}}_A^j(\beta)\|_{\infty}^2
        \notag\\
        \leq&\ 2\|\widehat{\mathbf{D}}_p^{-1}\hat{\aalpha}_p-\mathbf{D}_p^{-1}\aalpha_p\|_2^2
            \sum_{j=1}^p\|\mathcal{V}_A^j(\beta)\|_{\infty}^2
        \notag
        \quad
        \text{(different from \eqref{eq:beta.dist.1} only in the metric)}
        \\
        &+2\tau_p^{-2}\|\aalpha_p\|_2^2
            \sum_{j=1}^p \|\widehat{\mathcal{V}}_A^j(\beta)-\mathcal{V}_A^j(\beta)\|_{\infty}^2
        \quad
        \text{(different from \eqref{eq:beta.dist.2} only in the metric)}
        \notag\\
        =&\ \begin{cases}
            O_p(\tau_p^{-4}p^{4}\|v_A\|_{\infty}^{8p}\zeta_C^2)
            + O_p(\tau_p^{-4}p^{6}\|v_A\|_{\infty}^{8p}\zeta_A^2)
            \\
            \qquad + O_p(\tau_p^{-2}p^{2}\|v_A\|_{\infty}^{4p}\zeta_C^2)
            + O_p(\tau_p^{-2}p^{4}\|v_A\|_{\infty}^{4p}\zeta_A^2)
            &\text{if }\|v_A\|_{\infty}\geq 1
            \\
            O_p(\tau_p^{-4}\zeta_C^2) 
            + O_p(\tau_p^{-4}\zeta_A^2)
            + O_p(\tau_p^{-2}\zeta_C^2)
            + O_p(\tau_p^{-2}\zeta_A^2)
            &\text{if }\|v_A\|_{\infty}< 1.
        \end{cases}
        \notag
    \end{align}
    That is,
    the upper bound for $\|\hat{\beta}_p-\beta_p\|_{\infty}$
    can be obtained from \eqref{eq:beta.dist}
    by replacing $\|v_A\|_2$ with $\|v_A\|_{\infty}$.
    Condition \ref{cond:tau.p.sup} completes the proof for 
    the zero-convergence of $\|\hat{\beta}_p-\beta\|_{\infty}$,
    as long as we assume $\|\beta_p-\beta\|_{\infty}\to 0$
    as $p\to\infty$.
\end{proof}

\begin{proof}[Proof of Theorem \ref{thm:converge.eta}]
	Recall the definitions of $\beta_p$ at \eqref{eq:beta.p}
	and $\hat{\beta}_p$ at \eqref{eq:beta.p.hat}.
	Introduce $\bm{s}_p = [\mathcal{V}_A(w_1),\ldots, \mathcal{V}_A(w_p)]^\top$
	and its empirical version
	$\hat{\bm{s}}_p = [\widehat{\mathcal{V}}_A(\hat{w}_1),\ldots, \widehat{\mathcal{V}}_A(\hat{w}_p)]^\top$.
	Note the identities that
	$\bm{c}_p^\top\bm{s}_p = \mathcal{V}_A(\beta_p)$
	and 
	$\hat{\bm{c}}_p^\top\hat{\bm{s}}_p
	= \widehat{\mathcal{V}}_A(\hat{\beta}_p)$.
    Thus,
    conditions \ref{cond:rv.1}--\ref{cond:tau.p.L2} jointly ensure that,
    for arbitrarily given $L^*, T_1^*,\ldots,T_{L^*}$,
    \begin{align*}
        \|
    		\mathbf{H}_p^*\bm{c}_p
    		&- \widehat{\mathbf{H}}_p^*\hat{\bm{c}}_p
    	\|_2^2
    	\\
    	\leq&\ L^* \|
    	    \bm{s}_p^\top\bm{c}_p
    		- \hat{\bm{s}}_p^\top\hat{\bm{c}}_p
    	\|_{\infty}^2
    	\\
    	=&\ L^*\sup_{t\in\mathbb{T}}\left|
    	    \int v_A(s,t)\{\beta_p(s)-\hat{\beta}_p(s)\}\dd s
    	    +\int (v_A-\hat{v}_A)(s,t)\hat{\beta}_p(s)\dd s
    	\right|^2
    	\\
    	\leq&\ L^*\sup_{t\in\mathbb{T}}\Bigg|
    	    \left\{\int v^2_A(s,t)\dd s\right\}^{1/2}
    	    \|\beta_p-\hat{\beta}_p\|_2
    	    +\left\{\int (v_A-\hat{v}_A)^2(s,t)\dd s\right\}^{1/2}
    	    \|\hat{\beta}_p\|_2
    	\Bigg|^2
    	\\
    	\leq&\ L^*(
    	    \|v_A\|_{\infty}\|\beta_p-\hat{\beta}_p\|_2
    	    +\|v_A-\hat{v}_A\|_{\infty}\|\hat{\beta}_p\|_2
    	)^2
    	\\
    	\to&_p\ 0.
        \quad\text{(by Lemma \ref{lemma:converge.1} and Theorem \ref{thm:converge.beta})}
    \end{align*}
    The convergence to 0 (in probability and conditional on $L^*$ and $T_1^*,\ldots,T_{L^*}$) 
    of $\hat{\eta}_p(X^*)-\tilde{\eta}_{\infty}(X^*)$
    (with $\hat{\eta}_p(X^*)$ at \eqref{eq:eta.p.hat}
    and $\tilde{\eta}_{\infty}(X^*)$ at \eqref{eq:eta.inf.tilde})
    follows from 
    Lemma \ref{lemma:converge.1}
    and the continuous mapping and Slutsky's theorems.
    Since  $L^*$ and $T_1^*,\ldots,T_{L^*}$ are arbitrary  the dominated convergence theorem
    enables us to drop the conditioning.
    This completes the proof of
    Theorem \ref{thm:converge.eta}.
\end{proof}

\begin{proof}[Proof of Corollary \ref{thm:asymptotic.normal}]
    Recall $\eta_p(X^*)$ at \eqref{eq:eta.p},
    $\tilde{\eta}_p(X^*)$ at \eqref{eq:eta.p.tilde}
    and $\tilde{\eta}_{\infty}(X^*)$ at \eqref{eq:eta.inf.tilde}.
    As discussed in the last paragraph of Section \ref{sec:estimation},
    $[\tilde{\xi}_1^*-\xi_1^*,\ldots,\tilde{\xi}_p^*-\xi_p^*]^\top
    \sim \mathcal{N}(\bm{0},
    	\II_p-\mathbf{H}_p^{*\top}\SSigma_{\widetilde{X}^*}^{-1}\mathbf{H}_p^*)$.
    It follows that 
    $
        \tilde{\eta}_p(X^*)-\eta_p(X^*)
        \sim\mathcal{N}
            \{0,\bm{c}_p^\top(\II_p-
                \mathbf{H}_p^{*\top}\SSigma_{\widetilde{X}^*}^{-1}\mathbf{H}_p^*)
                \bm{c}_p\}
    $
    and further that
    $\hat{\eta}_p(X^*)-\eta_p(X^*)$ converges (in distribution) to $\mathcal{N}(0,\omega)$
    as $n\to\infty$,
    by Theorem \ref{thm:converge.eta}.
    Slutsky's theorem now completes the proof.
\end{proof}

\bibliography{mybibfile}

@string{aos="Ann. Stat."}

@string{cils="Chemometrics Intell. Lab. Syst."}

@string{csda="Comput. Stat. Data Anal."}

@string{ejs="Electron. J. Stat."}

@string{jasa="J. Am. Stat. Assoc."}

@string{jat="J. Approx. Theory"}

@string{jcgs="J. Comput. Graph. Stat."}

@string{jmva="J. Multivariate Anal."}

@string{jrssb="J. R. Stat. Soc. Ser. B-Stat. Methodol."}

@string{jscs="J. Stat. Comput. Simul."}

@article{AguileraEscabiasPredaSaporta2010,
    title = "Using basis expansions for estimating functional {PLS} regression: 
        {A}pplications with chemometric data",
    journal = cils,
    volume = "104",
    number = "2",
    pages = "289--305",
    year = "2010",
    doi = "10.1016/j.chemolab.2010.09.007",
    author = "Ana M. Aguilera and Manuel Escabias and Cristian Preda and Gilbert Saporta",
}

@phdthesis{Albaqshi2017,
    title = {Generalized Partial Least Squares Approach for 
        Nominal Multinomial Logit Regression Models
        with a Functional Covariate},
    author = {Amani Mohammed H Albaqshi},
    school = {University of Northern Colorado},
    year = {2017},
}

@article{Baillo2009,
    author = {Amparo Ba\'{i}llo},
    title = {A note on functional linear regression},
    journal = jscs,
    volume = {79},
    pages = {657--669},
    year  = {2009},
    doi = {10.1080/00949650701836765},
}

@article{CravenWahba1979,
	Author = {Craven, Peter and Wahba, Grace},
	Title = {Smoothing noisy data with spline functions},
	Journal = {Numer. Math.},
	Volume = {31},
	Pages = {377--403},
	DOI = {10.1007/BF01404567},
	Year = {1979} 
}

@article{deJong1993,
    title = "{SIMPLS}: An alternative approach to partial least squares regression",
    journal = cils,
    volume = "18",
    pages = "251--263",
    year = "1993",
    doi = "10.1016/0169-7439(93)85002-X",
    author = "de Jong, Sijmen ",
}

@article{DelaigleHall2012a,
	author = {Delaigle, Aurore and Hall, Peter},
	title = {Achieving near perfect classification for functional data},
	journal = jrssb,
	volume = {74},
	year = {2012},
	pages = {267--286},
	doi = {10.1111/j.1467-9868.2011.01003.x},
}

@article{DelaigleHall2012b,
	author = {Delaigle, Aurore and Hall, Peter},
	title = {Methodology and theory for partial least squares
	    applied to functional data},
	journal = aos,
	volume = {40},
	year = {2012},
	pages = {322--352},
	doi = {10.1214/11-AOS958},
}

@book{FanGijbels1996,
    Author = {Fan, Jianqing and Gijbels, Irene},
    Title = {Local Polynomial Modelling and Its Applications},
    Publisher = {Chapman \& Hall/CRC},
    Address = {Boca Raton},
    Series = {Monographs on Statistics and Applied Probability},
    Year = {1996},
}

@article{GoldsmithBobCrainiceanuCaffoReich2011,
    title={Penalized functional regression},
    author={Jeff Goldsmith and Jennifer Bob and Ciprian M. Crainiceanu and Brian Caffo and Daniel Reich},
    journal=jcgs,
    volume={20},
    pages={830--851},
    year={2011},
    doi={10.1198/jcgs.2010.10007},
}

@article{Goutis1998,
    title={Second‐derivative functional regression with applications to near infra‐red spectroscopy},
    author={Constantinos Goutis},
    journal=jrssb,
    volume={60},
    pages={103--114},
    year={1998},
    doi={10.1111/1467-9868.00111}
}

@article{HallMullerWang2006,
    author = "Hall, Peter and Müller, Hans-Georg and Wang, Jane-Ling",
    doi = "10.1214/009053606000000272",
    journal = aos,
    pages = "1493--1517",
    title = "Properties of principal component methods for functional and longitudinal data analysis",
    volume = "34",
    year = "2006",
}

@article{Harville1976,
	author = "Harville, David",
	doi = "10.1214/aos/1176343414",
	journal = aos,
	pages = "384--395",
	title = "Extension of the {G}auss-{M}arkov Theorem 
		to Include the Estimation of Random Effects",
	volume = "4",
	year = "1976"
}

@incollection{Hochstrasser1972,
    author = {Urs W. Hochstrasser},
    title = {Orthogonal Polynomials},
    editor = {Milton Abramowitz and Irene A. Stegun},
    booktitle = {Handbook of Mathematical Functions with Formulas, Graphs, and Mathematical Tables},
    year = {1972},
    series = {Applied Mathematics Series 55},
    publisher = {Dover Publications, Inc.},
    address = {New York},
    note = {Tenth original printing with corrections},
    pages = {773--802},
}

@article{JamesHastieSugar2000,
	title = "Principal component models for sparse functional data ",
	journal = "Biometrika",
	volume = "87",
	pages = "587--602",
	year = "2000",
	doi = "10.1093/biomet/87.3.587",
	author = "Gareth M. James and Trevor J. Hastie and Catherine A. Sugar ",
}

@article{KramerSugiyama2011,
    author = {Nicole Kr\"{a}mer and Masashi Sugiyama},
    title = {The Degrees of Freedom of Partial Least Squares Regression},
    journal = jasa,
    volume = {106},
    pages = {697--705},
    year  = {2011},
    doi = {10.1198/jasa.2011.tm10107},
}

@book{Lange2010,
   Author = {Kenneth Lange},
   Title = {Numerical Analysis for Statisticians},
   Publisher = {Springer},
   Address = {New York},
   edition = {2nd},
   Year = {2010},
   doi = {10.1007/978-1-4419-5945-4},
}

@article{LiHsing2010,
    author = "Li, Yehua and Hsing, Tailen",
    doi = "10.1214/10-AOS813",
    journal = aos,
    pages = "3321--3351",
    title = "Uniform convergence rates for nonparametric regression and principal component analysis in     functional/longitudinal data",
    volume = "38",
    year = "2010"
}

@article{Marx1996,
	author = {Brian D. Marx},
	journal = {Technometrics},
	pages = {374--381},
	title = {Iteratively Reweighted Partial Least Squares Estimation for Generalized Linear Regression},
	volume = {38},
	year = {1996},
}

@article{PaulPeng2011,
    author = "Paul, Debashis and Peng, Jie",
    doi = "10.1214/11-EJS662",
    journal = ejs,
    pages = "1960--2003",
    title = "Principal components analysis for sparsely observed correlated functional data 
        using a kernel smoothing approach",
    volume = "5",
    year = "2011",
}

@article{PengPaul2009,
    author = {Jie Peng and Debashis Paul},
    title = {A Geometric Approach to Maximum Likelihood Estimation of 
        the Functional Principal Components From Sparse Longitudinal Data},
    journal = jcgs,
    volume = {18},
    pages = {995--1015},
    year  = {2009},
    doi = {10.1198/jcgs.2009.08011},
}

@article{PredaSaporta2005,
    author = {C. Preda and G. Saporta},
    title = {{PLS} regression on a stochastic process},
    journal = csda,
    Volume = {48},
    year = {2005},
    pages = {149--158},
    doi = {10.1016/j.csda.2003.10.003},
}

@article{ReissOgden2007,
    author = {Philip T Reiss and R. Todd Ogden},
    title = {Functional Principal Component Regression and Functional Partial Least Squares},
    journal = jasa,
    volume = {102},
    pages = {984--996},
    year  = {2007},
    doi = {10.1198/016214507000000527},
}

@article{RubinPanaretos2020,
    author = "Rub\'{i}n, Tom\'{a}\v{s} and Panaretos, Victor M.",
    doi = "10.1214/20-EJS1690",
    journal = ejs,
    pages = "1137--1210",
    title = "Sparsely observed functional time series: estimation and prediction",
    volume = "14",
    year = "2020",
}

@article{Tasaki2009,
    title = "Convergence rates of approximate sums of Riemann integrals",
    journal = jat,
    volume = "161",
    pages = "477--490",
    year = "2009",
    doi = "10.1016/j.jat.2008.10.005",
    author = "Hiroyuki Tasaki",
}

@Book{TherneauGrambsch2000,
    title = {Modeling Survival Data: Extending the {C}ox Model},
    author = {Terry M. Therneau and Patricia M. Grambsch},
    year = {2000},
    publisher = {Springer},
    address = {New York},
    doi = {10.1007/978-1-4757-3294-8},
}

@article{Tombaugh2006,
    title = "A comprehensive review of the {Paced Auditory Serial Addition Test (PASAT)}",
    journal = "Arch. Clin. Neuropsych.",
    volume = "21",
    pages = "53--76",
    year = "2006",
    doi = "10.1016/j.jat.2008.10.005",
    author = "Tom N Tombaugh",
}

@article{WangIbrahimZhu2020,
	author = {Wang, Yue and Ibrahim, Joseph G. and Zhu, Hongtu},
	title = {Partial least squares for functional joint models with applications to the Alzheimer's disease neuroimaging initiative study},
	journal = {Biometrics},
	volume = {},
	number = {},
	pages = {},
	year = {2020},
	doi = {10.1111/biom.13219},
	note = {in press},
}

@incollection{Wold1975,
    author = {Herman Wold},
    title = {Path models with latent variables: the {NIPALS} approach},
    editor = {H.M. Blalock and A. Aganbegian and F. M. Borodkin and Raymond Boudon and Vittorio Capecchi},
    booktitle = {Quantitative Sociology: 
        International Perspectives on Mathematical and Statistical Model Building},
    year = {1975},
    publisher = {Academic Press},
    address = {New York},
    pages = {307--335},
}

@article{YaoMullerWang2005a,
    author = "Yao, Fang and M{\"{u}}ller, Hans-Georg and Wang, Jane-Ling",
    doi = "10.1198/01621450400001745",
    journal = jasa,
    pages = "577--590",
    title = "Functional data analysis for sparse longitudinal data",
    volume = "100",
    year = "2005",
}

@article{YaoMullerWang2005b,
    author = "Yao, Fang and M{\"{u}}ller, Hans-Georg and Wang, Jane-Ling",
    doi = "10.1214/009053605000000660",
    journal = aos,
    pages = "2873--2903",
    title = "Functional linear regression analysis for longitudinal data",
    volume = "33",
    year = "2005",
}

@article{XiaoLiCheckleyCrainiceanu2018,
    author = "Luo Xiao and Cai Li and William Checkley and Ciprian Crainiceanu",
    doi = "10.1007/s11222-017-9744-8",
    journal = "Stat. Comput.",
    pages = "511--522",
    title = "Fast covariance estimation for sparse functional data",
    volume = "28",
    year = "2018",
}

@article{Zhou2019,
    title = "Functional continuum regression",
    journal = jmva,
    volume = "173",
    pages = "328--346",
    year = "2019",
    doi = "10.1016/j.jmva.2019.03.006",
    author = "Zhiyang Zhou",
}

@unpublished{Zhou2020,
	author = {Zhiyang Zhou},
	title = {Partial least squares for function-on-function regression via {K}rylov subspaces},
	year = {2020},
	note = "arXiv:2005.04798",
}

@article{ZhouLinLiang2018,
    author = {Ling Zhou and Huazhen Lin and Hua Liang},
    title = {Efficient Estimation of the Nonparametric Mean and Covariance Functions for Longitudinal and     Sparse Functional Data},
    journal = jasa,
    volume = {113},
    pages = {1550--1564},
    year  = {2018},
    doi = {10.1080/01621459.2017.1356317},
}

@Manual{R,
    title = {R: A Language and Environment for Statistical Computing},
    author = {{R Core Team}},
    organization = {R Foundation for Statistical Computing},
    address = {Vienna, Austria},
    year = {2020},
    url = {https://www.R-project.org/},
	note = {R version 3.4.2 ``Short Summer''},
}

@Manual{R-face,
    title = {face: Fast Covariance Estimation for Sparse Functional Data},
    author = {Luo Xiao and Cai Li and William Checkley and Ciprian Crainiceanu},
    year = {2019},
    note = {R package version 0.1-5},
    url = {https://CRAN.R-project.org/package=face},
}

@Manual{R-fdapace,
    title = {fdapace: Functional Data Analysis and Empirical Dynamics},
    author = {Cody Carroll and Alvaro Gajardo and Yaqing Chen and Xiongtao Dai and Jianing Fan and Pantelis Z. 
        Hadjipantelis and Kyunghee Han and Hao Ji and Hans-Georg Mueller and Jane-Ling Wang},
    year = {2020},
    note = {R package version 0.5.4},
    url = {https://CRAN.R-project.org/package=fdapace},
}

@Manual{R-orthopolynom,
    title = {orthopolynom: Collection of functions for orthogonal and orthonormal polynomials},
    author = {Frederick Novomestky},
    year = {2013},
    note = {R package version 1.0-5},
    url = {https://CRAN.R-project.org/package=orthopolynom},
}

@Manual{R-refund,
    title = {refund: Regression with Functional Data},
    author = {Jeff Goldsmith and Fabian Scheipl and Lei Huang and Julia Wrobel and Chongzhi Di and 
        Jonathan Gellar and Jaroslaw Harezlak and Mathew W. McLean and Bruce Swihart and Luo Xiao and 
        Ciprian Crainiceanu and Philip T. Reiss},
    year = {2019},
    note = {R package version 0.1-21},
    url = {https://CRAN.R-project.org/package=refund},
}

@Manual{R-survival,
    title = {A Package for Survival Analysis in R},
    author = {Terry M Therneau},
    year = {2020},
    note = {R package version 3.2-3},
    url = {https://CRAN.R-project.org/package=survival},
}
\end{document}